\documentclass[preprint,12pt]{elsarticle}

\usepackage{amsmath,amssymb,amsfonts,amsthm,amscd}
\usepackage{mathtools}
\usepackage{mathrsfs}

\usepackage{graphicx}
\usepackage{xcolor}
\usepackage{booktabs}
\usepackage{enumitem}

\usepackage[hidelinks]{hyperref}

\biboptions{sort&compress}

\theoremstyle{definition}
\newtheorem{Def}{Definition}[section]

\newtheorem{Rem}[Def]{Remark}

\theoremstyle{plain}
\newtheorem{Prop}[Def]{Proposition}
\newtheorem{Theo}[Def]{Theorem}
\newtheorem{Lem}[Def]{Lemma}
\newtheorem{Coro}[Def]{Corollary}
\newtheorem{Conj}[Def]{Conjecture}

\DeclareMathOperator{\Hom}{Hom}

\DeclareMathOperator{\Ker}{Ker}

\DeclareMathOperator{\Img}{Im}

\journal{Finite Fields and Their Applications}

\begin{document}

\begin{frontmatter}

\title{Covering Radius and Deep Holes of Hyperderivative
Reed--Solomon Codes under the NRT Metric}

\author[aff1]{Zhihao Zhu}
\ead{2240501027@cnu.edu.cn}

\author[aff1]{Haojie Gu}
\ead{2200502051@cnu.edu.cn}

\author[aff1]{Jun Zhang\corref{cor1}}
\ead{junz@cnu.edu.cn}

\author[aff2]{Yu Li\corref{cor1}}
\ead{liyu_33@126.com}

\cortext[cor1]{Corresponding authors.}

\affiliation[aff1]{
    organization={School of Mathematical Sciences,
                  Capital Normal University},
    city={Beijing},
    postcode={100048},
    country={China}
}

\affiliation[aff2]{
    organization={The 30th Research Institute of China
                  Electronics Technology Group Corporation},
    city={Chengdu},
    postcode={610041},
    country={China}
}

\begin{abstract}
The deep hole problem is a fundamental problem in coding theory,
and it has many important applications in code constructions
and cryptography. The deep hole problem of Reed-Solomon codes
has gained a lot of attention. As a generalization of
Reed-Solomon codes under Hamming metric, we study the covering
radius and deep holes of Hyperderivative Reed-Solomon (HRS)
codes under the Niederreiter-Rosenbloom-Tsfasman (NRT) metric.
We determine the NRT covering radius of
$\operatorname{HRS}_{k}(\alpha,V)$ as $sr-k$ and characterize
its NRT deep holes via Maximum Distance Separable (MDS)
extensions and polynomial remainder conditions.
We further relate the enumeration of NRT deep holes to the
characteristic polynomial of an associated hyperplane arrangement.
We also provide three classes of explicit NRT deep-hole
constructions, propose a Cheng-Murray type conjecture for HRS
codes under the NRT metric and obtain a conditional low-layer
description for $2\leq k\leq s-1$.
Using these characterizations, we derive exact enumeration
formulas in several special cases, including $r=1,2$ or
$k=1,2,sr-1,sr-2,sr-3.$
\end{abstract}

\begin{keyword}
Hyperderivative Reed--Solomon codes
\sep Niederreiter--Rosenbloom--Tsfasman metric
\sep Covering radius
\sep Deep holes
\sep MDS extensions
\sep Hyperplane arrangements

\MSC[2020] Primary 94B75, 94B05; Secondary 11T71, 94B27
\end{keyword}

\end{frontmatter}

\section{Introduction}

One of the fundamental questions in coding theory is to calculate the covering radius of a code, which is closely related to decoding, code optimality, and the geometry of metric balls~\cite{CHL+97,CKM+85}. Determining the covering radius is computationally difficult in general.
In particular, given a generator matrix of a binary linear code and
a nonnegative integer $t$, deciding whether the Hamming covering radius
of the code is at most $t$ is $\Pi_{2}^{\mathrm{P}}$-complete~\cite{McL84}. Nevertheless, substantial progress has been made for several important algebraic and combinatorial code families, including Bose-Chaudhuri-Hocquenghem (BCH) codes~\cite{AM76,Coh97,GPZ60,Hel78,Tie87,Tie90}, cyclic codes~\cite{DJ91,DS85,Hel85,MC03}, Reed-Solomon codes~\cite{Dur94}, Reed-Muller codes~\cite{Hou93,Hou96,HKM78,Wan19}, algebraic geometry codes~\cite{Jan90,ZZ25}, Maximum Distance Separable (MDS) codes~\cite{BGP15}, and related variants of these code families~\cite{FXZ25,LZS25,WDC25,ZW16}. In the poset-metric setting~\cite{BGL95}, particularly under the Niederreiter--Rosenbloom--Tsfasman (NRT) metric~\cite{RT97}, the geometry of balls differs substantially from that of the Hamming metric~\cite{CC15}.  This has motivated the study of covering-radius problems under the NRT metric.

Once the covering radius has been determined, it is natural to ask which received words attain this maximum distance. These words, called deep holes, are the  words farthest from the code and hence represent extremal instances of nearest-neighbor decoding. The deep-hole problem has been extensively studied for Reed-Solomon~\cite{CM07,ZW17} and generalized Reed-Solomon codes~\cite{HW16,ZCL16,ZFL13}, where it is closely related to polynomial interpolation~\cite{WH12}, finite-field subset-sum and character-sum methods~\cite{KW16}, and MDS extensions~\cite{Kai17}. For projective Reed-Solomon codes, explicit constructions and classifications are known in small redundancy~\cite{ZW16,ZWK20}. The problem has also been considered for Reed-Solomon-type variants such as generalized projective~\cite{XX19} and twisted Reed-Solomon codes~\cite{FXZ25}, where new classes of deep holes and syndrome-type criteria have been obtained~\cite{GWZ25}. Beyond the Reed--Solomon family, deep-hole cosets of first-order binary
Reed--Muller codes and their associated combinatorial configurations
have also been investigated~\cite{Oze12}.
For even $m\geq 2$, the deep holes of $\operatorname{RM}(1,m)$ are
precisely the truth tables of bent Boolean functions on $m$
variables~\cite{Rot76}. Deep holes have also been studied for algebraic-geometry codes such as elliptic curve codes~\cite{ZW23}.

Hyperderivative Reed--Solomon (HRS) codes are obtained by evaluating a polynomial together with its Hasse derivatives at a prescribed set of pairwise distinct points. Each evaluation point gives an ordered block of derivative values, so HRS codes may be viewed as derivative-evaluation analogues of generalized Reed-Solomon codes. This construction goes back to the Reed-Solomon analogues in the $m$-metric introduced by Rosenbloom and Tsfasman~\cite{RT97}. The ordered derivative structure fits naturally with the NRT metric, since higher-order vanishing at an evaluation point is directly reflected in the ordered weight of the corresponding block. In this setting, HRS codes are known to be MDS codes with respect to the NRT metric~\cite{Skr01}. Recently, several further structural and algorithmic aspects of HRS codes have been investigated: Gu and Zhang developed a Welch-Berlekamp type algorithm for unique decoding of HRS codes under the NRT metric~\cite{GZ26}, while Gu, Zhu, and Zhang studied the dimensions of Schur squares of HRS codes~\cite{GZZ26}. Li, Zhu, Fang, Zhang and Hu derive an explicit component-wise representation for the Euclidean dual of an HRS code~\cite{LZF+26}. Nevertheless, to the best of our knowledge, the covering radius and deep holes of HRS codes under the NRT metric have not yet been systematically investigated, which motivates the present study.

In this paper, we investigate the NRT covering radius and deep holes of Hyperderivative Reed--Solomon codes.  First, we establish a general result on $B$-row lengthenings of codes under the NRT metric, describing both the covering radius and the set of deep holes of the lengthened code in terms of those of the original code (see Theorem~\ref{deep hole of extend code}). We then prove that the NRT covering radius of \(\operatorname{HRS}_{k}(\alpha,V)\) is \(sr-k\) (see Theorem~\ref{NRT covering radius}). Next, we characterize NRT deep holes by an MDS-extension criterion(see Theorem~\ref{MDS Deep holes}) and  translate the deep-hole condition into an explicit polynomial remainder criterion (see Theorem~\ref{Deep holes}). We further relate the enumeration of NRT deep holes to the characteristic polynomial of an associated hyperplane arrangement (see Corollary~\ref{Count Deep holes}). We also construct several explicit families of NRT deep holes, including all degree-$k$ polynomial representatives, congruence-type constructions, and derivative-based constructions (see Theorem~\ref{When Ev(f) deep hole}). Motivated by the Cheng--Murray conjecture and supported by computations for small parameters, we formulate an HRS analogue of this conjecture under the NRT metric. Under the conjecture for the $k$-layer truncation of an HRS code, we obtain a complete description of its NRT deep holes when the evaluation set is \(\mathbb F_q\) and \(2\leq k\leq s-1\) (see Proposition~\ref{Conjecture2}). Finally, we derive explicit enumerations in several special cases, including $r=1,2$ or $k=1,2,sr-1,sr-2,sr-3$ (see Section~\ref{Enumeration of NRT deep holes in special cases}).

The remainder of this paper is organized as follows. In Section~\ref{Preliminaries}, we recall some definitions and elementary facts  on the NRT metric, Hasse derivatives, HRS codes, Hermite interpolation, and the MDS property of HRS codes under the NRT metric. In Section~\ref{NRT covering radius and deep holes}, we determine the NRT covering radius of $\operatorname{HRS}_{k}(\alpha,V)$, establish the MDS-extension and polynomial-remainder characterizations of its NRT deep holes, and reformulate their enumeration in terms of a hyperplane arrangement. In Section~\ref{Some explicit constructions of NRT deep holes}, we present three explicit families of NRT deep holes. We also formulate an HRS analogue of the Cheng--Murray conjecture. In Section~\ref{Enumeration of NRT deep holes in special cases}, we derive exact formulas for the number of NRT deep holes for some special parameter ranges.

\section{Preliminaries}\label{Preliminaries}

In this section, we give some definitions and basic results required in the paper.

\medskip

Let $\mathbb{N}=\{0,1,2,\dots\}.$ For $k\in\mathbb{N}$, we denote by $\mathbb{F}_{q}[x]_{\leq k}$ the set of polynomials over $\mathbb{F}_{q}$ of degree at most $k$ and by $\mathbb{F}_{q}[x]_{k}$ the set of polynomials over $\mathbb{F}_{q}$ of degree exactly $k$. For integers $a\leq b$, we write $[a,b]=\{a,a+1,\dots,b\}$ and set $[n]:=[1,n]$. Given an $r$-tuple $\alpha=(\alpha_{j})_{j\in[r]}$ of pairwise distinct elements of $\mathbb{F}_q$, we define $\pi_{\alpha}(x)=\prod_{j=1}^{r}(x-\alpha_{j})$ and $\eta_{j}=\pi'_{\alpha}(\alpha_{j})=\prod_{\ell\neq j}(\alpha_{j}-\alpha_{\ell})$ for $j\in[r]$.

\medskip

We first recall the definition of the NRT weight~\cite[p. 55]{RT97}, and then introduce the NRT covering radius and NRT deep holes.

Let $\mathbb{F}_{q}$ be a finite field of size $q$ and characteristic $p$. Given $\mathbf{x}=(x_{1},\dots,x_{s})\in\mathbb{F}_{q}^{s},$ the NRT-weight of $\mathbf{x}$ is defined to be  $$wt_{\operatorname{RT}}(\mathbf{x})=\begin{cases}
    s-\iota(\mathbf{x})+1 & \text{if}\quad \mathbf{x} \neq \mathbf{0}; \\
    0  & \text{if}  \quad \mathbf{x}=\mathbf{0}.
\end{cases}$$
where $\iota(\mathbf{x})\coloneqq \min\{i\in[s]\mid x_{i}\neq 0\}.$ The NRT-weight of $A:= (a_{ij})_{\substack{i=1,\dots,s \\ j=1,\dots,r}} \in \mathbb{F}_{q}^{s\times r}$ is defined to be $wt_{\operatorname{RT}}(A)=\sum_{j=1}^{r}wt_{\operatorname{RT}}(A_{j}),$
where $A_{j}\coloneqq (a_{1j},\dots,a_{sj})$.  The NRT-distance between two matrices $A,B\in \mathbb{F}_{q}^{s\times r}$ is defined by $d_{\operatorname{RT}}(A,B)=wt_{\operatorname{RT}}(A-B).$ Let $\mathscr{C}\subseteq\mathbb{F}^{s\times r}$ be a code. The minimum NRT distance of $\mathscr{C}$ is $d_{\operatorname{RT}}(\mathscr{C})=\min_{\substack{
A,B \in \mathscr{C} \\ A\neq B}}{d_{\operatorname{RT}}(A,B)}.$ The error NRT distance of any word $U\in \mathbb{F}_{q}^{s\times r}$ to $\mathscr{C}$ is defined to be $d_{\operatorname{RT}}(U,\mathscr{C})=\min\{d_{\operatorname{RT}}(U,C)\mid C\in\mathscr{C}\}.$ The maximum NRT error distance $\rho_{\operatorname{RT}}(\mathscr{C})=\max\{d_{\operatorname{RT}}(U,\mathscr{C})\mid U\in \mathbb{F}_{q}^{s\times r}\}$
is called the NRT covering radius of $\mathscr{C}$, and the words achieving maximum NRT error distance are called NRT deep holes of $\mathscr{C}$. We denote the set of all NRT deep holes of $\mathscr{C}$ by $\operatorname{DH}(\mathscr{C}).$

\medskip

Next, we introduce the Hyperderivative Reed-Solomon codes and express the NRT weight of $\operatorname{Ev}_{\alpha,V}(f)$ in terms of the orders of vanishing of $f$ at the evaluation points.

\begin{Def}
Let
$
f(x)=\displaystyle\sum_{i=0}^{n} a_i x^i \in \mathbb{F}_q[x]
$
and $j\in\mathbb{N}$. The $j$-th hyperderivative, or Hasse
derivative, of $f(x)$ is defined by
\[
\partial^{(j)}f(x)
=
\sum_{i=0}^{n}\binom{i}{j}a_i x^{i-j},
\]
where
$$\binom{i}{j}:=\begin{cases}\displaystyle \frac{i!}{(i-j)!j!}, &\text{if}\quad 0\leq j\leq i;\\
0, &\text{otherwise}.
\end{cases}
$$
Here the coefficient $\binom{i}{j}$ is understood as the reduction modulo
$p$ of the corresponding integer binomial coefficient.
\end{Def}
For any $u\in \mathbb{F}_q$, one has the Hasse-Taylor expansion of $f(x)$ at $u$: $f(x)=\sum_{j\ge 0}\partial^{(j)}f(u)(x-u)^j.$

\begin{Def}
    Let $f(x)\in \mathbb{F}_{q}[x]$ and $u\in \mathbb{F}_{q}$. The order of vanishing of $f(x)$ at $u$ is defined as $$\nu_{f}(u)=\max\{i\in\mathbb{N}: (x-u)^{i}\mid f(x) \}.$$
    We use the convention that \(\nu_0(u)=\infty\).
\end{Def}

\begin{Def} Let $\alpha=(\alpha_{1},\dots,\alpha_{r})$ be an $r$-tuple of pairwise distinct elements of $\mathbb{F}_q.$ Let $V= (v_{ij})_{\substack{i=1,\dots,s \\ j=1,\dots,r}} \in (\mathbb{F}_{q}^{\star})^{s\times r}$. Let $1\leq k \leq sr$.
The Hyperderivative Reed-Solomon code $\operatorname{HRS}_{k}(\alpha,V)$ is defined to be
\[
\operatorname{HRS}_k(\alpha,V)
=
\left\{
 \operatorname{Ev}_{\alpha,V}(f)
:
f\in\mathbb{F}_{q}[x]_{\leq k-1}
\right\},
\]
where the evaluation map
\begin{align*}
    \operatorname{Ev}_{\alpha,V}:\mathbb{F}_{q}[x]
&\longrightarrow
\mathbb{F}_{q}^{s\times r} \\
f(x)&\longmapsto
\begin{bmatrix}
v_{1,1}\partial^{(0)}f(\alpha_1) &
v_{1,2}\partial^{(0)}f(\alpha_2) &
\cdots &
v_{1,r}\partial^{(0)}f(\alpha_r)
\\
v_{2,1}\partial^{(1)}f(\alpha_1) &
v_{2,2}\partial^{(1)}f(\alpha_2) &
\cdots &
v_{2,r}\partial^{(1)}f(\alpha_r)
\\
\vdots & \vdots &  & \vdots
\\
v_{s,1}\partial^{(s-1)}f(\alpha_1) &
v_{s,2}\partial^{(s-1)}f(\alpha_2) &
\cdots &
v_{s,r}\partial^{(s-1)}f(\alpha_r)
\end{bmatrix}.
\end{align*}
\end{Def}

\medskip

The following results follow directly from the definition of Hasse derivatives:
\begin{enumerate}
    \item For all $m\in\mathbb{N}$, we have $$\nu_{f}(u)=m\Longleftrightarrow
    \begin{cases}
        \partial^{(j)}f(u)=0, \quad \forall\ 0\leq j\leq m-1;\\
        \partial^{(m)}f(u)\neq 0.
    \end{cases}$$
    \item For any integer $k>\deg(f(x))$, we have $\partial^{(k)}f(x)=0.$
\end{enumerate}
Consequently, for any $f(x)\in\mathbb{F}_{q}[x]$,  we have
$$wt_{\operatorname{RT}}(v_{1j}\partial^{(0)}f(\alpha_{j}),v_{2j}\partial^{(1)}f(\alpha_{j}),\dots,v_{sj}\partial^{(s-1)}f(\alpha_{j}))=s-\min\{\nu_{f}(\alpha_{j}),s\}$$
and $$wt_{\operatorname{RT}}(\operatorname{Ev}_{\alpha,V}(f))=sr-\sum_{j=1}^{r}\min\{\nu_{f}(\alpha_{j}),s\}.$$

By Hermite interpolation, we can deduce the following lemma which is useful when dealing with NRT deep holes.
\begin{Lem}\cite{Her78}\label{iso}
     Let $\alpha=(\alpha_{1},\dots,\alpha_{r})$ be an $r$-tuple of pairwise distinct elements of $\mathbb{F}_q.$ Let $V= (v_{ij})_{\substack{i=1,\dots,s \\ j=1,\dots,r}} \in (\mathbb{F}_{q}^{\star})^{s\times r}$. Then the restricted evaluation map \begin{align*}
    \operatorname{Ev}_{\alpha,V}|_{\leq sr-1}:\mathbb{F}_{q}[x]_{\leq sr-1}
&\longrightarrow
\mathbb{F}_{q}^{s\times r} \\
f(x)&\longmapsto
\begin{bmatrix}
v_{1,1}\partial^{(0)}f(\alpha_1) &
v_{1,2}\partial^{(0)}f(\alpha_2) &
\cdots &
v_{1,r}\partial^{(0)}f(\alpha_r)
\\
v_{2,1}\partial^{(1)}f(\alpha_1) &
v_{2,2}\partial^{(1)}f(\alpha_2) &
\cdots &
v_{2,r}\partial^{(1)}f(\alpha_r)
\\
\vdots & \vdots &  & \vdots
\\
v_{s,1}\partial^{(s-1)}f(\alpha_1) &
v_{s,2}\partial^{(s-1)}f(\alpha_2) &
\cdots &
v_{s,r}\partial^{(s-1)}f(\alpha_r)
\end{bmatrix}
\end{align*}
is an $\mathbb{F}_{q}$-linear isomorphism.
\end{Lem}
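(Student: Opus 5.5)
The plan is to use a dimension count together with an injectivity argument. Both $\mathbb{F}_{q}[x]_{\leq sr-1}$ and $\mathbb{F}_{q}^{s\times r}$ are $\mathbb{F}_q$-vector spaces of dimension $sr$. So it suffices to show that the restricted map $\operatorname{Ev}_{\alpha,V}|_{\leq sr-1}$ is $\mathbb{F}_q$-linear and injective.

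\textbf{Linearity.} Each entry $v_{ij}\partial^{(i-1)}f(\alpha_j)$ is a linear function of $f$. This is because the Hasse derivative $\partial^{(i-1)}$ is $\mathbb{F}_q$-linear on coefficients by its definition, evaluation at $\alpha_j$ is linear, and multiplication by the fixed scalar $v_{ij}$ is linear.

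\textbf{Injectivity.} Suppose $f\in\mathbb{F}_{q}[x]_{\leq sr-1}$ and $\operatorname{Ev}_{\alpha,V}(f)=0$.
\begin{enumerate}
\item Since every $v_{ij}\in\mathbb{F}_q^{\star}$ is nonzero, we get $\partial^{(i)}f(\alpha_j)=0$ for all $0\leq i\leq s-1$ and $j\in[r]$.
\item By the Hasse--Taylor expansion $f(x)=\sum_{i\geq 0}\partial^{(i)}f(\alpha_j)(x-\alpha_j)^i$, it follows that $(x-\alpha_j)^s\mid f(x)$ for each $j$. Equivalently, by the first listed fact, $\nu_f(\alpha_j)\geq s$ whenever $f\neq 0$.
\item The polynomials $(x-\alpha_j)^s$ are pairwise coprime because the $\alpha_j$ are pairwise distinct. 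Hence their product $\pi_\alpha(x)^s$, which has degree $sr$, divides $f$.
\item Since $\deg f\leq sr-1<sr$, this forces $f=0$.
\end{enumerate}

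Thus the kernel is trivial, and by rank--nullity the map is bijective, hence an isomorphism.

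The only step needing any care is passing from the vanishing of the first $s$ Hasse derivatives to divisibility by $(x-\alpha_j)^s$. Here one should use the Hasse--Taylor expansion rather than ordinary derivatives, which matters in characteristic $p$ when $s>p$. The rest of the argument is routine.
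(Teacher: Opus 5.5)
Your proof is correct. Linearity is immediate, and the Hasse--Taylor expansion turns $\partial^{(0)}f(\alpha_j)=\cdots=\partial^{(s-1)}f(\alpha_j)=0$ into $(x-\alpha_j)^s\mid f$. Coprimality then gives $\pi_\alpha(x)^s\mid f$, the degree bound forces $f=0$, and the equality of dimensions ($sr$ on both sides) upgrades injectivity to bijectivity.

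The paper gives no argument beyond invoking Hermite interpolation. The closest thing to a proof is Proposition~\ref{adic expansion}, which writes the inverse explicitly as $f_U=\sum_{i=0}^{s-1}f_i\pi_\alpha^i$. It determines each $f_i\in\mathbb{F}_q[x]_{<r}$ layer by layer by Lagrange interpolation, using $\partial^{(i)}(f_i\pi_\alpha^i)(\alpha_j)=\eta_j^i f_i(\alpha_j)$ and $\partial^{(i)}(f_m\pi_\alpha^m)(\alpha_j)=0$ for $m>i$. Once checked, that formula is a constructive proof of existence (surjectivity), and uniqueness then follows by the same dimension count. So the two routes supply complementary halves: you prove uniqueness and let the dimension count give existence, while the paper's formula gives existence directly.

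Your version is shorter and self-contained. It also verifies the statement directly for Hasse derivatives in characteristic $p$, which is worth doing. The classical Hermite theorem is phrased with ordinary derivatives, and with those the analogous map is not even surjective once $s>p$, since the $p$-th ordinary derivative of every polynomial over $\mathbb{F}_q$ vanishes. The paper's constructive route instead gives an explicit handle on $f_U$ through its $\pi_\alpha$-adic layers. That is what the paper uses later, e.g.\ to obtain $F=f+\pi_\alpha^k h$ in the proof of Proposition~\ref{Conjecture2}.
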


For every $U\in\mathbb F_q^{s\times r}$, we denote by
$f_U(x)=(\operatorname{Ev}_{\alpha,V}|_{\leq sr-1})^{-1}(U)\in\mathbb{F}_{q}[x]_{\leq sr-1}$ the unique Hermite interpolation polynomial representing $U.$  By the Hermite interpolation formula, we obtain the following explicit expression for $f_{U}$:
\begin{Prop}\label{adic expansion}
    Let $U=(u_{i,j})_{\substack{i=1,\dots,s \\ j=1,\dots,r}}\in\mathbb F_q^{s\times r}$. Define  the Lagrange basis polynomial $L_{j}(x)=\frac{\pi_{\alpha}(x)}{(x-\alpha_{j})\eta_{j}}$ for $j\in[r]$. Then $f_{U}(x)$ admits a unique $\pi_{\alpha}$-adic expansion:
$$f_{U}(x)=\sum_{i=0}^{s-1}f_{i}(x)\pi_{\alpha}(x)^{i}$$
where $\displaystyle f_{i}(x)\in\mathbb{F}_{q}[x]_{<r}$ is the Lagrange interpolation polynomial satisfying $$f_{i}(\alpha_{j})=\eta_{j}^{-i}\left(\frac{u_{i+1,j}}{v_{i+1,j}}-\sum_{m=0}^{i-1}\partial^{(i)}(f_{m}\pi_{\alpha}^{m})(\alpha_{j})\right)\quad\text{for all } j\in[r]$$
that is, $$\displaystyle f_{i}(x)=\sum_{j=1}^{r}\eta_{j}^{-i}\left(\frac{u_{i+1,j}}{v_{i+1,j}}-\sum_{m=0}^{i-1}\partial^{(i)}(f_{m}\pi_{\alpha}^{m})(\alpha_{j})\right)L_{j}(x).$$
\end{Prop}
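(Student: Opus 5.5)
The plan is to combine existence and uniqueness of the $\pi_\alpha$-adic expansion with a triangular recursion that reads off the $f_i$ one row of $U$ at a time.

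\emph{Existence and uniqueness.} Every polynomial of degree at most $sr-1$ can be written uniquely as $\sum_{i=0}^{s-1}f_i\pi_\alpha^i$ with $\deg f_i<r$. This follows by repeated division with remainder by $\pi_\alpha$, which has degree $r$, or equivalently from a dimension count: the map $(f_0,\dots,f_{s-1})\mapsto\sum_i f_i\pi_\alpha^i$ sends $(\mathbb F_q[x]_{<r})^s$ injectively into $\mathbb F_q[x]_{\le sr-1}$, and both spaces have dimension $sr$. Applying this to $f_U$, whose existence and uniqueness is Lemma~\ref{iso}, gives the expansion. It remains to identify the $f_i$ through their values at the $\alpha_j$, since a polynomial of degree $<r$ is determined by its values at $r$ points via Lagrange interpolation with the basis $L_j$, where $L_j(\alpha_\ell)=\delta_{j\ell}$.

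\emph{Key lemma.} For $g$ any polynomial and $m\ge 0$,
\[
\partial^{(i)}(g\pi_\alpha^m)(\alpha_j)=
\begin{cases}
0, & i<m,\\
g(\alpha_j)\eta_j^{m}, & i=m.
\end{cases}
\]
To prove it, write $\pi_\alpha(x)=(x-\alpha_j)h_j(x)$ with $h_j(\alpha_j)=\eta_j$. Then
\[
g\pi_\alpha^m=(x-\alpha_j)^m g h_j^m .
\]
By the Hasse--Taylor expansion at $\alpha_j$, the coefficient of $(x-\alpha_j)^i$ is $0$ for $i<m$ and equals $(gh_j^m)(\alpha_j)=g(\alpha_j)\eta_j^m$ for $i=m$. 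This avoids any characteristic issues, since only the Taylor coefficients of the Hasse derivatives are used.

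\emph{Recursion.} The defining condition of $f_U$ is $v_{i+1,j}\,\partial^{(i)}f_U(\alpha_j)=u_{i+1,j}$ for $0\le i\le s-1$ and all $j$. Expanding by linearity,
\[
\partial^{(i)}f_U(\alpha_j)=\sum_{m=0}^{s-1}\partial^{(i)}(f_m\pi_\alpha^m)(\alpha_j).
\]
By the lemma, the terms with $m>i$ vanish and the term with $m=i$ equals $f_i(\alpha_j)\eta_j^i$. Hence
\[
\frac{u_{i+1,j}}{v_{i+1,j}}=\sum_{m=0}^{i-1}\partial^{(i)}(f_m\pi_\alpha^m)(\alpha_j)+\eta_j^i f_i(\alpha_j).
\]
Since $\eta_j\neq0$ because the $\alpha_j$ are distinct, this solves for $f_i(\alpha_j)$ and gives exactly the stated formula. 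The system is triangular, so by induction on $i$ each $f_i$ is determined by $f_0,\dots,f_{i-1}$ together with row $i+1$ of $U$. Lagrange interpolation, using $\deg f_i<r$, then yields the explicit sum in terms of the $L_j$.

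\emph{Main obstacle.} The only delicate step is the vanishing and leading-term lemma for Hasse derivatives of $g\pi_\alpha^m$. Handling it through the Taylor expansion at $\alpha_j$ rather than through a Leibniz rule keeps the argument clean in positive characteristic.
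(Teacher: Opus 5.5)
Your proof is correct. The paper gives no separate argument and simply attributes the expansion to the Hermite interpolation formula; your three steps are exactly the verification that appeal leaves implicit: the unique $\pi_\alpha$-adic division, the lemma that $\partial^{(i)}(g\pi_\alpha^m)(\alpha_j)$ equals $0$ for $i<m$ and $g(\alpha_j)\eta_j^m$ for $i=m$, and the resulting triangular recursion. The paper itself uses the same Taylor-expansion device later, in Theorem~\ref{When Ev(f) deep hole}\textup{(3)}, where it writes $\prod_{\ell\neq j_0}(x-\alpha_\ell)=f_0(x)(x-\alpha_{j_0})+\eta_{j_0}$ to obtain $\pi_\alpha^k\equiv\eta_{j_0}^k(x-\alpha_{j_0})^k \bmod (x-\alpha_{j_0})^{k+1}$.
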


The following results show that HRS codes are MDS codes with respect to the NRT metric, which is one of the main reasons for considering this family of codes.

\begin{Prop}{\rm \cite[Theorem 1]{RT97}}
    Let $\mathscr{C}\subseteq \mathbb{F}_{q}^{s\times r}$ be a code with NRT distance $d_{\operatorname{RT}}(\mathscr{C})$. Then $$\log_{q}(\left|\mathscr{C}\right|)\leq sr-d_{\operatorname{RT}}(\mathscr{C})+1.$$
\end{Prop}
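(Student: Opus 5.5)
We prove the Singleton-type bound $\log_q|\mathscr{C}|\leq sr-d_{\operatorname{RT}}(\mathscr{C})+1$ by a puncturing (projection) argument adapted to the ordered structure of the NRT metric. Set $d=d_{\operatorname{RT}}(\mathscr{C})$. Index the $sr$ coordinates of a matrix $A\in\mathbb{F}_q^{s\times r}$ by the pairs $(i,j)$. We will choose a set $T$ of $d-1$ coordinates that forms an \emph{ideal} of the NRT poset, that is, a union of bottom segments of columns. Here "bottom" refers to the rows with large index $i$; these are the entries whose nonvanishing contributes little to $\iota$. We then show that the projection $\phi$ deleting the coordinates in $T$ is injective on $\mathscr{C}$.

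To construct $T$, fill the columns greedily. Take the last $\min(s,d-1)$ rows of column $1$, then the last rows of column $2$, and continue in this way until exactly $d-1$ positions have been chosen. This is possible because $d-1\leq sr$. Indeed, $d\leq sr$ holds whenever $|\mathscr{C}|\geq 2$, since every weight is at most $sr$. The case $|\mathscr{C}|\le 1$ is trivial with the usual convention. Write $T=\bigcup_j T_j$, where $T_j$ consists of the last $t_j$ rows of column $j$, and note that $\sum_j t_j=d-1$.

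The key step is the following weight estimate. Suppose $X\in\mathbb{F}_q^{s\times r}$ is nonzero and supported only on $T$. If column $j$ of $X$ is nonzero, its first nonzero entry lies in a row $i\geq s-t_j+1$. Hence $wt_{\operatorname{RT}}(X_j)=s-\iota(X_j)+1\leq t_j$. Summing over the columns gives $wt_{\operatorname{RT}}(X)\leq\sum_j t_j=d-1<d$.

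Injectivity now follows. If $A\neq B$ in $\mathscr{C}$ had $\phi(A)=\phi(B)$, then $A-B$ would be a nonzero matrix supported on $T$. By the estimate, $d_{\operatorname{RT}}(A,B)\leq d-1$, which is a contradiction. Therefore $|\mathscr{C}|=|\phi(\mathscr{C})|\leq q^{sr-(d-1)}$, and taking $\log_q$ gives the bound.

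I expect the only delicate point to be the choice of which coordinates to delete. The deleted positions must be the low-weight (bottom) entries of each column. Deleting arbitrary positions fails, because a single nonzero entry high in a column already has weight close to $s$. No linearity of $\mathscr{C}$ is needed.
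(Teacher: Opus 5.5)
Your proof is correct: deleting $d-1$ positions that form bottom segments of the columns (an order ideal of the NRT poset) makes the projection injective on $\mathscr{C}$, because any nonzero difference supported there has NRT weight at most $\sum_j t_j=d-1$. The paper gives no proof of its own and only cites Rosenbloom and Tsfasman's Theorem~1; your projection-onto-the-complement-of-an-ideal argument is the standard proof of this Singleton-type bound, and as you note it does not need $\mathscr{C}$ to be linear.
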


\begin{Def}{\rm \cite[Definition 2.3]{HK08}}
    A code $\mathscr{C}\subseteq\mathbb{F}_{q}^{s\times r}$ with $\left|\mathscr{C}\right|=q^{k}$ and minimum NRT distance $d_{\operatorname{RT}}(\mathscr{C})$ satisfying $k=sr-d_{\operatorname{RT}}(\mathscr{C})+1$ is said to be a Maximum Distance Separable NRT-code (or simply MDS NRT code).
\end{Def}

\begin{Theo}{\rm \cite[Theorem 5.1]{Skr01}}
     Let $\alpha=(\alpha_{1},\dots,\alpha_{r})$ be an $r$-tuple of pairwise distinct elements of $\mathbb{F}_q.$ Let $V= (v_{ij})_{\substack{i=1,\dots,s \\ j=1,\dots,r}} \in (\mathbb{F}_{q}^{\star})^{s\times r}$. Let $1\leq k \leq sr$. Then $$d_{\operatorname{RT}}(\operatorname{HRS}_{k}(\alpha,V))=sr-k+1$$
     i.e., $\operatorname{HRS}_{k}(\alpha,V)$ is an MDS NRT code.
\end{Theo}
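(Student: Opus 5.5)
The approach is to compute $d_{\operatorname{RT}}(\operatorname{HRS}_{k}(\alpha,V))$ directly. Because the code is linear and the NRT distance is translation invariant, the minimum distance is the minimum NRT weight of a nonzero codeword. I will prove two bounds: every nonzero codeword has weight at least $sr-k+1$, and some codeword has weight exactly $sr-k+1$. The MDS statement then follows at once. By Lemma~\ref{iso} the map $\operatorname{Ev}_{\alpha,V}$ is injective on $\mathbb{F}_{q}[x]_{\leq k-1}\subseteq\mathbb{F}_q[x]_{\leq sr-1}$, so $|\operatorname{HRS}_k(\alpha,V)|=q^{k}$. Hence $k=sr-d_{\operatorname{RT}}+1$ is exactly what the definition of an MDS NRT code requires.

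\emph{Lower bound.} Let $0\neq f\in\mathbb{F}_q[x]_{\leq k-1}$. By the weight formula derived from the Hasse derivatives,
\[
wt_{\operatorname{RT}}(\operatorname{Ev}_{\alpha,V}(f))=sr-\sum_{j=1}^{r}\min\{\nu_f(\alpha_j),s\}.
\]
Set $m_j=\min\{\nu_f(\alpha_j),s\}$. Since the $\alpha_j$ are pairwise distinct, the factors $(x-\alpha_j)^{m_j}$ are pairwise coprime, so $\prod_j (x-\alpha_j)^{m_j}$ divides $f$. Comparing degrees gives $\sum_j m_j\leq \deg f\leq k-1$, and therefore the weight is at least $sr-k+1$.

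\emph{Attainment.} Write $k-1=as+b$ with $0\leq b<s$. Since $k\leq sr$, we have $a\leq r-1$, and $a=r-1$ forces $b\leq s-1$, so everything stays in range. Take
\[
f(x)=\prod_{j=1}^{a}(x-\alpha_j)^{s}\cdot (x-\alpha_{a+1})^{b},
\]
where the last factor is omitted when $b=0$. This polynomial has degree exactly $k-1$. It satisfies $m_j=s$ for $j\leq a$ and $m_{a+1}=b$. For every other $j$ we have $m_j=0$, because $f(\alpha_j)\neq 0$ by distinctness. Thus $\sum_j m_j=k-1$, and the codeword $\operatorname{Ev}_{\alpha,V}(f)$ has weight exactly $sr-k+1$.

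The only delicate point is the attainment step. One has to distribute the $k-1$ roots so that no multiplicity exceeds $s$ while still using at most $r$ points. The division $k-1=as+b$ handles this, and the constraint $k\leq sr$ guarantees that enough points are available. The lower bound uses nothing beyond the coprimality of the distinct linear factors. The nonzero weights $v_{ij}$ play no role, since scaling an entry by a nonzero constant does not change which entries vanish.
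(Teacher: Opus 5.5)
Your proof is correct. The lower bound rests on two facts: the factors $(x-\alpha_j)^{m_j}$ are pairwise coprime, and $\deg f\le k-1$. For attainment, the polynomial $\prod_{j\le a}(x-\alpha_j)^{s}(x-\alpha_{a+1})^{b}$ has degree exactly $k-1$ and satisfies $\sum_j\min\{\nu_f(\alpha_j),s\}=k-1$. Your remark that $a=r-1$ forces $b\le s-1$ is vacuous, since $b<s$ by construction, but it does no harm.

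The paper gives no argument of its own for this statement; it simply imports it from Skriganov. Your argument is a self-contained replacement. It uses only the weight formula $wt_{\operatorname{RT}}(\operatorname{Ev}_{\alpha,V}(f))=sr-\sum_{j}\min\{\nu_f(\alpha_j),s\}$ and Lemma~\ref{iso}. It is also the same degree-versus-multiplicity counting the paper later uses in the proof of Theorem~\ref{NRT covering radius}, there with degree $k$ and here with degree $k-1$.

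For the attainment half there is an alternative. You could combine $|\operatorname{HRS}_k(\alpha,V)|=q^k$ with the Rosenbloom--Tsfasman Singleton-type bound recalled in the preliminaries, which gives $d_{\operatorname{RT}}\le sr-k+1$ with no construction at all. Your explicit codeword avoids that external result. The equality case of your lower bound also yields more: the minimum-weight codewords are precisely $\operatorname{Ev}_{\alpha,V}\bigl(c\prod_j(x-\alpha_j)^{m_j}\bigr)$ with $c\in\mathbb{F}_q^{*}$, $0\le m_j\le s$ and $\sum_j m_j=k-1$.
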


Motivated by the definition of lengthening codes in $\mathbb{F}_{q}^{n}$, we introduce the following definition of $B$-row lengthening codes and investigate the relationship of NRT covering radius and NRT deep holes between $\mathscr
C\subseteq\mathbb{F}_{q}^{s\times r}$ and its $B$-row lengthening code $\binom{\mathscr{C}}{B}$. In particular, if $s\geq k$, then $\operatorname{HRS}_{k}(\alpha,V)$ can be viewed as the $\mathbf{0}^{(s-k)\times r}$-row lengthening code of $\operatorname{HRS}_{k}(\alpha,V^{(k)})$ where $V^{(k)}$ is the submatrix of $V$ consisting of its first $k$ rows. The corresponding result will be established in Proposition~\ref{Conjecture2}.
\begin{Def}
  Let $\mathscr{C}\subseteq \mathbb{F}_{q}^{s\times r}$ be a code. For $t\geq 1$ and $B\in\mathbb{F}_{q}^{t\times r}$, the $B$-row lengthening of $\mathscr{C}$ is defined as $$\binom{\mathscr{C}}{B}=\left\{\begin{pmatrix}
        C\\
        B
    \end{pmatrix}\ \middle| \ C\in\mathscr{C}\right\}\subseteq\mathbb{F}_{q}^{(s+t)\times r}.$$
\end{Def}

\begin{Theo}\label{deep hole of extend code}
    Let $\mathscr{C}\subseteq \mathbb{F}_{q}^{s\times r}$ be a code, $t\geq 1$ and $B\in\mathbb{F}_{q}^{t\times r}$. For  $X\in\mathbb{F}_{q}^{s\times r}$, $Z\in\mathbb{F}_{q}^{t\times r}$ and $C\in\mathscr{C}$, define $\Delta_{t,B}(X,Z;C)=\displaystyle\sum_{\substack{j\in[r]\\X_{j}=C_{j}}}(t-wt_{\operatorname{RT}}(Z_{j}-B_{j}))$. Then $$\rho_{\operatorname{RT}}\left(\binom{\mathscr{C}}{B}\right)=\rho_{\operatorname{RT}}(\mathscr{C})+tr$$
    and $$\operatorname{DH}\left(\binom{\mathscr{C}}{B}\right)=\left\{\begin{pmatrix}
        X\\
        Z
    \end{pmatrix}\ \middle| \ \begin{array}{c}
         X\in \operatorname{DH}(\mathscr{C}),Z\in\mathbb{F}_{q}^{t\times r} \ \text{such that}\\
         \Delta_{t,B}(X,Z;C)\leq d_{\operatorname{RT}}(X,C)-\rho_{\operatorname{RT}}(\mathscr{C})\ \text{for all} \ C\in\mathscr{C}
    \end{array}\right\}.$$
\end{Theo}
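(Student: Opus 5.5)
The plan is to compute the NRT distance between a stacked word and a codeword of the lengthened code column by column, and reduce everything to that single identity. Fix $X\in\mathbb{F}_q^{s\times r}$, $Z\in\mathbb{F}_q^{t\times r}$ and $C\in\mathscr{C}$. The $j$-th column of $\binom{X}{Z}-\binom{C}{B}$ is the vector $(X_j-C_j,\,Z_j-B_j)\in\mathbb{F}_q^{s+t}$. We distinguish two cases.

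\emph{Case $X_j\neq C_j$.} The first nonzero entry lies in the top $s$ coordinates, at index $\iota(X_j-C_j)$. Its weight is therefore $(s+t)-\iota(X_j-C_j)+1=t+wt_{\operatorname{RT}}(X_j-C_j)$.

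\emph{Case $X_j=C_j$.} The first nonzero entry, if any, lies in the bottom $t$ coordinates. Its weight is then exactly $wt_{\operatorname{RT}}(Z_j-B_j)$.

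Summing over $j$, and using $wt_{\operatorname{RT}}(X_j-C_j)=0$ whenever $X_j=C_j$, gives the key identity
\[
d_{\operatorname{RT}}\!\left(\binom{X}{Z},\binom{C}{B}\right)=d_{\operatorname{RT}}(X,C)+tr-\Delta_{t,B}(X,Z;C).
\]
Since every word of $\mathbb{F}_q^{(s+t)\times r}$ has the form $\binom{X}{Z}$ and every codeword of $\binom{\mathscr{C}}{B}$ has the form $\binom{C}{B}$, the rest is a min/max over this expression.

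For the covering radius, note that $0\le\Delta_{t,B}(X,Z;C)$, because each summand satisfies $t-wt_{\operatorname{RT}}(Z_j-B_j)\ge 0$. Hence the distance from $\binom{X}{Z}$ to the lengthened code is at most
\[
\min_C d_{\operatorname{RT}}(X,C)+tr=d_{\operatorname{RT}}(X,\mathscr{C})+tr\le\rho_{\operatorname{RT}}(\mathscr{C})+tr.
\]
For the reverse inequality, take $X\in\operatorname{DH}(\mathscr{C})$. Choose $Z$ with every column $Z_j-B_j$ having nonzero first entry, for instance $Z=B+E$ with $E$ having first row all ones. Then every $wt_{\operatorname{RT}}(Z_j-B_j)=t$, so $\Delta_{t,B}(X,Z;C)=0$ for all $C$. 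The distance is then $d_{\operatorname{RT}}(X,\mathscr{C})+tr=\rho_{\operatorname{RT}}(\mathscr{C})+tr$, which proves the first formula.

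For the deep holes, $\binom{X}{Z}$ is a deep hole iff for every $C\in\mathscr{C}$
\[
d_{\operatorname{RT}}(X,C)+tr-\Delta_{t,B}(X,Z;C)\ge\rho_{\operatorname{RT}}(\mathscr{C})+tr,
\]
that is, iff $\Delta_{t,B}(X,Z;C)\le d_{\operatorname{RT}}(X,C)-\rho_{\operatorname{RT}}(\mathscr{C})$ for all $C$. The only point needing care is that this condition already forces $X\in\operatorname{DH}(\mathscr{C})$. Since $\Delta\ge 0$, it gives $d_{\operatorname{RT}}(X,C)\ge\rho_{\operatorname{RT}}(\mathscr{C})$ for every $C$. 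Thus $d_{\operatorname{RT}}(X,\mathscr{C})\ge\rho_{\operatorname{RT}}(\mathscr{C})$, and hence equality holds. Conversely, any $X\in\operatorname{DH}(\mathscr{C})$ with $Z$ satisfying the inequalities clearly qualifies. This yields the stated description of $\operatorname{DH}\binom{\mathscr{C}}{B}$.

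The main (mild) obstacle is just getting the column-weight computation right, especially the shift by $t$ when the top block is nonzero. Everything else is bookkeeping.
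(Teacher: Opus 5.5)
Your proposal is correct and follows the paper's proof essentially step for step: the same distance identity, the same use of $\Delta_{t,B}\ge 0$ for the upper bound, the same choice of $Z$ with $Z-B$ having an all-nonzero first row for the lower bound, and the same reduction of the deep-hole condition to $\Delta_{t,B}(X,Z;C)\le d_{\operatorname{RT}}(X,C)-\rho_{\operatorname{RT}}(\mathscr{C})$, which forces $X\in\operatorname{DH}(\mathscr{C})$. Your column-by-column computation of the weight, including the shift by $t$ when $X_j\neq C_j$, supplies the verification that the paper leaves implicit with ``from the construction of $\Delta_{t,B}$''.
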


{\it Proof.} From the construction of $\Delta_{t,B}(X,Z;C)$ we have 
$$d_{\operatorname{RT}}\left(\begin{pmatrix}
        X\\
        Z
\end{pmatrix},\begin{pmatrix}
        C\\
       B\end{pmatrix}\right)=d_{\operatorname{RT}}(X,C)+tr-\Delta_{t,B}(X,Z;C)$$
for every $\begin{pmatrix}
        C\\
      B\end{pmatrix}\in\binom{\mathscr{C}}{B}$ and $\begin{pmatrix}
        X\\
        Z
    \end{pmatrix}\in \mathbb{F}_{q}^{(s+t)\times r}$ where $X\in\mathbb{F}_{q}^{s\times r}$ and $Z\in\mathbb{F}_{q}^{t\times r}$.

Then we can show that, for any $\begin{pmatrix}
        X\\
        Z
    \end{pmatrix}\in \mathbb{F}_{q}^{(s+t)\times r}$, \begin{align*}
        d_{\operatorname{RT}}\left(\begin{pmatrix}
        X\\
        Z
    \end{pmatrix},\binom{\mathscr{C}}{B}\right)&=\min_{C\in\mathscr{C}}\left(d_{\operatorname{RT}}(X,C)-\Delta_{t,B}(X,Z;C)\right)+tr\\
    &\leq \min_{C\in\mathscr{C}}\left(d_{\operatorname{RT}}(X,C)\right)+tr\leq\rho_{\operatorname{RT}}(\mathscr{C})+tr.
    \end{align*} This implies $\rho_{\operatorname{RT}}(\binom{\mathscr{C}}{B})\leq\rho_{\operatorname{RT}}(\mathscr{C})+tr$. Conversely, if we take $X_{0}\in\operatorname{DH}(\mathscr{C})$ and $Z_{0}\in\mathbb{F}_{q}^{t\times r}$ such that the first row of $Z_{0}-B$ lies in $(\mathbb{F}_{q}^{*})^{1\times r}$, then $\Delta_{t,B}(X_{0},Z_{0};C)=0$ for all  $C\in\mathscr{C}$. In this case, $$d_{\operatorname{RT}}\left(\begin{pmatrix}
        X_{0}\\
        Z_{0}
    \end{pmatrix},\binom{\mathscr{C}}{B}\right)=\min_{C\in\mathscr{C}}\left(d_{\operatorname{RT}}(X_{0},C)\right)+tr=d_{\operatorname{RT}}(X_{0},\mathscr{C})+tr=\rho_{\operatorname{RT}}(\mathscr{C})+tr.$$ Thus $$\rho_{\operatorname{RT}}\left(\binom{\mathscr{C}}{B}\right)=\rho_{\operatorname{RT}}(\mathscr{C})+tr.$$
It follows from the above proof that $\begin{pmatrix}
        X\\
        Z
    \end{pmatrix}\in \operatorname{DH}\left(\binom{\mathscr{C}}{B}\right)$ if and only if $$\min_{C\in\mathscr{C}}\left(d_{\operatorname{RT}}(X,C)-\Delta_{t,B}(X,Z;C)\right)+tr=d_{\operatorname{RT}}\left(\begin{pmatrix}
        X\\
        Z
    \end{pmatrix},\binom{\mathscr{C}}{B}\right)=\rho_{\operatorname{RT}}\left(\binom{\mathscr{C}}{B}\right)=\rho_{\operatorname{RT}}(\mathscr{C})+tr.$$
    Equivalently, $$\min_{C\in\mathscr{C}}\left(d_{\operatorname{RT}}(X,C)-\Delta_{t,B}(X,Z;C)\right)=\rho_{\operatorname{RT}}(\mathscr{C}).$$
Suppose first that this equality holds. Since $\Delta_{t,B}(X,Z;C)\geq 0$ for all $C\in\mathscr{C}$, $$\rho_{\operatorname{RT}}(\mathscr{C})=\min_{C\in\mathscr{C}}\left(d_{\operatorname{RT}}(X,C)-\Delta_{t,B}(X,Z;C)\right)\leq\min_{C\in\mathscr{C}}\left(d_{\operatorname{RT}}(X,C)\right)=d_{\operatorname{RT}}(X,\mathscr{C})\leq \rho_{\operatorname{RT}}(\mathscr{C}).$$
    This implies $d_{\operatorname{RT}}(X,\mathscr{C})=\rho_{\operatorname{RT}}(\mathscr{C}),$ that is, $X\in\operatorname{DH}(\mathscr{C}).$ Furthermore, the equality $$\min_{C\in\mathscr{C}}\left(d_{\operatorname{RT}}(X,C)-\Delta_{t,B}(X,Z;C)\right)=\rho_{\operatorname{RT}}(\mathscr{C})$$ implies that, for every $C\in\mathscr{C}$, $d_{\operatorname{RT}}(X,C)-\Delta_{t,B}(X,Z;C)\geq\rho_{\operatorname{RT}}(\mathscr{C})$, which is equivalent to $\Delta_{t,B}(X,Z;C)\leq d_{\operatorname{RT}}(X,C)-\rho_{\operatorname{RT}}(\mathscr{C}).$ Conversely, suppose that $X\in\operatorname{DH}(\mathscr{C})$ and $\Delta_{t,B}(X,Z;C)\leq d_{\operatorname{RT}}(X,C)-\rho_{\operatorname{RT}}(\mathscr{C})$ for every $C\in\mathscr{C}.$ Then $$d_{\operatorname{RT}}\left(\begin{pmatrix}
        X\\
        Z
    \end{pmatrix},\binom{\mathscr{C}}{B}\right)=\min_{C\in\mathscr{C}}\left(d_{\operatorname{RT}}(X,C)-\Delta_{t,B}(X,Z;C)\right)+tr\geq \rho_{\operatorname{RT}}(\mathscr{C})+tr.$$
    This implies $d_{\operatorname{RT}}\left(\begin{pmatrix}
        X\\
        Z
    \end{pmatrix},\binom{\mathscr{C}}{B}\right)=\rho_{\operatorname{RT}}(\mathscr{C})+tr=\rho_{\operatorname{RT}}\left(\binom{\mathscr{C}}{B}\right)$, that is, $\begin{pmatrix}
        X\\
        Z\end{pmatrix}\in\operatorname{DH}\left(\binom{\mathscr{C}}{B}\right).$
\qed

When $B=\mathbf{0}_{t\times r}$, we simply write $\Delta_{t}(X,Z;C)=\Delta_{t,\mathbf{0}_{t\times r}}(X,Z;C)$.

\section{NRT covering radius and deep holes of $\operatorname{HRS}_{k}(\alpha,V)$}\label{NRT covering radius and deep holes}
In this section, we shall give the NRT covering radius of HRS codes, develop two equivalent characterizations of NRT deep holes and derive a general formula for the number of NRT deep holes of HRS codes using the characteristic polynomial of a hyperplane arrangement.

\medskip

Throughout the paper, we shall fix an $r$-tuple $\alpha=(\alpha_{1},\dots,\alpha_{r})$ of pairwise distinct elements of $\mathbb{F}_{q}$, a matrix $V= (v_{ij})_{\substack{i=1,\dots,s \\ j=1,\dots,r}} \in (\mathbb{F}_{q}^{\star})^{s\times r}$ and an integer $1\leq k \leq sr$.
\begin{Theo}\label{NRT covering radius}
    $\rho_{\operatorname{RT}}(\operatorname{HRS}_{k}(\alpha,V))=sr-k.$
\end{Theo}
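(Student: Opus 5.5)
We prove the two inequalities $\rho_{\operatorname{RT}}(\operatorname{HRS}_{k}(\alpha,V))\le sr-k$ and $\ge sr-k$ separately, using the Hermite isomorphism of Lemma~\ref{iso} to replace every word $U$ by its interpolating polynomial $f_U\in\mathbb{F}_q[x]_{\le sr-1}$. The key tool is the weight formula
\[
wt_{\operatorname{RT}}(\operatorname{Ev}_{\alpha,V}(f))=sr-\sum_{j=1}^{r}\min\{\nu_{f}(\alpha_{j}),s\}.
\]
By linearity, $d_{\operatorname{RT}}(U,\operatorname{Ev}_{\alpha,V}(g))=wt_{\operatorname{RT}}(\operatorname{Ev}_{\alpha,V}(f_U-g))$ for every $g\in\mathbb{F}_q[x]_{\le k-1}$.

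\textbf{Upper bound.} Fix $U$ and write $n_j:=\min\{\nu_{f_U}(\alpha_j),s\}$. Choose any multiset of $k-1$ prescribed vanishing conditions: pick integers $0\le m_j\le s$ with $\sum_j m_j=k-1$; this is possible since $k-1\le sr-1$. By Hermite interpolation with $k-1$ conditions, there is a unique $g\in\mathbb{F}_q[x]_{\le k-2}$ such that $h:=f_U-g$ satisfies $\partial^{(i)}h(\alpha_j)=0$ for all $0\le i<m_j$. In this form the construction only yields distance $\le sr-k+1$, so one more degree of freedom is needed. Instead, take $g\in\mathbb{F}_q[x]_{\le k-1}$, which has $k$ free coefficients, and impose $k$ vanishing conditions with $\sum_j m_j=k\le sr$. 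Since Hermite interpolation at total multiplicity $k$ is bijective on polynomials of degree $\le k-1$, there is $g$ with $f_U-g$ vanishing to order $\ge m_j$ at each $\alpha_j$. Then $\sum_j\min\{\nu_{f_U-g}(\alpha_j),s\}\ge k$, so $d_{\operatorname{RT}}(U,\operatorname{HRS}_k)\le sr-k$. If $f_U-g=0$ the distance is $0$, which is also fine.

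\textbf{Lower bound.} We exhibit a $U$ at distance exactly $sr-k$. The natural choice is $U=\operatorname{Ev}_{\alpha,V}(x^k)$ when $k\le sr-1$. For $k=sr$ the claim $\rho=0$ is trivial, since the code is then the whole space by Lemma~\ref{iso}. For any $g$ of degree $\le k-1$, the polynomial $h=x^k-g$ is nonzero of degree exactly $k$. Hence $\sum_j\min\{\nu_h(\alpha_j),s\}\le\sum_j\nu_h(\alpha_j)\le\deg h=k$, because $\prod_j(x-\alpha_j)^{\nu_h(\alpha_j)}$ divides $h$. Therefore $d_{\operatorname{RT}}(U,\operatorname{Ev}(g))\ge sr-k$ for all codewords, which gives $\rho\ge sr-k$.

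The only delicate point is the upper bound: we must check that the prescribed vanishing orders $m_j\le s$ with $\sum m_j=k$ are realizable, which holds exactly because $k\le sr$. We must also verify that Hermite interpolation for an arbitrary multiplicity vector of total size $k$ is an isomorphism on $\mathbb{F}_q[x]_{\le k-1}$. This is the same statement as Lemma~\ref{iso}, applied with non-uniform multiplicities; equivalently, the kernel is zero because a nonzero polynomial of degree $<k$ cannot have $k$ roots counted with multiplicity. Once this is in place, both directions are immediate from the weight formula. Alternatively, the upper bound also follows from the Singleton-type covering bound, since any code of dimension $k$ that is MDS in the NRT metric has an information set of total size $k$ with this interpolation property.
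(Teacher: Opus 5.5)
Your proof is correct and is essentially the paper's argument. For the upper bound, the paper also picks $m_j\in[0,s]$ with $\sum_j m_j=k$ and takes $h=f_U \bmod \prod_j(x-\alpha_j)^{m_j}$, which is exactly your Hermite interpolant of total multiplicity $k$. For the lower bound, it uses the same witness $\operatorname{Ev}_{\alpha,V}(x^k)$ together with $\sum_j\nu_{x^k-h}(\alpha_j)\le k$. You can delete the abandoned attempt with $k-1$ conditions and the closing Singleton-type remark, since neither is needed. The $x^k$ witness also covers $k=sr$, so the separate case is unnecessary.
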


{\it Proof.} 
Since $1\le k\le sr$, there are integers $m_1,\dots,m_r$ satisfying $m_j\in[0,s]$ and $\sum_{j=1}^{r}m_j=k.$ Set $M(x)=\prod_{j=1}^{r}(x-\alpha_j)^{m_j}.$ For any $f_{U}(x)\in\mathbb{F}_{q}[x]_{\leq sr-1}$, there exists $h(x)\in\mathbb{F}_{q}[x]_{\leq k-1}$ such that $h(x)\equiv f_U(x)\bmod{M(x)}.$ Note that $\operatorname{Ev}_{\alpha,V}(h)\in \operatorname{HRS}_{k}(\alpha,V)$ and $(x-\alpha_j)^{m_j}\mid f_U(x)-h(x)$ for $j\in[r].$ Hence $\nu_{f_U-h}(\alpha_j)\ge m_j.$
    Since $m_j\le s$, we have $\min\{\nu_{f_U-h}(\alpha_j),s\}\ge m_j.$
    Therefore,
    \[
    \begin{aligned}
    d_{\operatorname{RT}}(U,\operatorname{HRS}_{k}(\alpha,V))
    &\le
    d_{\operatorname{RT}}\bigl(U,\operatorname{Ev}_{\alpha,V}(h)\bigr)\\
    &=
    wt_{\operatorname{RT}}\bigl(\operatorname{Ev}_{\alpha,V}(f_U-h)\bigr)\\
    &=
    sr-
    \sum_{j=1}^{r}
    \min\{\nu_{f_U-h}(\alpha_j),s\}\leq
    sr-
    \sum_{j=1}^{r}
    m_{j}= sr-k.
    \end{aligned}
    \]
    It follows that $\rho_{\operatorname{RT}}(\operatorname{HRS}_{k}(\alpha,V))\le sr-k.$

   Set $f_0(x)=x^k$ and $U_0=\operatorname{Ev}_{\alpha,V}(f_0)$.
     For any \(h\in\mathbb{F}_{q}[x]_{\leq k-1}\), $f_0(x)-h(x)=x^k-h(x)\in\mathbb{F}_{q}[x]_{k}.$ This implies 
 $\sum_{j=1}^{r}
    \nu_{f_0-h}(\alpha_j)
    \le k.$
    Consequently, $\sum_{j=1}^{r}
    \min\{\nu_{f_0-h}(\alpha_j),s\}
    \le
    \sum_{j=1}^{r}
    \nu_{f_0-h}(\alpha_j)
    \le k$
    and
    \[
    \begin{aligned}
    d_{\operatorname{RT}}\bigl(U_0,\operatorname{Ev}_{\alpha,V}(h)\bigr)
    &=
    wt_{\operatorname{RT}}\bigl(\operatorname{Ev}_{\alpha,V}(f_0-h)\bigr)\\
    &=
    sr-
    \sum_{j=1}^{r}
    \min\{\nu_{f_0-h}(\alpha_j),s\}\\
    &\ge sr-k.
    \end{aligned}
    \]
 So $$sr-k\leq 
    d_{\operatorname{RT}}(U_0,\operatorname{HRS}_{k}(\alpha,V))\leq \rho_{\operatorname{RT}}(\operatorname{HRS}_{k}(\alpha,V))\leq sr-k.$$ 
    Therefore $\rho_{\operatorname{RT}}(\operatorname{HRS}_{k}(\alpha,V))=sr-k.$
\qed

Remark that for $s=1$ it reduces to the covering radius of GRS codes~\cite[p. 330]{CKM+85}.

For the trivial case $k=sr$, we have $\operatorname{HRS}_{sr}(\alpha,V)=\mathbb{F}_{q}^{s\times r}$. Thus, $\rho_{\operatorname{RT}}(\operatorname{HRS}_{sr}(\alpha,V))=0$. So we always assume $1\leq k\leq sr-1$.

\medskip


For convenience in the sequel, we introduce the following notation. Let \[ \mathcal M_k = \left\{ \boldsymbol{\mu}=(\mu_1,\dots,\mu_r)\in[0,s]^r \ \middle| \ \sum_{j=1}^{r}\mu_j=k+1 \right\}. \] For every \(\boldsymbol{\mu}\in\mathcal M_k\), define $ M_{\boldsymbol{\mu}}(x) = \prod_{j=1}^{r}(x-\alpha_j)^{\mu_j}.$ For a polynomial \(f(x)\), we denote by \([x^k]f(x)\) the coefficient of \(x^k\) in \(f(x)\) and define the linear function \begin{align*}
    L_{\boldsymbol{\mu}}: \mathbb F_q[x]_{\le sr-1} &\longrightarrow \mathbb F_q \\
    f&\longmapsto [x^{k}](f\bmod M_{\boldsymbol{\mu}}).
\end{align*} Let $Q_k= \mathbb F_q[x]_{\le sr-1}/\mathbb F_q[x]_{\le k-1}.$ Since \(L_{\boldsymbol{\mu}}\) vanishes on \(\mathbb F_q[x]_{\le k-1}\), it induces a well-defined linear function \[ \overline{L}_{\boldsymbol{\mu}}: Q_k \longrightarrow \mathbb F_q, \qquad \overline{L}_{\boldsymbol{\mu}} \bigl( f+\mathbb F_q[x]_{\le k-1} \bigr) = L_{\boldsymbol{\mu}}(f). \]

\medskip

By Theorem~\ref{NRT covering radius}, the set of all NRT deep holes of \(\operatorname{HRS}_k(\alpha,V)\) is given as follows:
\[
\operatorname{DH}(\operatorname{HRS}_k(\alpha,V))=\left\{ U\in\mathbb F_q^{s\times r}\mid d_{\operatorname{RT}}\bigl(U,\operatorname{HRS}_k(\alpha,V)\bigr)
= sr-k \right\}.
\]
The following theorem extends the MDS-extension criterion~\cite[Proposition~1]{Kai17} for Reed-Solomon deep holes to HRS codes in the NRT metric.

\begin{Theo}\label{MDS Deep holes}
    Let $U\in\mathbb{F}_{q}^{s\times r}.$ Then $U\in\operatorname{DH}(\operatorname{HRS}_k(\alpha,V))$ if and only if $$\operatorname{HRS}_{k}(\alpha,V)\oplus \mathbb{F}_{q}U$$ is an $[sr,k+1]$ MDS NRT code.
\end{Theo}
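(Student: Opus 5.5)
The plan is to compare the weights of the codewords of $\mathscr{C}'=\operatorname{HRS}_{k}(\alpha,V)\oplus\mathbb{F}_{q}U$ with the distance from $U$ to the code, using Theorem~\ref{NRT covering radius} and the MDS property of $\operatorname{HRS}_k(\alpha,V)$.

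First I would dispose of dimension. If $U\in\operatorname{HRS}_k(\alpha,V)$, then $d_{\operatorname{RT}}(U,\operatorname{HRS}_k(\alpha,V))=0<sr-k$ because $k\le sr-1$. In that case the sum is not direct, so it is not a $k+1$-dimensional code, and both sides of the equivalence fail. We may therefore assume $U\notin\operatorname{HRS}_k(\alpha,V)$, so that $\mathscr{C}'$ has dimension $k+1$. By the definition of MDS NRT codes, $\mathscr{C}'$ is MDS exactly when $d_{\operatorname{RT}}(\mathscr{C}')=sr-k$. The Singleton-type bound \cite[Theorem 1]{RT97} already gives $d_{\operatorname{RT}}(\mathscr{C}')\le sr-k$, so MDS is equivalent to every nonzero codeword having NRT weight at least $sr-k$.

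Next I would split the nonzero codewords of $\mathscr{C}'$ into two types, $C$ and $\lambda U+C$ with $\lambda\in\mathbb{F}_q^{*}$ and $C\in\operatorname{HRS}_k(\alpha,V)$. Codewords of the first type have weight at least $sr-k+1$ by \cite[Theorem 5.1]{Skr01}. For the second type, the NRT weight is invariant under multiplication by the nonzero scalar $\lambda^{-1}$, since scaling preserves the position of the first nonzero entry in each column. Hence
\[
wt_{\operatorname{RT}}(\lambda U+C)=wt_{\operatorname{RT}}(U+\lambda^{-1}C)=d_{\operatorname{RT}}(U,-\lambda^{-1}C),
\]
and as $C$ ranges over the code so does $-\lambda^{-1}C$. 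So the minimum weight of the second-type codewords equals $d_{\operatorname{RT}}(U,\operatorname{HRS}_k(\alpha,V))$.

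Combining these, $d_{\operatorname{RT}}(\mathscr{C}')=\min\{d(\operatorname{HRS}_k),\,d_{\operatorname{RT}}(U,\operatorname{HRS}_k)\}$. Because $d_{\operatorname{RT}}(U,\operatorname{HRS}_k)\le sr-k<sr-k+1$ by Theorem~\ref{NRT covering radius}, this minimum is just $d_{\operatorname{RT}}(U,\operatorname{HRS}_k(\alpha,V))$. Thus $\mathscr{C}'$ is MDS if and only if this distance equals $sr-k$, which is exactly the statement that $U$ is a deep hole. The only point needing care is the scalar invariance of $wt_{\operatorname{RT}}$ and the degenerate case $U$ in the code; there is no real obstacle.
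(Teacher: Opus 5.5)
Your proof is correct and follows the paper's argument: both reduce to $d_{\operatorname{RT}}(\operatorname{HRS}_k(\alpha,V)\oplus\mathbb{F}_qU)=\min\{sr-k+1,\,d_{\operatorname{RT}}(U,\operatorname{HRS}_k(\alpha,V))\}$ together with the covering radius $sr-k$. You also justify this minimum formula via the scalar invariance of $wt_{\operatorname{RT}}$, a step the paper uses without comment.
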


{\it Proof.} If $U\in \operatorname{DH}(\operatorname{HRS}_k(\alpha,V))$, then $d_{\operatorname{RT}}(U,\operatorname{HRS}_{k}(\alpha,V))=sr-k\geq1$. This implies $U\notin \operatorname{HRS}_{k}(\alpha,V)$ and $$\dim_{\mathbb{F}_{q}}(\operatorname{HRS}_{k}(\alpha,V)\oplus\mathbb{F}_{q}U)=k+1.$$
Moreover, \begin{align*}
    d_{\operatorname{RT}}(\operatorname{HRS}_{k}(\alpha,V)\oplus \mathbb{F}_{q}U)&= \min\{d_{\operatorname{RT}}(\operatorname{HRS}_{k}(\alpha,V)),d_{\operatorname{RT}}(U,\operatorname{HRS}_{k}(\alpha,V))\}\\
    &=\min\{sr-k+1,sr-k\}\\
    &= sr-k.
\end{align*}
Hence $\operatorname{HRS}_{k}(\alpha,V)\oplus \mathbb{F}_{q}U$ is an $[sr,k+1]$ MDS NRT code.

    Conversely, if $\operatorname{HRS}_{k}(\alpha,V)\oplus \mathbb{F}_{q}U$ is an $[sr,k+1]$ MDS NRT code, then $U\notin \operatorname{HRS}_{k}(\alpha,V)$ and $d_{\operatorname{RT}}(\operatorname{HRS}_{k}(\alpha,V)\oplus \mathbb{F}_{q}U)=sr-k.$ It follows from $d_{\operatorname{RT}}(\operatorname{HRS}_{k}(\alpha,V))=sr-k+1$ that $$d_{\operatorname{RT}}(U,\operatorname{HRS}_{k}(\alpha,V))=sr-k.$$
    This implies $U\in\operatorname{DH}(\operatorname{HRS}_k(\alpha,V)).$\qed

\medskip

The next theorem provides a main explicit criterion for NRT deep holes in this paper, expressing the NRT deep hole condition in terms of polynomial remainders.
\begin{Theo}\label{Deep holes}
    Let $f\in\mathbb{F}_{q}[x].$ Then $\operatorname{Ev}_{\alpha,V}(f)\in\operatorname{DH}(\operatorname{HRS}_k(\alpha,V))$ if and only if \begin{align*}
         \deg(f\bmod M_{\boldsymbol{\mu}})=k \quad \text{for all} \ \boldsymbol{\mu}\in \mathcal{M}_{k}.
     \end{align*}
\end{Theo}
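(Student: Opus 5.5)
Since $d_{\operatorname{RT}}(U,\operatorname{HRS}_k(\alpha,V))\le sr-k$ always holds by Theorem~\ref{NRT covering radius}, the plan is to show that the negation of each side of Theorem~\ref{Deep holes} is the same condition. Put $U=\operatorname{Ev}_{\alpha,V}(f)$. Then $U$ fails to be a deep hole if and only if there is some $h\in\mathbb F_q[x]_{\le k-1}$ with $wt_{\operatorname{RT}}(\operatorname{Ev}_{\alpha,V}(f-h))\le sr-k-1$. By the weight formula in the preliminaries, this is equivalent to
\[
\sum_{j=1}^r\min\{\nu_{f-h}(\alpha_j),s\}\ge k+1 .
\]
Here $f-h$ may be the zero polynomial, with $\nu_0=\infty$; the formula still applies, since $\operatorname{Ev}_{\alpha,V}$ only depends on $f$ modulo $\pi_\alpha^s$.

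\emph{Step 1: reduce to a divisibility statement.} Set $m_j=\min\{\nu_{f-h}(\alpha_j),s\}\in[0,s]$. If $\sum_j m_j\ge k+1$, decrease some of the $m_j$ to obtain $\boldsymbol\mu\in\mathcal M_k$ with $\mu_j\le m_j$. Then $(x-\alpha_j)^{\mu_j}\mid f-h$ for every $j$, so $M_{\boldsymbol\mu}\mid f-h$. Conversely, if $M_{\boldsymbol\mu}\mid f-h$ for some $\boldsymbol\mu\in\mathcal M_k$, then $\min\{\nu_{f-h}(\alpha_j),s\}\ge\mu_j$ because $\mu_j\le s$, and the sum is at least $k+1$. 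Hence $U\notin\operatorname{DH}$ if and only if there exist $\boldsymbol\mu\in\mathcal M_k$ and $h$ with $\deg h\le k-1$ such that $f\equiv h\pmod{M_{\boldsymbol\mu}}$.

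\emph{Step 2: translate into a remainder degree.} Since $\deg M_{\boldsymbol\mu}=k+1$ and $\deg h\le k-1<k+1$, the congruence $f\equiv h \pmod{M_{\boldsymbol\mu}}$ says exactly that $h=f\bmod M_{\boldsymbol\mu}$. So such an $h$ exists if and only if $\deg(f\bmod M_{\boldsymbol\mu})\le k-1$, with the convention that the zero remainder qualifies. The remainder always has degree at most $k$. Therefore failure for every $\boldsymbol\mu$ is the same as $\deg(f\bmod M_{\boldsymbol\mu})=k$ for all $\boldsymbol\mu\in\mathcal M_k$. Taking negations gives the theorem; equivalently, the condition is $L_{\boldsymbol\mu}(f)\neq0$ for all $\boldsymbol\mu$.

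\emph{Main obstacle.} The only delicate point is the truncation at $s$ in Step 1. One must check that $\mathcal M_k$ is nonempty (this holds because $k+1\le sr$) and that lowering the $m_j$ to sum exactly to $k+1$ keeps each coordinate in $[0,s]$. Everything else is bookkeeping.
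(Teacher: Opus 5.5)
Your proof is correct and takes essentially the same route as the paper. Both use the covering-radius bound $sr-k$ to reduce the deep-hole condition to the nonexistence of $h\in\mathbb{F}_q[x]_{\le k-1}$ with $\sum_{j}\min\{\nu_{f-h}(\alpha_j),s\}\ge k+1$. Both then identify such an $h$ with a remainder $f\bmod M_{\boldsymbol{\mu}}$ of degree at most $k-1$, by lowering the truncated multiplicities to some $\boldsymbol{\mu}\in\mathcal{M}_k$.

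The nonemptiness check on $\mathcal{M}_k$ that you flag as the main obstacle is not actually needed: the existence of $\boldsymbol{\mu}$ already follows from $\sum_j m_j\ge k+1$, and if $\mathcal{M}_k$ were empty the equivalence would still hold vacuously.
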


{\it Proof.}
By {\rm Theorem~\ref{NRT covering radius}} and the relationship between NRT deep hole and the order of vanishing, it follows that \(\operatorname{Ev}_{\alpha,V}(f)\in\operatorname{DH}(\operatorname{HRS}_k(\alpha,V))\) if and only if
\[
\Delta_k(f):=
\max_{h\in\mathbb F_q[x]_{\le k-1}}
\sum_{j=1}^{r}
\min\{\nu_{f-h}(\alpha_j),s\}=k.
\]
Then we may use the same argument as in the proof of Theorem~\ref{NRT covering radius} to show 
$d_{\operatorname{RT}}(\operatorname{Ev}_{\alpha,V}(f),\operatorname{HRS}_{k}(\alpha,V))\leq sr-k$ for all
$f\in\mathbb{F}_{q}[x]$. Indeed, we can choose \(h\in \mathbb F_q[x]_{\le k-1}\) such that $\sum_{j=1}^{r}
\min\{\nu_{f-h}(\alpha_j),s\}\ge k.$
Therefore $\Delta_k(f)\ge k.$
Hence $\Delta_k(f)=k$
is equivalent to saying that there is no \(h\in\mathbb F_q[x]_{\le k-1}\)
such that
\[
\sum_{j=1}^{r}
\min\{\nu_{f-h}(\alpha_j),s\}\ge k+1.
\]
Then we show that this is also equivalent to
\begin{align*}
         \deg(f\bmod M_{\boldsymbol{\mu}})=k \quad \text{for all} \ \boldsymbol{\mu}\in \mathcal{M}_{k}.
     \end{align*}Suppose first that there exists \(h_{0}\in\mathbb F_q[x]_{\le k-1}\) such that $\sum_{j=1}^{r}
\min\{\nu_{f-h_{0}}(\alpha_j),s\}\ge k+1,$
then there exists  $\boldsymbol{\mu}_{0}=(\mu_{0j})_{j\in[r]}\in \mathcal{M}_{k}$ satisfying $0\le \mu_{0j}\le\min\{\nu_{f-h_{0}}(\alpha_j),s\}$ for  $j\in[r].$ This implies $M_{\boldsymbol{\mu}_{0}}(x)
\mid f(x)-h_{0}(x)$ and $$f\bmod M_{\boldsymbol{\mu}_{0}}=h_{0}\in \mathbb F_q[x]_{\le k-1},$$ a contradiction to the assumption that $ \deg(f\bmod M_{\boldsymbol{\mu}})=k $ for $\boldsymbol{\mu}\in \mathcal{M}_{k}.$ Conversely, suppose that there exists  $\boldsymbol{\mu}_{1}=(\mu_{1j})_{j\in[r]}\in\mathcal M_k$ satisfying $f\bmod M_{\boldsymbol{\mu}_{1}}\in \mathbb F_q[x]_{\le k-1}$. Let $h_{1}=f\bmod M_{\boldsymbol{\mu}_{1}}\in\mathbb F_q[x]_{\leq k-1}.$
Then $(x-\alpha_j)^{\mu_{1j}}\mid f(x)-h_{1}(x)$ for $j\in[r]$ which implies $\nu_{f-h_{1}}(\alpha_j)\ge \mu_{1j}.$
Since \(\mu_{1j}\le s\), it follows that $\min\{\nu_{f-h_{1}}(\alpha_j),s\}\ge \mu_{1j}$ and
\[
\sum_{j=1}^{r}
\min\{\nu_{f-h_{1}}(\alpha_j),s\}
\ge
\sum_{j=1}^{r}\mu_{1j}
=
k+1.
\]
But this contradicts to the assumption that there is no \(h\in\mathbb F_q[x]_{\le k-1}\)
such that $\sum_{j=1}^{r}
\min\{\nu_{f-h}(\alpha_j),s\}\ge k+1.$ This finishes the proof.\qed

\medskip

The following corollary reformulates the polynomial criterion as a complement-counting problem for a hyperplane arrangement in $Q_{k}=\mathbb F_q[x]_{\le sr-1}/\mathbb F_q[x]_{\le k-1}$. Now, we review hyperplane arrangement and its characteristic polynomial in~\cite{OT92,Sta07}.

Let $\mathbb{K}$ be a field and let $V$ be a $\mathbb{K}$-vector space of dimension $\ell$.  A hyperplane $H$ in $V$ is an affine subspace of dimension $(\ell-1)$. A hyperplane arrangement $\mathcal{A}$ is a finite set of hyperplanes in $V$. The characteristic polynomial of $\mathcal{A}$, denoted by $\chi_{\mathcal{A}}$, is defined to be $$\chi_{\mathcal{A}}(t):=\sum_{X\in L(\mathcal{A})}\mu(X)t^{\dim(X)},$$
where $L(\mathcal{A})$ is the set of nonempty intersections of elements of $\mathcal{A}$, including $V$  itself as the intersection over the empty set and $\mu$ is the M$\ddot{\text{o}}$bius function. Any subset $\mathcal{B}$ of $\mathcal{A}$ is called central if $\cap_{H\in\mathcal{B}}H\neq \emptyset$. H. Whitney proved in~\cite[p. 573]{Whi32} that$$\chi_{\mathcal{A}}(t)=\sum_{\substack{\mathcal{B}\subseteq\mathcal{A}\\\mathcal{B}\, \text{central}}}(-1)^{\left|\mathcal{B}\right|}t^{\dim(\cap_{H\in\mathcal{B}}H)}.$$

\begin{Coro}\label{Count Deep holes}
   Let $\mathcal{A}_{s,r,k}=\{\Ker(\overline{L}_{\boldsymbol{\mu}})\mid \boldsymbol{\mu}\in\mathcal{M}_{k}\}.$ Then $$\operatorname{DH}(\operatorname{HRS}_{k}(\alpha,V))=\{U\in\mathbb{F}_{q}^{s\times r}\mid L_{\boldsymbol{\mu}}(f_{U})\neq 0 \ \text{for all} \ \boldsymbol{\mu}\in \mathcal{M}_{k}\}.$$
    Consequently, $$\# \operatorname{DH}(\operatorname{HRS}_{k}(\alpha,V))=q^{k}\cdot\chi_{\mathcal{A}_{s,r,k}}(q).$$
\end{Coro}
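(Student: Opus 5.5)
The plan is to combine Theorem~\ref{Deep holes} with the isomorphism of Lemma~\ref{iso}, and then count via a finite-field version of Whitney's formula.

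\textbf{Set description.} Every $U\in\mathbb F_q^{s\times r}$ equals $\operatorname{Ev}_{\alpha,V}(f_U)$ with $f_U\in\mathbb F_q[x]_{\le sr-1}$. Since $\deg M_{\boldsymbol\mu}=k+1$, the remainder $f_U\bmod M_{\boldsymbol\mu}$ has degree at most $k$. Hence $\deg(f_U\bmod M_{\boldsymbol\mu})=k$ exactly when its $x^k$-coefficient is nonzero, that is, when $L_{\boldsymbol\mu}(f_U)\neq0$. Applying Theorem~\ref{Deep holes} to $f=f_U$ then gives the first displayed equality.

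\textbf{Reduction to the quotient.} By Lemma~\ref{iso}, $U\mapsto f_U$ is a bijection from $\mathbb F_q^{s\times r}$ onto $\mathbb F_q[x]_{\le sr-1}$. So the number of deep holes is the number of $f\in\mathbb F_q[x]_{\le sr-1}$ with $L_{\boldsymbol\mu}(f)\ne0$ for all $\boldsymbol\mu$. Each $L_{\boldsymbol\mu}$ vanishes on $\mathbb F_q[x]_{\le k-1}$: for $h$ of degree at most $k-1$, $h\bmod M_{\boldsymbol\mu}=h$, whose $x^k$-coefficient is $0$. Therefore the condition depends only on the class of $f$ in $Q_k$. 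Every class has exactly $q^k$ representatives, so the count equals $q^k$ times the number of points of $Q_k$ lying outside $\bigcup_{\boldsymbol\mu}\Ker(\overline L_{\boldsymbol\mu})$.

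\textbf{Each kernel is a hyperplane.} This is the main point to check: every $\overline L_{\boldsymbol\mu}$ must be nonzero. Take $f=x^k$. Then $x^k\bmod M_{\boldsymbol\mu}=x^k$, since its degree $k$ is less than $k+1$. So $L_{\boldsymbol\mu}(x^k)=1\ne0$, and $x^k\in\mathbb F_q[x]_{\le sr-1}$ because $k\le sr-1$. Thus $\mathcal A_{s,r,k}$ is a central arrangement of genuine linear hyperplanes in the $(sr-k)$-dimensional space $Q_k$. Distinct $\boldsymbol\mu$ could give the same kernel; the arrangement is treated as a set, and the union is unaffected.

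\textbf{Complement count.} For a central arrangement $\mathcal A$ over $\mathbb F_q$ in a space $W$, inclusion--exclusion gives
\[
\#\Bigl(W\setminus\bigcup_{H\in\mathcal A}H\Bigr)=\sum_{\mathcal B\subseteq\mathcal A}(-1)^{|\mathcal B|}\,\#\bigcap_{H\in\mathcal B}H=\sum_{\mathcal B\subseteq\mathcal A}(-1)^{|\mathcal B|}q^{\dim\cap_{H\in\mathcal B}H}.
\]
Every subset $\mathcal B$ is central, because all hyperplanes contain $0$. By Whitney's formula the right-hand side is therefore $\chi_{\mathcal A}(q)$. Applying this with $\mathcal A=\mathcal A_{s,r,k}$ and $W=Q_k$, and multiplying by $q^k$, gives $\#\operatorname{DH}(\operatorname{HRS}_k(\alpha,V))=q^k\chi_{\mathcal A_{s,r,k}}(q)$.
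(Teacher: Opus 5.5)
Your proposal is correct and follows the paper's proof essentially step for step. Both arguments use the same chain: a nonvanishing $x^k$-coefficient of the remainder modulo $M_{\boldsymbol\mu}$ is equivalent to $\deg(f_U\bmod M_{\boldsymbol\mu})=k$, then you pass to $Q_k$ with the factor $q^k$, then apply inclusion--exclusion and Whitney's formula. Your explicit check that $L_{\boldsymbol\mu}(x^k)=1$, which makes each $\Ker(\overline L_{\boldsymbol\mu})$ a genuine linear hyperplane, and your remark that coinciding kernels do not affect the union, are small details the paper leaves implicit.
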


{\it Proof.} Since $\deg M_{\boldsymbol{\mu}}=k+1,$ we get $\deg(f_{U} \bmod M_{\boldsymbol{\mu}})=k$ is equivalent to $L_{\boldsymbol{\mu}}(f_{U})\neq 0.$ Thus it follows from {\rm Theorem~\ref{Deep holes}} that $$\operatorname{DH}(\operatorname{HRS}_{k}(\alpha,V))=\{U\in\mathbb{F}_{q}^{s\times r}\mid L_{\boldsymbol{\mu}}(f_{U})\neq 0\ \text{for all} \ \boldsymbol{\mu}\in \mathcal{M}_{k}\}.$$
Since \(\mathbb F_q[x]_{\le k-1}\subseteq \Ker(L_{\boldsymbol{\mu}})\), we have that $U\in\operatorname{DH}(\operatorname{HRS}_{k}(\alpha,V))$
if and only if $$f_U+\mathbb F_q[x]_{\le k-1}
\in
Q_k
\setminus
\bigcup_{\boldsymbol{\mu}\in\mathcal M_k}
\Ker(\overline L_{\boldsymbol{\mu}}).$$
This implies that
\[
\# \operatorname{DH}(\operatorname{HRS}_{k}(\alpha,V))
=
q^k\cdot
\#\left(
Q_k
\setminus
\bigcup_{\boldsymbol{\mu}\in\mathcal M_k}
\Ker(\overline L_{\boldsymbol{\mu}})
\right).
\]
By the inclusion-exclusion principle, we have \begin{align*}
    \#\left( Q_k \setminus \bigcup_{\boldsymbol{\mu}\in\mathcal M_k} \Ker(\overline L_{\boldsymbol{\mu}}) \right) &= \#\left( Q_k \setminus \bigcup_{H\in\mathcal A_{s,r,k}}H \right)\\
    &= \sum_{\mathcal B\subseteq \mathcal A_{s,r,k}} (-1)^{|\mathcal B|} \cdot\#\left(\bigcap_{H\in\mathcal B}H\right)\\
   & = \sum_{\mathcal B\subseteq \mathcal A_{s,r,k}} (-1)^{|\mathcal B|} \cdot q^{\dim_{\mathbb F_q}\left(\bigcap_{H\in\mathcal B}H\right)}= \chi_{\mathcal A_{s,r,k}}(q).
\end{align*}
Thus, we have \[ \# \operatorname{DH}(\operatorname{HRS}_{k}(\alpha,V)) = q^k\cdot\chi_{\mathcal A_{s,r,k}}(q). \]\qed

The above result can be reformulated as the following geometric problem:
\begin{align*}
   \text{how many hyperplanes in} \ \operatorname{PG}(Q_{k}^{*}) \ \text{ are disjoint from} \ \{[\overline{L}_{\boldsymbol{\mu}}]\mid \boldsymbol{\mu}\in\mathcal{M}_{k}\},
\end{align*}
where $Q_{k}^{*}=\Hom_{\mathbb{F}_{q}}(Q_{k},\mathbb{F}_{q})$ is the dual of the vector space $Q_{k}$. 

\begin{Prop}
    If $1\le k\leq sr-2$, then\begin{align*}
    \chi_{\mathcal{A}_{s,r,k}}(q)=(q-1)\cdot\#\left\{\Pi\subseteq\operatorname{PG}(Q_{k}^{*})\middle| \begin{array}{l}
\dim(\Pi)=sr-k-2\\[1mm]
\Pi\cap\{[\overline{L}_{\boldsymbol{\mu}}]\mid \boldsymbol{\mu}\in\mathcal{M}_{k}\}=\emptyset
\end{array}\right\}.
\end{align*}
\end{Prop}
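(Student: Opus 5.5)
We count points of the complement $Q_k\setminus\bigcup_{\boldsymbol{\mu}\in\mathcal M_k}\Ker(\overline L_{\boldsymbol{\mu}})$ by passing to the dual projective space. By the proof of Corollary~\ref{Count Deep holes}, $\chi_{\mathcal{A}_{s,r,k}}(q)$ is exactly the number of vectors $\bar f\in Q_k$ with $\overline L_{\boldsymbol{\mu}}(\bar f)\neq 0$ for all $\boldsymbol{\mu}\in\mathcal M_k$. Here $\dim Q_k=sr-k=:n$. Each nonzero $\bar f\in Q_k$ is canonically an element of $Q_k^{**}$, namely evaluation at $\bar f$, and so determines the hyperplane
\[
H_{\bar f}=\{\lambda\in Q_k^{*}\mid \lambda(\bar f)=0\}
\]
of $Q_k^{*}$. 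Its projectivization $\Pi_{\bar f}$ is a projective subspace of $\operatorname{PG}(Q_k^{*})\cong\operatorname{PG}(n-1,q)$ of dimension $n-2=sr-k-2$.

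The first step is to check that the membership condition translates correctly. For each $\boldsymbol{\mu}$, we have $[\overline L_{\boldsymbol{\mu}}]\in\Pi_{\bar f}$ if and only if $\overline L_{\boldsymbol{\mu}}(\bar f)=0$. So $\bar f$ lies in the complement exactly when $\Pi_{\bar f}$ misses the point set $\{[\overline L_{\boldsymbol{\mu}}]\}$. Two small points need care here:
\begin{itemize}
\item each $\overline L_{\boldsymbol{\mu}}$ must be nonzero, so that $[\overline L_{\boldsymbol{\mu}}]$ is a genuine projective point;
\item $\bar f=0$ must be excluded, which is automatic since $\overline L_{\boldsymbol{\mu}}(0)=0$.
\end{itemize}
For the first point, $L_{\boldsymbol{\mu}}(x^k)=1$ because $\deg M_{\boldsymbol{\mu}}=k+1>k$, and $x^k\in\mathbb F_q[x]_{\le sr-1}$ since $k\le sr-1$. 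Note also that $\mathcal M_k\neq\emptyset$ because $k+1\le sr$.

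The second step is to compare the counts via the map $\bar f\mapsto\Pi_{\bar f}$ from $Q_k\setminus\{0\}$ to the set of hyperplanes of $\operatorname{PG}(Q_k^*)$. This map is surjective, since every hyperplane of $Q_k^*$ is the kernel of some nonzero element of $Q_k^{**}=Q_k$ (finite dimension). Its fibres are exactly the classes $\{c\bar f\mid c\in\mathbb F_q^{*}\}$, because two nonzero functionals on $Q_k^*$ have the same kernel if and only if they are proportional. Each fibre therefore has $q-1$ elements. The fibres are also compatible with the condition, since $\overline L_{\boldsymbol{\mu}}(c\bar f)=c\,\overline L_{\boldsymbol{\mu}}(\bar f)$. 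Multiplying the number of admissible $\Pi$ by $q-1$ then gives the formula.

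The only delicate point is the role of the hypothesis $k\le sr-2$. It guarantees $n\ge2$, so that hyperplanes of $\operatorname{PG}(Q_k^*)$ have the nonnegative dimension $sr-k-2$ and the phrase ``$\dim(\Pi)=sr-k-2$'' names exactly the hyperplanes. When $k=sr-1$ the projective space is a single point and its only hyperplane is empty, so that degenerate case is excluded by the hypothesis. Apart from this, the argument is routine duality bookkeeping.
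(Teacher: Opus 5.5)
Your proposal is correct and follows the paper's proof. It uses the same map $\bar f\mapsto\Pi_{\bar f}$, surjectivity via $Q_k^{**}\cong Q_k$, and fibres equal to the $\mathbb{F}_q^{*}$-scalar classes; you also check explicitly that each $\overline{L}_{\boldsymbol{\mu}}\neq 0$ and that $\bar f=0$ is excluded, which the paper leaves implicit.
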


\textit{Proof.}
Define a map \begin{align*}
    \pi:Q_k \setminus \bigcup_{\boldsymbol{\mu}\in\mathcal M_k} \Ker(\overline L_{\boldsymbol{\mu}})&\longrightarrow\left\{\Pi\subseteq\operatorname{PG}(Q_{k}^{*})\middle| \begin{array}{l}
\dim(\Pi)=sr-k-2\\[1mm]
\Pi\cap\{[\overline{L}_{\boldsymbol{\mu}}]\mid \boldsymbol{\mu}\in\mathcal{M}_{k}\}=\emptyset
\end{array}\right\}.\\
   f+\mathbb{F}_{q}[x]_{\leq k-1} &\longmapsto \Pi_{f}:=\{[\lambda]\in \operatorname{PG}(Q_{k}^{*})\mid \lambda(f+\mathbb{F}_{q}[x]_{\leq k-1})=0\}.
\end{align*}
\begin{enumerate}
    \item well-defined: Note that $$\dim_{\mathbb{F}_{q}}\{\lambda\in Q_{k}^{*}\mid \lambda(f+\mathbb{F}_{q}[x]_{\leq k-1})=0\}=\dim_{\mathbb{F}_{q}}(Q_{k}^{*})-1=sr-k-1$$ for all $f+\mathbb{F}_{q}[x]_{\leq k-1}\in Q_k \setminus \bigcup_{\boldsymbol{\mu}\in\mathcal M_k} \Ker(\overline L_{\boldsymbol{\mu}})$. This implies $\dim(\Pi_{f})=sr-k-2$. Suppose that there exists $\boldsymbol{\mu}_{0}\in\mathcal{M}_{k}$ such that $[\overline{L}_{\boldsymbol{\mu}_{0}}]\in \Pi_{f}$, then $\overline{L}_{\boldsymbol{\mu}_{0}}(f+\mathbb{F}_{q}[x]_{\leq k-1})=0$. This is a contradiction. So $\pi$ is well-defined.
    \item surjective: Given an element $\Pi\in\{\Pi\subseteq\operatorname{PG}(Q_{k}^{*})\mid \dim(\Pi)=sr-k-2,\Pi\cap\{[\overline{L}_{\boldsymbol{\mu}}]\mid \boldsymbol{\mu}\in\mathcal{M}_{k}\}=\emptyset\}$, there exists a hyperplane $H_{0}$ of $Q_{k}^{*}$ such that $\Pi=\operatorname{PG}(H_{0})$. This implies there exists $0\neq F_{0}\in Q_{k}^{**}$ such that $H_{0}=\Ker(F_{0})$. Since the map $\theta:Q_{k}\rightarrow Q_{k}^{**}$ given by $f+\mathbb{F}_{q}[x]_{\leq k-1}\mapsto[\lambda\mapsto\lambda(f+\mathbb{F}_{q}[x]_{\leq k-1})]$ is an $\mathbb{F}_{q}$-linear isomorphism, there exists a unique element $f_{0}+\mathbb{F}_{q}[x]_{\leq k-1}\in Q_{k}$ such that $F_{0}=\theta(f_{0}+\mathbb{F}_{q}[x]_{\leq k-1})$. Thus we have $\Pi=\{[\lambda]\in\operatorname{PG}(Q_{k}^{*})\mid \lambda(f_{0}+\mathbb{F}_{q}[x]_{\leq k-1})=0\}.$ It follows from $\Pi\cap\{[\overline{L}_{\boldsymbol{\mu}}]\mid \boldsymbol{\mu}\in\mathcal{M}_{k}\}=\emptyset$ that $f_{0}+\mathbb{F}_{q}[x]_{\leq k-1}\notin \bigcup_{\boldsymbol{\mu}\in\mathcal M_k} \Ker(\overline L_{\boldsymbol{\mu}})$.
    \item $\pi^{-1}(\Pi_{f})=\{c(f+\mathbb{F}_{q}[x]_{\leq k-1})\mid c\in\mathbb{F}_{q}^{*}\}$: If $\Pi_{f_{1}}=\Pi_{f_{2}}$, then we have $\Ker(\theta(f_{1}+\mathbb{F}_{q}[x]_{\leq k-1}))=\Ker(\theta(f_{2}+\mathbb{F}_{q}[x]_{\leq k-1})).$ Let $F_{i}=\theta(f_{i}+\mathbb{F}_{q}[x]_{\leq k-1})$ for $i=1,2$. Since $F_{1},F_{2}\neq 0$ and $\Ker(F_{1})=\Ker(F_{2})$, there exists $\lambda_{0}\in Q_{k}^{*}$ such that $F_{1}(\lambda_{0})\neq 0$ and $F_{2}(\lambda_{0})\neq 0.$ Note that $\lambda-\frac{F_{2}(\lambda)}{F_{2}(\lambda_{0})}\lambda_{0}\in\Ker(F_{2})=\Ker(F_{1})$ for all $\lambda\in Q_{k}^{*}$. This implies $F_{1}(\lambda)=\frac{F_{1}(\lambda_{0})}{F_{2}(\lambda_{0})}F_{2}(\lambda)$ and $f_{1}+\mathbb{F}_{q}[x]_{\leq k-1}=\frac{F_{1}(\lambda_{0})}{F_{2}(\lambda_{0})}(f_{2}+\mathbb{F}_{q}[x]_{\leq k-1})$ since $\theta$ is an $\mathbb{F}_{q}$-linear isomorphism. Hence $\pi^{-1}(\Pi_{f})\subseteq\{c(f+\mathbb{F}_{q}[x]_{\leq k-1})\mid c\in\mathbb{F}_{q}^{*}\}$. Conversely, since $\lambda\in Q_{k}^{*}$ is a $\mathbb{F}_{q}$-linear map, $\lambda(f+\mathbb{F}_{q}[x]_{\leq k-1})=0$ if and only if $\lambda(cf+\mathbb{F}_{q}[x]_{\leq k-1})=0$ for any $c\in\mathbb{F}_{q}^{*}$. Thus $\pi^{-1}(\Pi_{f})\supseteq\{c(f+\mathbb{F}_{q}[x]_{\leq k-1})\mid c\in\mathbb{F}_{q}^{*}\}$.
\end{enumerate}
This finishes the proof of the above equality for $\chi_{\mathcal{A}_{s,r,k}}(q).$\qed

The preceding reformulation connects the deep-hole
problem with counting questions studied in finite
geometry.
Metsch~\cite{Met06} investigated lower bounds for
the number of subspaces disjoint from a point set
of prescribed size.
The number of disjoint hyperplanes is also the
zero-intersection entry of the hyperplane spectrum
of a point set.
Standard counting identities for this spectrum
and its relation to the weight distribution of
projective codes are discussed in~\cite{HKK18}.
In our setting, the relevant point set is
\[
\{[\overline L_{\boldsymbol\mu}]
  \mid \boldsymbol\mu\in\mathcal M_k\}
\subseteq \operatorname{PG}(Q_k^*),
\]
whose structure is determined by the evaluation
points and the admissible multiplicities.
This provides a finite-geometric perspective on
the enumeration of NRT deep holes.

\section{Some explicit deep holes of HRS codes}\label{Some explicit constructions of NRT deep holes}

In this section, we present several explicit constructions of NRT deep holes of HRS codes and formulate a Cheng-Murray type conjecture for HRS codes under the NRT metric.

\medskip

We first give three explicit families of NRT deep holes. The first one is constructed from  degree-$k$ polynomials, in analogy to Corollary 1 in~\cite{CM07}. The second one is a Hermite-type generalization of the reciprocal-type deep holes $\frac{a}{(x-\beta)}+g(x)$ for Reed-Solomon and generalized Reed--Solomon codes~\cite{WH12,ZFL13}. The third family, which concerns the case $1\leq k\leq s-1$, is shown in Proposition~\ref{Conjecture2} to coincide  with the set of all NRT deep holes for $2\leq k\leq s-1$ under the Cheng--Murray type conjecture for HRS codes formulated below.
\begin{Theo}\label{When Ev(f) deep hole}
\begin{enumerate}[label=\textup{(\arabic*)}]
    \item  If $\deg (f(x))=k$, then $\operatorname{Ev}_{\alpha,V}(f)\in\operatorname{DH}(\operatorname{HRS}_k(\alpha,V)).$   Moreover, if $f(x)=ax^{k}+g(x)$ where $a\in \mathbb{F}_{q}^{*}$, $g\in\mathbb{F}_{q}[x]_{\leq k-1}$, then the set of all nearest codewords to $\operatorname{Ev}_{\alpha,V}(f)$ is $$\{\operatorname{Ev}_{\alpha,V}(h_{\boldsymbol{\nu}})\mid h_{\boldsymbol{\nu}}(x)=g(x)+a(x^{k}-M_{\boldsymbol{\nu}}(x)),\boldsymbol{\nu}\in\mathcal{M}_{k-1}\}.$$
    \item If there exist $\beta\in\mathbb{F}_{q}\backslash\{\alpha_{1},\dots,\alpha_{r}\}$, $a\in\mathbb{F}_{q}^{*}$ and $g\in\mathbb{F}_{q}[x]_{\leq k-1}$ such that $$(x-\beta)(f(x)-g(x))\equiv a \bmod \pi_{\alpha}(x)^{s},$$ then $$\operatorname{Ev}_{\alpha,V}(f)\in\operatorname{DH}(\operatorname{HRS}_k(\alpha,V)).$$
    \item Suppose that $s\geq 2$ and $1\leq k \leq s-1$. 
Let $$\mathcal{D}_{k,s}=\left\{\operatorname{Ev}_{\alpha,V}(ax^{k}+g(x)+\pi_{\alpha}(x)^{k}h(x))\ \middle| \ \begin{array}{c}
       a\in\mathbb{F}_{q}^{*},  g(x)\in\mathbb{F}_{q}[x]_{\leq k-1}\\
h(x)\in\mathbb{F}_{q}[x]_{\leq(s-k)r-1}   \\
      \text{with}\  h(\alpha_{j})\neq -a\eta_{j}^{-k}\ \text{for} \ j\in[r]
    \end{array}\right\}.$$ Then $$\mathcal{D}_{k,s}\subseteq\operatorname{DH}(\operatorname{HRS}_{k}(\alpha,V))$$ and$$\#\mathcal{D}_{k,s}=\begin{cases}
    (q-1)q^{s-1}, & r=1;\\
        (q-1)^{r+1}q^{k+(s-k-1)r}, & r\geq 2.
    \end{cases}$$
\end{enumerate}
\end{Theo}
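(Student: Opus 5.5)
For part (1), work directly with the characterization used in the proof of Theorem~\ref{Deep holes}: $\operatorname{Ev}_{\alpha,V}(f)$ is a deep hole if and only if no $h\in\mathbb{F}_q[x]_{\le k-1}$ satisfies $\sum_j\min\{\nu_{f-h}(\alpha_j),s\}\ge k+1$. If $\deg f=k$, then $f-h$ has degree exactly $k$, so $\sum_j\nu_{f-h}(\alpha_j)\le k$, exactly as in the second half of the proof of Theorem~\ref{NRT covering radius}. For the nearest codewords, write $f=ax^k+g$. A codeword $\operatorname{Ev}_{\alpha,V}(h)$ is nearest if and only if $\sum_j\min\{\nu_{f-h}(\alpha_j),s\}=k$. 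Since $\deg(f-h)=k$ with leading coefficient $a$, this forces $f-h=aM_{\boldsymbol\nu}$ for some $\boldsymbol\nu\in[0,s]^r$ with $\sum\nu_j=k$, i.e.\ $\boldsymbol\nu\in\mathcal M_{k-1}$. Solving gives $h=g+a(x^k-M_{\boldsymbol\nu})$, which lies in $\mathbb{F}_q[x]_{\le k-1}$. Conversely, each such $h$ attains distance $sr-k$.

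For part (2), argue by contradiction. Suppose $h\in\mathbb{F}_q[x]_{\le k-1}$ and $\boldsymbol\mu\in\mathcal M_k$ satisfy $M_{\boldsymbol\mu}\mid f-h$. Since $M_{\boldsymbol\mu}\mid\pi_\alpha^s$, reducing the hypothesis modulo $M_{\boldsymbol\mu}$ gives
\[
(x-\beta)(f-g)\equiv a\pmod{M_{\boldsymbol\mu}},
\]
and replacing $f$ by $h$ gives $(x-\beta)(h-g)\equiv a\pmod{M_{\boldsymbol\mu}}$. The polynomial $(x-\beta)(h-g)-a$ has degree at most $k$ and is divisible by $M_{\boldsymbol\mu}$, which has degree $k+1$, so it is zero. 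Thus $(x-\beta)(h-g)=a$, impossible for $a\ne0$ by degree (or evaluate at $\beta$). Hence no such $h$ exists, and the result follows from Theorem~\ref{Deep holes}. Strictly, I use Theorem~\ref{Deep holes} in the form ``no $\boldsymbol\mu$ with $f\bmod M_{\boldsymbol\mu}\in\mathbb{F}_q[x]_{\le k-1}$''.

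For part (3), set $f=ax^k+g+\pi_\alpha^k h$ and again suppose $M_{\boldsymbol\mu}\mid f-h_1$ with $\deg h_1\le k-1$ and $\boldsymbol\mu\in\mathcal M_k$. Since $\sum\mu_j=k+1>k$ and $r\mu_j$-type counting fails, the plan is to use that $k\le s-1$. Some index $j$ has $\mu_j\ge1$, but the key point is different: if every $\mu_j\le k$, then $M_{\boldsymbol\mu}\mid\pi_\alpha^k$, the term $\pi_\alpha^k h$ vanishes mod $M_{\boldsymbol\mu}$, and we are reduced to part (1). Otherwise some $\mu_j=k+1$ (note $k+1\le s$); then $\sum\mu_j=k+1$ forces $M_{\boldsymbol\mu}=(x-\alpha_j)^{k+1}$. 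Expand $f-h_1$ at $\alpha_j$ and look at the coefficient of $(x-\alpha_j)^k$. From $ax^k$ it is $a$. From $\pi_\alpha^k h$ it is $\eta_j^k h(\alpha_j)$, since $\pi_\alpha^k=(x-\alpha_j)^k(\eta_j+O(x-\alpha_j))^k$. From $g-h_1$ it is $0$, since that polynomial has degree below $k$. Divisibility by $(x-\alpha_j)^{k+1}$ would then require $a+\eta_j^k h(\alpha_j)=0$, which the hypothesis excludes.

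For the count, I will show that the map $(a,g,h)\mapsto$ (codeword) is injective. The polynomial $ax^k+g+\pi_\alpha^kh$ has degree at most $kr+(s-k)r-1=sr-1$, so Lemma~\ref{iso} applies. Then $(a,g)$ is recovered as the remainder mod $\pi_\alpha^k$ when $r\ge2$, since $kr>k$. The case $r=1$ needs separate care, because there $\pi_\alpha^k$ has degree $k$ and $x^k$ interacts with it: replace $x^k$ by $\pi_\alpha^k$ and absorb the difference into $g$, so the set becomes $\{g+\pi_\alpha^k h'\}$ with $h'(\alpha_1)\ne0$, giving $q^k(q-1)q^{s-k-1}$. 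For $r\ge2$, $h$ ranges over polynomials of degree less than $(s-k)r$. The values $h(\alpha_j)$ are $r$ independent coordinates, each avoiding one value, so the count is $(q-1)^r q^{(s-k)r-r}$. Multiplying by $(q-1)q^k$ for the choices of $a$ and $g$ gives the claim. The main obstacle is the injectivity and bookkeeping in this counting step, especially the $r=1$ degeneracy.
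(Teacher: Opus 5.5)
Your proposal is correct and follows essentially the same route as the paper. The parts line up as follows: the degree argument for (1); reduction modulo $M_{\boldsymbol\mu}\mid\pi_\alpha^s$ for (2); for (3), the split into the cases $M_{\boldsymbol\mu}\mid\pi_\alpha^k$ and $M_{\boldsymbol\mu}=(x-\alpha_j)^{k+1}$, with the decisive coefficient $a+\eta_j^k h(\alpha_j)$; and injectivity via Lemma~\ref{iso} for the count. Your $r=1$ reparametrization $g'+(x-\alpha_1)^k h'$ with $h'(\alpha_1)\neq 0$ is just a restatement of the paper's description of that set as $\{\sum_{i=0}^{s-1}c_i(x-\alpha_1)^i \mid c_k\neq 0\}$.
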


{\it Proof.} \begin{enumerate}[label=\textup{(\arabic*)}]
    \item If \(\deg f=k\), then $\deg(f\bmod M_{\boldsymbol{\mu}})=k$ for $ \boldsymbol{\mu}\in \mathcal{M}_{k}.$ So \(\operatorname{Ev}_{\alpha,V}(f)\in\operatorname{DH}(\operatorname{HRS}_k(\alpha,V))\)
follows immediately from
{\rm Theorem~\ref{Deep holes}}.

Since if we write $f(x)=ax^{k}+g(x)$ where $a\in \mathbb{F}_{q}^{*}$, $g\in\mathbb{F}_{q}[x]_{\leq k-1}$, then $h_{\boldsymbol{\nu}}(x)
:=
g(x)+a(x^k-M_{\boldsymbol{\nu}}(x))
\in\mathbb F_q[x]_{\le k-1}$ for $\boldsymbol{\nu}=(\nu_{j})_{j\in[r]}\in\mathcal{M}_{k-1}$.
So \(\operatorname{Ev}_{\alpha,V}(h_{\boldsymbol{\nu}})\in\operatorname{HRS}_k(\alpha,V)\).
Since
\begin{align*}
    d_{\operatorname{RT}}\bigl(
\operatorname{Ev}_{\alpha,V}(f),
\operatorname{Ev}_{\alpha,V}(h_{\boldsymbol{\nu}})
\bigr)
&=wt_{\operatorname{RT}}\bigl(\operatorname{Ev}_{\alpha,V}(f-h_{\boldsymbol{\nu}})\bigr) =wt_{\operatorname{RT}}\bigl(\operatorname{Ev}_{\alpha,V}(aM_{\boldsymbol{\nu}})\bigr)\\
&=sr-\sum_{j=1}^{r}\nu_j =sr-k,
\end{align*}
the codeword \(\operatorname{Ev}_{\alpha,V}(h_{\boldsymbol{\nu}})\) is a nearest
codeword to \(\operatorname{Ev}_{\alpha,V}(f)\).

For any nearest codeword \(\operatorname{Ev}_{\alpha,V}(h)\) to
\(\operatorname{Ev}_{\alpha,V}(f)\), where $h\in\mathbb F_q[x]_{\le k-1}$,
then
$\sum_{j=1}^{r}\min\{\nu_{f-h}(\alpha_{j}),s\}=k.$
Since \(\deg(f-h)=k\),
\begin{align*}
    k=\sum_{j=1}^{r}\min\{\nu_{f-h}(\alpha_{j}),s\}\leq \sum_{j=1}^{r}\nu_{f-h}(\alpha_{j})\leq \deg(f-h)=k,
\end{align*}
which implies $\nu_{f-h}(\alpha_{j})\leq s$ for $j\in [r]$. This implies $\boldsymbol{\nu}_{f-h}=(\nu_{f-h}(\alpha_{1}),\dots,\nu_{f-h}(\alpha_{r}))\in \mathcal{M}_{k-1}$ and $a^{-1}(f(x)-h(x))=M_{\boldsymbol{\nu}_{f-h}}(x).$
Thus
\[
h(x)=f(x)-aM_{\boldsymbol{\nu}_{f-h}}(x)=g(x)+a(x^k-M_{\boldsymbol{\nu}_{f-h}}(x))=h_{\boldsymbol{\nu}_{f-h}}.
\]
Thus the set of all nearest codewords to
\(\operatorname{Ev}_{\alpha,V}(f)\) is precisely
\[
\left\{
\operatorname{Ev}_{\alpha,V}(h_{\boldsymbol{\nu}})
\mid
h_{\boldsymbol{\nu}}(x)
=
g(x)+a(x^k-M_{\boldsymbol{\nu}}(x)),
\ 
\boldsymbol{\nu}\in\mathcal M_{k-1}
\right\}.
\]
    \item   
    Suppose that $\operatorname{Ev}_{\alpha,V}(f)\notin \operatorname{DH}(\operatorname{HRS}_k(\alpha,V))$, by {\rm Theorem~\ref{Deep holes}}, there exists $\boldsymbol{\mu}_{0}=(\mu_{0j})_{j\in[r]}\in \mathcal{M}_{k}$ such that $\deg(f \bmod M_{\boldsymbol{\mu}_{0}})\leq k-1.$
   Let $h_{0}=f\bmod M_{\boldsymbol{\mu}_{0}}\in\mathbb{F}_{q}[x]_{\leq k-1}.$ 
Thus $$(x-\beta)(h_{0}(x)-g(x))\equiv (x-\beta)(f(x)-g(x))\equiv a\bmod M_{\boldsymbol{\mu}_{0}}(x)$$ since $M_{\boldsymbol{\mu}_{0}}(x)\mid \pi_{\alpha}(x)^{s}.$ We know that $\deg((x-\beta)(h_{0}-g)-a)\leq k<k+1=\deg(M_{\boldsymbol{\mu}_{0}})$. This implies $$(x-\beta)(h_{0}(x)-g(x))=a.$$ But this contradicts to the assumption for $a\in\mathbb{F}_{q}^{*}$, thus $\operatorname{Ev}_{\alpha,V}(f)\in\operatorname{DH}(\operatorname{HRS}_k(\alpha,V)).$
\item Let $F(x)=ax^{k}+g(x)+\pi_{\alpha}(x)^{k}h(x)\in\mathbb{F}_{q}[x]_{\leq sr-1}.$ By Theorem~\ref{Deep holes}, it suffices to demonstrate that $\deg(F\bmod M_{\boldsymbol{\mu}})=k$ for $ \boldsymbol{\mu}\in \mathcal{M}_{k}.$

If $\mu_{j}\leq k $ for $j=1,\dots,r$, then $M_{\boldsymbol{\mu}}(x)\mid \pi_{\alpha}(x)^{k}$ and $$F(x)\equiv ax^{k}+g(x)\bmod M_{\boldsymbol{\mu}}(x).$$ Thus
$\deg(F\bmod M_{\boldsymbol{\mu}})=k.$ Otherwise, we suppose that there exists a unique $j_{0}\in[r]$ such that $\mu_{j_{0}}=k+1$ and $\mu_{j}=0$ for $j\neq j_{0}.$ Thus $M_{\boldsymbol{\mu}}=(x-\alpha_{j_{0}})^{k+1}$. We claim that $L_{\boldsymbol{\mu}}(F)=a+\eta_{j_{0}}^{k}h(\alpha_{j_{0}}).$ Clearly, we have $\prod_{\ell\neq j_{0}}(x-\alpha_{\ell})\equiv\eta_{j_{0}}\bmod (x-\alpha_{j_{0}}).$ This implies that $\prod_{\ell\neq j_{0}}(x-\alpha_{\ell})=f_{0}(x)(x-\alpha_{j_{0}})+\eta_{j_{0}}$ for some $f_{0}(x)\in\mathbb{F}_{q}[x]$. Then we have \begin{align*}
    \pi_{\alpha}(x)^{k} &= (f_{0}(x)(x-\alpha_{j_{0}})+\eta_{j_{0}})^{k}(x-\alpha_{j_{0}})^{k}\\
    &=g_{0}(x)(x-\alpha_{j_{0}})^{k+1}+\eta_{j_{0}}^{k}(x-\alpha_{j_{0}})^{k}\\
    &\equiv \eta_{j_{0}}^{k}(x-\alpha_{j_{0}})^{k}\bmod M_{\boldsymbol{\mu}}
\end{align*}
for some $g_{0}(x)\in\mathbb{F}_{q}[x]$. We may write $h(x)=h(\alpha_{j_{0}})+(x-\alpha_{j_{0}})h_{0}(x)$ for some $h_{0}(x)\in\mathbb{F}_{q}[x]$. Thus $\pi_{\alpha}(x)^{k}h(x)\equiv\eta_{j_{0}}^{k}h(\alpha_{j_{0}})(x-\alpha_{j_{0}})^{k} \bmod M_{\boldsymbol{\mu}}$ and $L_{\boldsymbol{\mu}}(F)=a+\eta^{k}_{j_{0}}h(\alpha_{j_{0}})$. By the definition of $\mathcal{D}_{k,s}$, we have $a+\eta_{j_{0}}^{k}h(\alpha_{j_{0}})\neq 0$  and $\deg(F \bmod M_{\boldsymbol{\mu}})=k$. So $\mathcal{D}_{k,s}\subseteq \operatorname{DH}(\operatorname{HRS}_{k}(\alpha,V)).$

It remains to show $\#\mathcal{D}_{k,s}$. First, we suppose that $r=1$.

In this case, we have $\pi_{\alpha}(x)=x-\alpha_{1}$, $\eta_{1}=1$ and
$$\mathcal{D}_{k,s}=\left\{\operatorname{Ev}_{\alpha,V}(ax^{k}+g(x)+(x-\alpha_{1})^{k}h(x))\ \middle| \ \begin{array}{c}
       a\in\mathbb{F}_{q}^{*},  g(x)\in\mathbb{F}_{q}[x]_{\leq k-1} \\ h(x)\in\mathbb{F}_{q}[x]_{\leq s-k-1}\\
       \text{with}\   
        h(\alpha_{1})\neq -a
    \end{array}\right\}.$$
We claim that \begin{gather*}
    \left\{
        ax^{k}+g(x)+\pi_{\alpha}(x)^{k}h(x)
        \;\middle|\;
        \begin{array}{c}
            a\in\mathbb{F}_{q}^{*},\quad
            g(x)\in\mathbb{F}_{q}[x]_{\leq k-1},\\
            h(x)\in\mathbb{F}_{q}[x]_{\leq s-k-1}\text{ with }h(\alpha_{1})\neq -a
        \end{array}
    \right\}
    \\[4pt]
    =
    \left\{
        \sum_{i=0}^{s-1}c_{i}(x-\alpha_{1})^{i}
        \;\middle|\;
        c_{k}\neq 0
    \right\}.
\end{gather*}
 By the preceding argument, it follows  that $\deg(F\bmod (x-\alpha_{1})^{k+1})=k$. This implies $F(x)\in\left\{\sum_{i=0}^{s-1}c_{i}(x-\alpha_{1})^{i}\ \middle| \ c_{k}\neq 0\right\}.$ Conversely, if $c(x)=\sum_{i=0}^{s-1}c_{i}(x-\alpha_{1})^{i}$ with $c_{k}\neq0$, then there exist $g(x)\in\mathbb{F}_{q}[x]_{\leq k-1}$ and $h(x)\in\mathbb{F}_{q}[x]_{\leq s-k-1}$ such that $c(x)=x^{k}+g(x)+(x-\alpha_{1})^{k}h(x)$. Consider the coefficient of $(x-\alpha_{1})^{k}$ in $c(x)$. Then $1+h(\alpha_{1})=c_{k}\neq 0.$ Thus $$c(x)\in\left\{ax^{k}+g(x)+(x-\alpha_{1})^{k}h(x)\ \middle| \ \begin{array}{c}
       a\in\mathbb{F}_{q}^{*}, g(x)\in\mathbb{F}_{q}[x]_{\leq k-1} \\ h(x)\in\mathbb{F}_{q}[x]_{\leq s-k-1}\\ \text{with}\   
        h(\alpha_{1})\neq -a
    \end{array}\right\}.$$ It is easy to see that $$\#\mathcal{D}_{k,s}=(q-1)q^{s-1}\quad \text{for} \  r=1.$$

    Now, we claim that if $$\operatorname{Ev}_{\alpha,V}(ax^{k}+g(x)+\pi_{\alpha}(x)^{k}h(x))=\operatorname{Ev}_{\alpha,V}(a'x^{k}+g'(x)+\pi_{\alpha}(x)^{k}h'(x)),$$ then $(a,g,h)=(a',g',h').$ By Lemma~\ref{iso}, we have $$(a-a')x^{k}+g(x)-g'(x)=\pi_{\alpha}(x)^{k}(h'(x)-h(x)).$$
    Note that $\deg((a-a')x^{k}+g(x)-g'(x))\leq k$. This implies $h'=h$. Otherwise, we would have $k\geq\deg((a-a')x^{k}+g(x)-g'(x))=\deg(\pi_{\alpha}(x)^{k}(h'(x)-h(x)))\geq 2k$ since $r\geq 2$.
    Since $\deg(g-g')\leq k-1$, we have $a=a'$ and $g=g'$.
    
    Then we show that the number of admissible triples $(a,g,h)$ is $(q-1)^{r+1}q^{k+(s-k-1)r}$. By construction, there are $q-1$ choices for $a\in\mathbb{F}_{q}^{*}$ and $q^{k}$ choices for $g(x)\in\mathbb{F}_{q}[x]_{\leq k-1}.$ Since $s-k\geq 1$, the linear evaluation map $\mathbb{F}_{q}[x]_{\leq(s-k)r-1}\rightarrow \mathbb{F}_{q}^{r}$, defined by $h(x)\mapsto (h(\alpha_{j}))_{j\in[r]}$, is surjective and every vector in $\mathbb{F}_{q}^{r}$ has exactly $q^{(s-k)r}/q^{r}=q^{(s-k-1)r}$ preimages. For any $a\in\mathbb{F}_{q}^{*}$, there are $(q-1)^{r}$ possible values of $(h(\alpha_{j}))_{j\in[r]}$. This implies also that there are $(q-1)^{r}q^{(s-k-1)r}$ choices for $h(x)\in\mathbb{F}_{q}[x]_{\leq (s-k)r-1}$ for any $a\in\mathbb{F}_{q}^{*}$. Thus $$\#\mathcal{D}_{k,s}=(q-1)\cdot q^{k}\cdot (q-1)^{r}q^{(s-k-1)r}=(q-1)^{r+1}q^{k+(s-k-1)r}\quad \text{for}\ r\geq 2.$$\qed
\end{enumerate}

Motivated by the Cheng--Murray deep hole conjecture for Reed-Solomon codes~\cite{CM07} and supported by Magma computations for all tested small parameters, we conjecture that:
\begin{Conj}\label{Generalized Cheng-Murray}
   If $2\nmid q$, $\{\alpha_{1},\dots,\alpha_{r}\}=\mathbb{F}_{q}$ and $\max\{2,s\}\leq k\leq sq-1$, then $$\operatorname{DH}(\operatorname{HRS}_{k}(\alpha,V))=\{\operatorname{Ev}_{\alpha,V}(f)\mid f\in\mathbb{F}_{q}[x],\,\deg(f)=k\}.$$
   
\end{Conj}

The following proposition shows that the complete set of NRT deep holes can be described for $\{\alpha_{1},\dots,\alpha_{r}\}=\mathbb{F}_{q}$ and $2\leq k\leq s-1$ under Conjecture~\ref{Generalized Cheng-Murray}. Consequently, under Conjecture~\ref{Generalized Cheng-Murray}, the NRT deep holes of $\operatorname{HRS}_{k}(\alpha,V)$ are completely determined for every $1\leq k\leq sq-1$ when $2\nmid q$ and $\{\alpha_{1},\dots,\alpha_{r}\}=\mathbb{F}_{q}$  by combining Proposition~\ref{Conjecture2} with Proposition~\ref{k=1}.
\begin{Prop}\label{Conjecture2}
   Suppose that $2\nmid q$, $\{\alpha_{1},\dots,\alpha_{r}\}=\mathbb F_q$ and $2\le k\le s-1$, and let $V^{(k)}$ be the submatrix of $V$ consisting of its first $k$ rows. If Conjecture~\ref{Generalized Cheng-Murray} holds for $\operatorname{HRS}_{k}(\alpha,V^{(k)})$,  then \begin{align*}
            \operatorname{DH}&(\operatorname{HRS}_{k}(\alpha,V))=\\
            &\left\{\operatorname{Ev}_{\alpha,V}(ax^{k}+g(x)+\pi_{\alpha}(x)^{k}h(x))\ \middle| \ \begin{array}{c}
       a\in\mathbb{F}_{q}^{*},  g(x)\in\mathbb{F}_{q}[x]_{\leq k-1}\\ h(x)\in\mathbb{F}_{q}[x]_{\leq(s-k)r-1}   \\
     \text{with}\   h(\alpha_{j})\neq -a\eta_{j}^{-k}\ \text{for} \ j\in[r]
    \end{array}\right\}.
        \end{align*}
\end{Prop}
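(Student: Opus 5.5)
The inclusion $\supseteq$ is Theorem~\ref{When Ev(f) deep hole}(3), since the right-hand side is exactly $\mathcal{D}_{k,s}$. So the work is the reverse inclusion, and I will obtain it from Theorem~\ref{deep hole of extend code}.

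Fix $U\in\operatorname{DH}(\operatorname{HRS}_k(\alpha,V))$ and write $U=\binom{X}{Z}$, where $X$ consists of the first $k$ rows. Since $k\le s-1$, every $f\in\mathbb{F}_q[x]_{\le k-1}$ satisfies $\partial^{(i)}f=0$ for $i\ge k$. Hence $\operatorname{HRS}_k(\alpha,V)$ is the $\mathbf{0}_{(s-k)\times r}$-row lengthening of $\mathscr{C}=\operatorname{HRS}_k(\alpha,V^{(k)})$, with $t=s-k$. By Theorem~\ref{deep hole of extend code}, $X\in\operatorname{DH}(\mathscr{C})$, and $\Delta_{s-k}(X,Z;C)\le d_{\operatorname{RT}}(X,C)-\rho_{\operatorname{RT}}(\mathscr{C})$ for all $C\in\mathscr{C}$. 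Here $\rho_{\operatorname{RT}}(\mathscr{C})=kr-k$ by Theorem~\ref{NRT covering radius}. The parameters satisfy $\max\{2,k\}=k\le kq-1$, so the assumed Conjecture~\ref{Generalized Cheng-Murray} for $\mathscr{C}$ gives $X=\operatorname{Ev}_{\alpha,V^{(k)}}(ax^k+g)$ with $a\ne0$ and $g\in\mathbb{F}_q[x]_{\le k-1}$.

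Let $f_U$ be the Hermite interpolant of $U$. Then $f_U-(ax^k+g)$ vanishes to order at least $k$ at every $\alpha_j$, because the first $k$ rows agree. It is therefore divisible by $\pi_\alpha^k$, so $f_U=ax^k+g+\pi_\alpha^k h$ with $\deg h\le sr-1-kr=(s-k)r-1$. It remains to show $h(\alpha_j)\ne -a\eta_j^{-k}$.

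For this I will use Theorem~\ref{Deep holes} with $\boldsymbol{\mu}=(k+1)\mathbf{e}_{j}$, which lies in $\mathcal{M}_k$ because $k+1\le s$. The computation in the proof of Theorem~\ref{When Ev(f) deep hole}(3) gives $L_{\boldsymbol{\mu}}(f_U)=a+\eta_j^k h(\alpha_j)$, and Corollary~\ref{Count Deep holes} requires this to be nonzero. This is exactly the stated condition. One could alternatively argue through the $\Delta$-condition, using the nearest codewords from Theorem~\ref{When Ev(f) deep hole}(1) with $\boldsymbol{\nu}=k\mathbf{e}_j$, but the remainder criterion is more direct.

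The main obstacle is only bookkeeping. One must check that the lengthening identification is legitimate, namely that the rows $k+1,\dots,s$ of codewords are zero and that the weights $V^{(k)}$ match. One must also check that the conjecture's hypotheses hold for $\mathscr{C}$ with $s$ replaced by $k$ and $r=q$.
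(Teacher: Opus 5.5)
Your proposal is correct and is essentially the paper's proof: the lengthening theorem and the assumed conjecture pin down the top $k$ rows as $\operatorname{Ev}_{\alpha,V^{(k)}}(ax^k+g)$, and then $f_U=ax^k+g+\pi_\alpha^k h$. The differences are minor. You get $h(\alpha_j)\neq -a\eta_j^{-k}$ from the remainder criterion with $\boldsymbol{\mu}=(k+1)\mathbf{e}_j$, whereas the paper gets the equivalent nonvanishing of the first row of $Z$ from the $\Delta$-inequality against the nearest codeword $ax^k+g-a(x-\alpha_{j})^k$ (exactly the alternative you mention); and your direct divisibility argument replaces the paper's $\pi_\alpha$-adic expansion and Hermite-uniqueness step.
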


{\it Proof.} By the assumption that Conjecture~\ref{Generalized Cheng-Murray} holds for $\operatorname{HRS}_{k}(\alpha,V^{(k)})$, it follows that $$\operatorname{DH}(\operatorname{HRS}_{k}(\alpha,V^{(k)}))=\{\operatorname{Ev}_{\alpha,V^{(k)}}(f)\mid f\in\mathbb{F}_{q}[x],\,\deg(f)=k\}.$$
Since $\partial^{(j)}g(x)=0$ for $g(x)\in\mathbb{F}_{q}[x]_{\leq k-1}$ and $j=k,\dots,s-1$, the code $\operatorname{HRS}_{k}(\alpha,V)$ is precisely the $\mathbf{0}^{(s-k)\times r}$-fold lengthening code of $\operatorname{HRS}_{k}(\alpha,V^{(k)})$. Then, by Theorem~\ref{deep hole of extend code}, we have
\begin{align*}
&\operatorname{DH}\bigl(\operatorname{HRS}_{k}(\alpha,V)\bigr)
\\[2pt]
&=
\left\{
    \begin{pmatrix}
        \operatorname{Ev}_{\alpha,V^{(k)}}(f)\\
        Z
    \end{pmatrix}
    \;\middle|\;
    \begin{aligned}
        &f\in\mathbb{F}_{q}[x]_{k},\quad
         Z\in\mathbb{F}_{q}^{(s-k)\times r},
        \\[3pt]
        &\Delta_{s-k}\bigl(
            \operatorname{Ev}_{\alpha,V^{(k)}}(f),Z;
            \operatorname{Ev}_{\alpha,V^{(k)}}(g)
        \bigr)
        \\
        &\qquad\leq
        \operatorname{wt}_{\mathrm{RT}}\bigl(
            \operatorname{Ev}_{\alpha,V^{(k)}}(f-g)
        \bigr)-(kr-k),
        \\[3pt]
        &\text{for all }g\in\mathbb{F}_{q}[x]_{\leq k-1}
    \end{aligned}
\right\}.
\end{align*}
We claim that the first row of $Z$ lies in $(\mathbb{F}_{q}^{*})^{1\times r}$ for all $\begin{pmatrix}
    \operatorname{Ev}_{\alpha,V^{(k)}}(f)\\
  Z
\end{pmatrix}\in\operatorname{DH}(\operatorname{HRS}_{k}(\alpha,V))$. Otherwise, there exists $j_{0}\in [r]$ such that $z_{1,j_{0}}=0.$ This implies $wt_{\operatorname{RT}}(Z_{j_{0}})\leq s-k-1.$ Now we may write $f(x)=ax^{k}+g(x)$ and define $g_{0}(x)=ax^{k}+g(x)-a(x-\alpha_{j_{0}})^{k}\in\mathbb{F}_{q}[x]_{\leq k-1}$ for some $a\in\mathbb{F}_{q}^{*}$ and $g(x)\in\mathbb{F}_{q}[x]_{\leq k-1}$. Then  $\operatorname{Ev}_{\alpha,V^{(k)}}(f)_{j_{0}}=\operatorname{Ev}_{\alpha,V^{(k)}}(g_{0})_{j_{0}}$ and \begin{align*}
    1&\leq s-k-wt_{\operatorname{RT}}(Z_{j_{0}})\leq \Delta_{s-k}(\operatorname{Ev}_{\alpha,V^{(k)}}(f),Z;\operatorname{Ev}_{\alpha,V^{(k)}}(g_{0}))\\
      &\leq  wt_{\operatorname{RT}}(\operatorname{Ev}_{\alpha,V^{(k)}}(f-g_{0}))-(kr-k)= (kr-k)-(kr-k) =0
\end{align*}
This is a contradiction. So our claim follows.
We define $F(x):=f_{\begin{pmatrix}
    \operatorname{Ev}_{\alpha,V^{(k)}}(f)\\
    Z
\end{pmatrix}}$. By Proposition~\ref{adic expansion}, there exist $F_{i}(x)\in\mathbb{F}_{q}[x]_{<r}$ for  $i\in[0,s-1]$ such that 
$$F(x)=F_{0}(x)+F_{1}(x)\pi_{\alpha}(x)+\cdots+F_{s-1}(x)\pi_{\alpha}(x)^{s-1}.$$
Since $\partial^{(i)}(F_{k}\pi_{\alpha}^{k}+\cdots+F_{s-1}\pi_{\alpha}^{s-1})(\alpha_{j})=0$ for all  $i\in[0,k-1]$ and $j\in[r]$, the first $k$ Hasse layers of $F$ are determined by $$R(x):=F_{0}(x)+F_{1}(x)\pi_{\alpha}(x)+\cdots+F_{k-1}(x)\pi_{\alpha}(x)^{k-1}.$$ By the definition of $F$, these first $k$ layers coincide with those of $f$. Since both $R$ and $f$ have degree $<kr$, the uniqueness of Hermite interpolation modulo $\pi_{\alpha}^{k}$ implies that $R=f$. Consequently, $$F=f+\pi_{\alpha}^{k}h,$$ for some $h(x)\in\mathbb{F}_{q}[x]_{\leq (s-k)r-1}$. This implies that$$\partial^{(k)}F(\alpha_{j})=\partial^{(k)}f(\alpha_{j})+\partial^{(k)}(\pi_{\alpha}^{k}h)(\alpha_{j})=a+\eta_{j}^{k}h(\alpha_{j})\quad \text{for } j\in[r]$$ and that $$0\neq z_{1,j}=v_{k+1,j}\partial^{(k)}F(\alpha_{j})=v_{k+1,j}(a+\eta_{j}^{k}h(\alpha_{j}))\quad \text{for } j\in[r].$$ This is equivalent to $h(\alpha_{j})\neq -a\eta_{j}^{-k}$ since $v_{k+1,j}\in\mathbb{F}_{q}^{*}.$ Thus, combining Theorem~\ref{When Ev(f) deep hole}\textup{(3)} with the above characterization, we have \begin{align*}
            \operatorname{DH}&(\operatorname{HRS}_{k}(\alpha,V))=\\
            &\left\{\operatorname{Ev}_{\alpha,V}(ax^{k}+g(x)+\pi_{\alpha}(x)^{k}h(x))\ \middle| \ \begin{array}{c}
       a\in\mathbb{F}_{q}^{*},  g(x)\in\mathbb{F}_{q}[x]_{\leq k-1}\\ h(x)\in\mathbb{F}_{q}[x]_{\leq (s-k)r-1}   \\
       \text{with} \  h(\alpha_{j})\neq -a\eta_{j}^{-k}\ \text{for} \ j\in[r]
    \end{array}\right\}.
        \end{align*}\qed

The next result follows immediately from  Corollary {\rm \ref{Count Deep holes}} and Theorem {\rm \ref{When Ev(f) deep hole}\textup{(1)}}.
\begin{Coro}
     $\# \operatorname{DH}(\operatorname{HRS}_{k}(\alpha,V))\ge \max\{q^{sr}-B_{s,r}(k+1)q^{sr-1},(q-1)q^{k}\}$
    where $B_{s,r}(k+1)=[z^{k+1}](1+z+\cdots+z^{s})^{r}=\sum_{i=0}^{\lfloor \frac{k+1}{s+1}\rfloor}(-1)^{i}\binom{r}{i}\binom{k+r-i(s+1)}{r-1}.$
\end{Coro}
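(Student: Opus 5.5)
The lower bound is the maximum of two separate estimates, so I will prove each one on its own and then take the larger.

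\emph{Second bound.} By Theorem~\ref{When Ev(f) deep hole}(1), every $\operatorname{Ev}_{\alpha,V}(f)$ with $\deg f=k$ is an NRT deep hole. Since $k\le sr-1$, these polynomials lie in $\mathbb{F}_q[x]_{\le sr-1}$. By Lemma~\ref{iso}, distinct polynomials of degree less than $sr$ give distinct words. There are exactly $(q-1)q^{k}$ polynomials of degree exactly $k$: the leading coefficient has $q-1$ choices and each of the lower $k$ coefficients has $q$ choices. Hence $\#\operatorname{DH}\ge (q-1)q^k$.

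\emph{First bound.} I use the description in Corollary~\ref{Count Deep holes}: $U$ is a deep hole if and only if $L_{\boldsymbol\mu}(f_U)\neq 0$ for every $\boldsymbol\mu\in\mathcal M_k$. Since $U\mapsto f_U$ is a bijection onto $\mathbb{F}_q[x]_{\le sr-1}$, it suffices to count polynomials in that space lying outside $\bigcup_{\boldsymbol\mu}\Ker(L_{\boldsymbol\mu})$. A union bound gives
\[
\#\operatorname{DH}\ge q^{sr}-\sum_{\boldsymbol\mu\in\mathcal M_k}\#\Ker(L_{\boldsymbol\mu}).
\]
Each $L_{\boldsymbol\mu}$ is a nonzero linear functional, since $L_{\boldsymbol\mu}(x^k)=1$: indeed $\deg M_{\boldsymbol\mu}=k+1$ gives $x^k\bmod M_{\boldsymbol\mu}=x^k$. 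Hence its kernel is a hyperplane with $q^{sr-1}$ elements. It remains to verify that $\#\mathcal M_k=B_{s,r}(k+1)$.

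\emph{Counting $\mathcal M_k$.} By definition, $\#\mathcal M_k$ counts the tuples in $[0,s]^r$ with sum $k+1$. This is exactly the coefficient $[z^{k+1}](1+z+\cdots+z^s)^r$. To get the closed form, I write
\[
(1+z+\cdots+z^s)^r=(1-z^{s+1})^r(1-z)^{-r},
\]
expand both factors by the binomial theorem, and extract the coefficient of $z^{k+1}$. This gives
\[
\sum_i(-1)^i\binom{r}{i}\binom{k+1-i(s+1)+r-1}{r-1},
\]
where the sum runs over $i\le\lfloor (k+1)/(s+1)\rfloor$. The binomial argument simplifies to $k+r-i(s+1)$, which matches the stated formula.

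The only point needing care is confirming that each $L_{\boldsymbol\mu}$ is nonzero, so that its kernel really has size $q^{sr-1}$; the computation with $x^k$ settles this. Taking the maximum of the two bounds completes the proof.
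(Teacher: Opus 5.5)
Your proposal is correct and follows the paper's route, which simply cites Corollary~\ref{Count Deep holes} and Theorem~\ref{When Ev(f) deep hole}\textup{(1)}. The first bound is a union bound over the $\#\mathcal M_k$ hyperplanes $\Ker(L_{\boldsymbol\mu})$, and the second counts the degree-$k$ polynomials, which give distinct words by Lemma~\ref{iso}. You also supply the details the paper leaves implicit: that $L_{\boldsymbol\mu}(x^k)=1$, so each kernel has exactly $q^{sr-1}$ elements, and the generating-function derivation of the closed form for $B_{s,r}(k+1)=\#\mathcal M_k$.
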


\section{Enumeration of NRT deep holes in special cases}\label{Enumeration of NRT deep holes in special cases}
In this section, we derive explicit enumeration formulas for the NRT deep holes of $\operatorname{HRS}_{k}(\alpha,V)$ in several special cases, namely $r=1,2$, $k=1,2$ or $k=sr-1,sr-2,sr-3.$ 

\medskip

\begin{Prop}
    If $r=1$, then $$\operatorname{DH}(\operatorname{HRS}_k(\alpha,V))=\{U\in\mathbb{F}_{q}^{s\times 1}\mid U_{k+1,1}\neq 0\},$$
    and $\#\operatorname{DH}(\operatorname{HRS}_k(\alpha,V))=(q-1)q^{s-1}.$
\end{Prop}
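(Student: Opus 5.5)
We specialize the polynomial-remainder criterion of Theorem~\ref{Deep holes} to $r=1$. With a single evaluation point $\alpha_1$, the index set is $\mathcal M_k=\{(k+1)\}$; this needs $k+1\le s$, which holds since $1\le k\le sr-1=s-1$. So the only modulus is $M_{(k+1)}(x)=(x-\alpha_1)^{k+1}$. Theorem~\ref{Deep holes} (equivalently Corollary~\ref{Count Deep holes}) then says that $U\in\operatorname{DH}(\operatorname{HRS}_k(\alpha,V))$ if and only if
\[
\deg\bigl(f_U \bmod (x-\alpha_1)^{k+1}\bigr)=k .
\]

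Next we express this in terms of the entries of $U$. Using the Hasse--Taylor expansion at $\alpha_1$, write $f_U(x)=\sum_{i=0}^{s-1}\partial^{(i)}f_U(\alpha_1)(x-\alpha_1)^i$. Its remainder modulo $(x-\alpha_1)^{k+1}$ is the truncation $\sum_{i=0}^{k}\partial^{(i)}f_U(\alpha_1)(x-\alpha_1)^i$. The coefficient of $x^k$ in this truncation is $\partial^{(k)}f_U(\alpha_1)$, because only the top term contributes to $x^k$. Hence the degree equals $k$ exactly when $\partial^{(k)}f_U(\alpha_1)\neq 0$. By the definition of $\operatorname{Ev}_{\alpha,V}$, we have $U_{k+1,1}=v_{k+1,1}\,\partial^{(k)}f_U(\alpha_1)$, and $v_{k+1,1}\in\mathbb F_q^*$. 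So the condition is equivalent to $U_{k+1,1}\neq 0$, which gives the stated set.

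For the count, Lemma~\ref{iso} shows that $U$ ranges freely over $\mathbb F_q^{s\times 1}$. We need $q-1$ choices for the entry $U_{k+1,1}$ and $q$ choices for each of the other $s-1$ entries, giving $(q-1)q^{s-1}$. As a consistency check, this agrees with $\#\mathcal D_{k,s}$ for $r=1$ in Theorem~\ref{When Ev(f) deep hole}(3). There is no serious obstacle here. The only points needing care are that $\mathcal M_k$ is the single multiplicity $(k+1)$, which requires $k\le s-1$, and the identification of the $x^k$-coefficient of the truncated expansion with $\partial^{(k)}f_U(\alpha_1)$.
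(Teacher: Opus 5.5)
Your proposal is correct and follows essentially the same route as the paper. You specialize Theorem~\ref{Deep holes} to the single multiplicity vector $(k+1)$, read off the $x^k$-coefficient $\partial^{(k)}f_U(\alpha_1)$ of the truncated Hasse--Taylor expansion, translate the condition into $U_{k+1,1}\neq 0$ via $v_{k+1,1}\neq 0$, and count $(q-1)q^{s-1}$. Your explicit check that $k\le s-1$ makes $\mathcal M_k=\{(k+1)\}$ a valid index set is a small clarification the paper leaves implicit.
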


{\it Proof.} Note that under the assumption, $\mathcal{M}_{k}=\{(k+1)\}.$ By {\rm Theorem~\ref{Deep holes}}, $U\in\operatorname{DH}(\operatorname{HRS}_k(\alpha,V))$ if and only if $\deg(f_{U}\bmod (x-\alpha_{1})^{k+1})=k.$ By the definition $f_{U}\bmod M_{\boldsymbol{\mu}}$, the Hasse-Taylor expansion of $f_{U}\bmod M_{\boldsymbol{\mu}}$ at $\alpha_{1}$ gives $f_{U} \bmod M_{\boldsymbol{\mu}}=\sum_{i=0}^{k}\partial^{(i)}f_{U}(\alpha_{1})(x-\alpha_{1})^{i}$. Therefore, the above equivalent condition holds if and only if  $\partial^{(k)}f_{U}(\alpha_{1})\neq 0,$ or equivalently, $$U_{k+1,1}=v_{k+1,1}\partial^{(k)}f_{U}(\alpha_{1})\neq 0.$$
    Thus $\operatorname{DH}(\operatorname{HRS}_k(\alpha,V))=\{U\in\mathbb{F}_{q}^{s\times 1}\mid U_{k+1,1}\neq 0\}$
    and $\#\operatorname{DH}(\operatorname{HRS}_k(\alpha,V))=(q-1)q^{s-1}.$\qed

\medskip

Before stating our result for $r=2$, we first present two lemmas that will be used in the proof. 
\begin{Lem}\label{Generalization Vandermonde determinant}\cite[Proposition 1]{Kra01}
    Let $X_{1},X_{2},\cdots,X_{n}$ be indeterminates. If $p_{1},p_{2},\dots,p_{n}$ are polynomials of the form $p_{j}(x)=a_{j}x^{j-1}+$lower terms, then $$\det_{1\leq i,j\leq n}(p_{j}(X_{i}))=a_{1}a_{2}\cdots a_{n}\prod_{1\leq i<j\leq n}(X_{j}-X_{i}).$$
\end{Lem}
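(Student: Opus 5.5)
The statement is the generalized Vandermonde determinant: for $p_j(x)=a_jx^{j-1}+\text{lower terms}$ we want
$\det_{1\le i,j\le n}(p_j(X_i))=a_1\cdots a_n\prod_{1\le i<j\le n}(X_j-X_i)$.
The plan is to reduce it to the classical Vandermonde determinant by column operations that exploit the triangular structure of the $p_j$ with respect to the monomial basis.

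\textbf{Step 1 (triangular change of basis).} Write $p_j(x)=\sum_{\ell=1}^{j}c_{\ell j}x^{\ell-1}$ with $c_{jj}=a_j$ and $c_{\ell j}=0$ for $\ell>j$. This gives the matrix factorization
\[
\bigl(p_j(X_i)\bigr)_{i,j}=\bigl(X_i^{\ell-1}\bigr)_{i,\ell}\cdot C,\qquad C=(c_{\ell j})_{\ell,j},
\]
where $C$ is upper triangular with diagonal entries $a_1,\dots,a_n$. By multiplicativity of the determinant,
\[
\det\bigl(p_j(X_i)\bigr)=\det\bigl(X_i^{\ell-1}\bigr)\cdot a_1\cdots a_n.
\]

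\textbf{Step 2 (classical Vandermonde).} It remains to prove $\det(X_i^{j-1})=\prod_{i<j}(X_j-X_i)$ in $\mathbb{Z}[X_1,\dots,X_n]$; we work over $\mathbb{Z}$ and then specialize to any commutative ring. We induct on $n$.

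For the induction step, perform the column operations $\mathrm{col}_j\leftarrow \mathrm{col}_j-X_1\,\mathrm{col}_{j-1}$ for $j=n,n-1,\dots,2$, in that order. The first row becomes $(1,0,\dots,0)$, and row $i\ge2$ becomes $(1,\,(X_i-X_1),\,(X_i-X_1)X_i,\,\dots,\,(X_i-X_1)X_i^{n-2})$. Expanding along the first row and pulling the factor $(X_i-X_1)$ out of each row $i\ge2$ leaves the Vandermonde determinant in $X_2,\dots,X_n$. The induction hypothesis then gives the full product.

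\textbf{Main obstacle.} There is no genuine obstacle. The only point needing care is the order of indices in the sign convention: with rows indexed by $i$, the factors must come out as $(X_j-X_i)$ for $i<j$, and the column-operation argument produces exactly $(X_i-X_1)$ at each stage, consistent with this. This also confirms that no extra sign arises from factoring $C$ on the right rather than the left.
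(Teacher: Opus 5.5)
The paper gives no proof of this lemma; it is quoted from Krattenthaler's survey, so there is no internal argument to compare against. Your proof is correct and complete, and it is the standard one.

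The factorization $\bigl(p_j(X_i)\bigr)_{i,j}=\bigl(X_i^{\ell-1}\bigr)_{i,\ell}\,C$, with $C$ upper triangular with diagonal $a_1,\dots,a_n$, reduces the claim to the classical Vandermonde determinant. Your right-to-left operations $\mathrm{col}_j\leftarrow\mathrm{col}_j-X_1\,\mathrm{col}_{j-1}$ then evaluate that determinant with the correct sign convention.

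Because you use multiplicativity of the determinant rather than clearing lower-order terms column by column, you never divide by any $a_j$. The identity therefore holds verbatim over an arbitrary commutative ring, so the detour through $\mathbb{Z}$ is harmless but unnecessary. This covers the paper's application over $\mathbb{Q}$, where $a_j=1/(j-1)!$ and the $X_i$ are specialized to the integers $k-\mu_1^{(i-1)}$.
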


For $r=2$, we define\begin{align*}
    h_{s,k}:=\left|\mathcal{M}_{k}\right|=\begin{cases}
            k+2, &1\leq k\leq s-1;\\
            2s-k, &s\leq k \leq 2s-1.
        \end{cases}
    \end{align*} and list the elements of $\mathcal{M}_{k}$ as $\mathcal{M}_{k}=\{\boldsymbol{\mu}^{(0)},\dots,\boldsymbol{\mu}^{(h_{s,k}-1)}\}$ in increasing order of their first coordinates, where $\boldsymbol{\mu}^{(i)}=(\mu_{1}^{(i)},k+1-\mu_{1}^{(i)})$ for $i\in[0,h_{s,k}-1].$

\begin{Lem}\label{Lmu nonsingular}
    Let $e_{\ell}=(x-\alpha_{1})^{k+\ell}+\mathbb{F}_{q}[x]_{\leq k-1}\in Q_{k}$ for $\ell \in[0,h_{s,k}-1]$. If $r=2$, then $$\overline{L}_{\boldsymbol{\mu}^{(i)}}(e_{\ell})=(\alpha_{2}-\alpha_{1})^{\ell}\binom{k+\ell-\mu_{1}^{(i)}}{\ell}\quad \text{for } \ell,i\in[0,h_{s,k}-1].$$Moreover, the matrix $\left(\overline{L}_{\boldsymbol{\mu}^{(i)}}(e_{\ell})\right)_{0\leq i,\ell\leq h_{s,k}-1}$ is nonsingular.
\end{Lem}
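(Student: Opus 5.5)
The plan is to compute $\overline{L}_{\boldsymbol{\mu}^{(i)}}(e_{\ell})$ directly by a two-step reduction, one factor of $M_{\boldsymbol{\mu}}$ at a time. Then the nonsingularity becomes an integer binomial determinant, which evaluates to $\pm 1$ by Lemma~\ref{Generalization Vandermonde determinant}.

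\emph{Evaluation.} Fix $\boldsymbol{\mu}=\boldsymbol{\mu}^{(i)}=(\mu_1,\nu)$ with $\nu=k+1-\mu_1$, and set $m=k+\ell-\mu_1$.

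First suppose $m\ge 0$ and $\nu\ge 1$. Write $(x-\alpha_1)^{k+\ell}=(x-\alpha_1)^{\mu_1}(x-\alpha_1)^{m}$. Let $R$ be the remainder of $(x-\alpha_1)^m$ modulo $(x-\alpha_2)^{\nu}$, so $\deg R\le \nu-1$. Then $(x-\alpha_1)^{\mu_1}R$ has degree at most $k<\deg M_{\boldsymbol{\mu}}$. It is congruent to $(x-\alpha_1)^{k+\ell}$ modulo $M_{\boldsymbol{\mu}}$, so it is exactly the remainder. Its $x^k$-coefficient is $[x^{\nu-1}]R$.

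To find $[x^{\nu-1}]R$, expand $(x-\alpha_1)^m=\bigl((x-\alpha_2)+(\alpha_2-\alpha_1)\bigr)^m$ binomially and keep the terms $(x-\alpha_2)^t$ with $t<\nu$. Only $t=\nu-1$ contributes to $x^{\nu-1}$. Its coefficient is
\[
\binom{m}{\nu-1}(\alpha_2-\alpha_1)^{m-\nu+1}.
\]
Since $m-\nu+1=\ell$, this equals $\binom{k+\ell-\mu_1}{\ell}(\alpha_2-\alpha_1)^{\ell}$, as claimed.

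The degenerate cases occur only when $\mu_1=k+1$, i.e.\ $\nu=0$.
\begin{itemize}
\item If $\ell\ge1$, then $M_{\boldsymbol{\mu}}=(x-\alpha_1)^{k+1}$ divides $(x-\alpha_1)^{k+\ell}$, so the value is $0=\binom{\ell-1}{\ell}$.
\item If $\ell=0$, the polynomial $(x-\alpha_1)^k$ is its own remainder and the value is $1$.
\end{itemize}
Here $\binom{-1}{0}$ must be read as $1$, i.e.\ via the polynomial $\binom{c+\ell}{\ell}$ in $c$; I would note this convention explicitly.

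\emph{Nonsingularity.} Factor $(\alpha_2-\alpha_1)^{\ell}\neq 0$ out of column $\ell$. The remaining matrix is $\bigl(p_\ell(c_i)\bigr)$, where $c_i=k-\mu_1^{(i)}$ and $p_\ell(c)=\binom{c+\ell}{\ell}=\frac{1}{\ell!}c^{\ell}+\cdots$ is a polynomial of degree $\ell$.

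The main obstacle is that $1/\ell!$ need not exist in characteristic $p$. To avoid this, I will compute the determinant over $\mathbb{Q}$, show it is an integer equal to $\pm1$, and then reduce modulo $p$. The entries are integers, since the binomial values are integers.

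By Lemma~\ref{Generalization Vandermonde determinant} over $\mathbb{Q}$, the determinant is
\[
\prod_{\ell=0}^{h_{s,k}-1}\frac{1}{\ell!}\prod_{0\le i<j\le h_{s,k}-1}(c_j-c_i).
\]
The first coordinates of $\mathcal{M}_k$ form a set of consecutive integers listed in increasing order, so $c_i=c_0-i$. Hence
\[
\prod_{i<j}(c_j-c_i)=\pm\prod_{i<j}(j-i)=\pm\prod_{\ell=0}^{h_{s,k}-1}\ell!.
\]
So the integer determinant is $\pm1$. Its reduction mod $p$ is therefore nonzero, and the matrix is nonsingular over $\mathbb{F}_q$.
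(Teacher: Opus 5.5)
Your proposal is correct and follows the paper's proof essentially step for step. You use the same two-step reduction (factor $(x-\alpha_1)^{\mu_1}$, reduce $(x-\alpha_1)^{k+\ell-\mu_1}$ modulo $(x-\alpha_2)^{k+1-\mu_1}$ by binomial expansion, with the $\binom{-1}{0}=1$ convention for $\mu_1=k+1$), and the same application of Lemma~\ref{Generalization Vandermonde determinant} over $\mathbb{Q}$ with $p_\ell(c)=\binom{c+\ell}{\ell}$ at the consecutive nodes $c_i=k-\mu_1^{(i)}$, giving determinant $\pm1$ before reducing mod $p$; your degree-plus-congruence check that $(x-\alpha_1)^{\mu_1}R$ is the remainder is just a slightly more explicit version of the paper's identification of $A(x)$.
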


{\it Proof.} By definition, we have
\begin{align*}
    \overline{L}_{\boldsymbol{\mu}^{(i)}}(e_{\ell})=[x^{k}]((x-\alpha_{1})^{k+\ell}\bmod (x-\alpha_{1})^{\mu_{1}^{(i)}}(x-\alpha_{2})^{k+1-\mu_{1}^{(i)}}).
\end{align*}

Now, let $R(x)=(x-\alpha_{1})^{k+\ell}\bmod (x-\alpha_{1})^{\mu_{1}^{(i)}}(x-\alpha_{2})^{k+1-\mu_{1}^{(i)}} \in\mathbb{F}_{q}[x]_{\leq k}.$
\begin{enumerate}
    \item If $\mu_{1}^{(i)}=k+1$, then $$R(x)=(x-\alpha_{1})^{k+\ell}\bmod (x-\alpha_{1})^{k+1}=\begin{cases}
       (x-\alpha_{1})^{k}, & \ell=0; \\
        0,    & \ell\geq 1.
    \end{cases}$$ 
    This implies$$\overline{L}_{\boldsymbol{\mu}^{(i)}}(e_{\ell})=\begin{cases}
        1, &\ell=0;\\
        0, &\ell\geq1.
    \end{cases}$$ This agrees with the desired formula. Indeed, since
$\mu_1^{(i)}=k+1$, we have
\[
(\alpha_2-\alpha_1)^\ell
\binom{k+\ell-\mu_1^{(i)}}{\ell}
=
(\alpha_2-\alpha_1)^\ell
\binom{\ell-1}{\ell},
\]
which equals $1$ for $\ell=0$, by the convention
$\binom{-1}{0}=1$, and equals $0$ for $\ell\ge1$.
    \item If $\mu_{1}^{(i)}\leq k$, then it follows from $(x-\alpha_{1})^{\mu_{1}^{(i)}}\mid(x-\alpha_{1})^{k+\ell}$ that $(x-\alpha_{1})^{\mu_{1}^{(i)}}\mid R(x)$. This yields $$R(x)=(x-\alpha_{1})^{\mu_{1}^{(i)}}A(x)$$ for some $A(x)\in\mathbb{F}_{q}[x]_{\leq k-\mu_{1}^{(i)}}.$ Then \begin{align*}
        A(x)&= (x-\alpha_{1})^{k+\ell-\mu_{1}^{(i)}} \bmod (x-\alpha_{2})^{k+1-\mu_{1}^{(i)}}\\
        &= (x-\alpha_{2}+\alpha_{2}-\alpha_{1})^{k+\ell-\mu_{1}^{(i)}} \bmod (x-\alpha_{2})^{k+1-\mu_{1}^{(i)}}\\
        &=\sum_{j=0}^{k+\ell-\mu_{1}^{(i)}}\binom{k+\ell-\mu_{1}^{(i)}}{j}(\alpha_{2}-\alpha_{1})^{j}(x-\alpha_{2})^{k+\ell-\mu_{1}^{(i)}-j} \bmod (x-\alpha_{2})^{k+1-\mu_{1}^{(i)}}\\
        &=\sum_{j=\ell}^{k+\ell-\mu_{1}^{(i)}}\binom{k+\ell-\mu_{1}^{(i)}}{j}(\alpha_{2}-\alpha_{1})^{j}(x-\alpha_{2})^{k+\ell-\mu_{1}^{(i)}-j}.
    \end{align*}
    Hence \begin{align*}
        R(x)&=(x-\alpha_{1})^{\mu_{1}^{(i)}}A(x)\\
        &=\sum_{j=\ell}^{k+\ell-\mu_{1}^{(i)}}\binom{k+\ell-\mu_{1}^{(i)}}{j}(\alpha_{2}-\alpha_{1})^{j}(x-\alpha_{1})^{\mu_{1}^{(i)}}(x-\alpha_{2})^{k+\ell-\mu_{1}^{(i)}-j}
    \end{align*}
     Thus $$\overline{L}_{\boldsymbol{\mu}^{(i)}}(e_{\ell})=[x^{k}]R(x)=\binom{k+\ell-\mu_{1}^{(i)}}{\ell}(\alpha_{2}-\alpha_{1})^{\ell}.$$
\end{enumerate}
Now we show that $\left(\overline{L}_{\boldsymbol{\mu}^{(i)}}(e_{\ell})\right)_{0\leq i,\ell\leq h_{s,k}-1}$ is nonsingular. By the formula above, we have \begin{align*}
    \det_{0\leq \ell,i\leq h_{s,k}-1}(\overline{L}_{\boldsymbol{\mu}^{(i)}}(e_{\ell}))&=\det_{0\leq \ell,i\leq h_{s,k}-1}((\alpha_{2}-\alpha_{1})^{\ell}\binom{k+\ell-\mu_{1}^{(i)}}{\ell})\\ &=(\alpha_{2}-\alpha_{1})^{\frac{(h_{s,k}-1)h_{s,k}}{2}}\det_{0\leq \ell,i\leq h_{s,k}-1}(\binom{k+\ell-\mu_{1}^{(i)}}{\ell}).
\end{align*}
In fact, we may apply Lemma~\ref{Generalization Vandermonde determinant} over $\mathbb{Q}$ to deduce this: In  Lemma~\ref{Generalization Vandermonde determinant} we define $n=h_{s,k}$, $X_{i}=k-\mu_{1}^{(i-1)}$ and $p_j(x)
=
\frac{1}{(j-1)!}\prod_{t=1}^{j-1}(x+t)$. Then \begin{align*}
    \det_{0\leq \ell,i\leq h_{s,k}-1}\left(\binom{k+\ell-\mu_{1}^{(i)}}{\ell}\right)&=\det_{0\leq i,\ell \leq h_{s,k}-1}\left(\binom{k+\ell-\mu_{1}^{(i)}}{\ell}\right)\tag*{$\det(A)=\det(A^{T})$}\\
    &=\det_{1\leq i,j\leq h_{s,k}}(p_{j}(X_{i}))\\
    &=\prod_{j=1}^{h_{s,k}}\frac{1}{(j-1)!}\cdot \prod_{1\leq i<j\leq h_{s,k}}(X_{j}-X_{i})\\
    &=\prod_{j=1}^{h_{s,k}}\frac{1}{(j-1)!}\cdot\prod_{1\leq i<j\leq h_{s,k}}-(j-i) \\
    &=(-1)^{\frac{(h_{s,k}-1)h_{s,k}}{2}}\prod_{j=1}^{h_{s,k}}\frac{1}{(j-1)!}\cdot\prod_{j=1}^{h_{s,k}}(j-1)!\\
    & =(-1)^{\frac{(h_{s,k}-1)h_{s,k}}{2}}\ \text{in}\ \mathbb{Q}.
\end{align*}
Thus $ \det_{0\leq \ell,i\leq h_{s,k}-1}(\overline{L}_{\boldsymbol{\mu}^{(i)}}(e_{\ell}))=(\alpha_{2}-\alpha_{1})^{\frac{(h_{s,k}-1)h_{s,k}}{2}}(-1)^{\frac{(h_{s,k}-1)h_{s,k}}{2}}=(\alpha_{1}-\alpha_{2})^{\frac{(h_{s,k}-1)h_{s,k}}{2}}\neq 0$ in $\mathbb{F}_{q}$, and therefore the matrix $\left(\overline{L}_{\boldsymbol{\mu}^{(i)}}(e_{\ell})\right)_{0\leq i,\ell\leq h_{s,k}-1}$ is nonsingular. \qed

\begin{Prop}\label{r=2}
    If $r=2$, then $$\#\operatorname{DH}(\operatorname{HRS}_k(\alpha,V))= q^{2s-h_{s,k}}(q-1)^{h_{s,k}},$$
    where \begin{align*}
    h_{s,k}=\begin{cases}
            k+2, &1\leq k\leq s-1;\\
            2s-k, &s\leq k \leq 2s-1.
        \end{cases}
    \end{align*}
\end{Prop}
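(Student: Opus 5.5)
We will count deep holes through Corollary~\ref{Count Deep holes}: $\#\operatorname{DH}(\operatorname{HRS}_k(\alpha,V))=q^{k}\cdot\#\bigl(Q_k\setminus\bigcup_{\boldsymbol{\mu}\in\mathcal M_k}\Ker(\overline L_{\boldsymbol{\mu}})\bigr)$. Here $\dim Q_k=2s-k$ and $|\mathcal M_k|=h_{s,k}$. The plan is to show that the $h_{s,k}$ linear functionals $\overline L_{\boldsymbol{\mu}^{(0)}},\dots,\overline L_{\boldsymbol{\mu}^{(h_{s,k}-1)}}$ on $Q_k$ are linearly independent. Once this is known, we choose a basis of $Q_k^{*}$ extending them and view them as the first $h_{s,k}$ coordinate functions. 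The complement of the union of their kernels is then the set of vectors whose first $h_{s,k}$ coordinates are all nonzero, and it has cardinality $(q-1)^{h_{s,k}}q^{2s-k-h_{s,k}}$. Multiplying by $q^{k}$ gives $q^{2s-h_{s,k}}(q-1)^{h_{s,k}}$, as claimed. Equivalently, the arrangement $\mathcal A_{s,r,k}$ is a Boolean arrangement, so that $\chi_{\mathcal A_{s,r,k}}(t)=(t-1)^{h_{s,k}}t^{2s-k-h_{s,k}}$.

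Linear independence follows from Lemma~\ref{Lmu nonsingular}. It suffices to exhibit $h_{s,k}$ vectors $e_0,\dots,e_{h_{s,k}-1}\in Q_k$ for which the matrix $\bigl(\overline L_{\boldsymbol{\mu}^{(i)}}(e_\ell)\bigr)_{i,\ell}$ is nonsingular. Indeed, a nontrivial relation $\sum_i c_i\overline L_{\boldsymbol{\mu}^{(i)}}=0$ would give a nontrivial vector in the left kernel of that matrix. The vectors $e_\ell=(x-\alpha_1)^{k+\ell}+\mathbb F_q[x]_{\le k-1}$ are legitimate elements of $Q_k$: we need $k+\ell\le sr-1=2s-1$ for all $\ell\le h_{s,k}-1$, that is, $k+h_{s,k}-1\le 2s-1$.

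We check this inequality in the two cases. If $1\le k\le s-1$, then $h_{s,k}=k+2$, and we need $2k+1\le 2s-1$, which holds because $k\le s-1$. If $s\le k\le 2s-1$, then $h_{s,k}=2s-k$, so $k+h_{s,k}-1=2s-1$ and the inequality holds with equality.

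We should also confirm the value of $h_{s,k}$. It counts the integers $\mu_1\in[0,s]$ with $k+1-\mu_1\in[0,s]$, that is, $\mu_1\in[\max(0,k+1-s),\min(s,k+1)]$. For $k\le s-1$ this interval is $[0,k+1]$, with $k+2$ elements. For $k\ge s$ it is $[k+1-s,s]$, with $2s-k$ elements. The proposition's formula therefore matches the definition of $h_{s,k}$ given before Lemma~\ref{Lmu nonsingular}.

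The only substantive input is the nonsingularity in Lemma~\ref{Lmu nonsingular}, which is already established, so no step is a real obstacle. The points to watch are, first, that $e_\ell$ lies in the domain $\mathbb F_q[x]_{\le sr-1}$, checked above. Second, the $\mathbb F_q$-dimension bookkeeping: $\dim Q_k=2s-k\ge h_{s,k}$, which is consistent because the independent functionals already force this. Finally, we pass from independence to the exact count of the complement of the union of kernels, which is a direct product count in adapted coordinates.
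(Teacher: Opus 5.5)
Your proposal is correct and takes the paper's approach: you prove linear independence of the $\overline{L}_{\boldsymbol{\mu}}$ from the nonsingular matrix of Lemma~\ref{Lmu nonsingular} on the vectors $e_\ell$, and then count the complement of the union of kernels exactly. The paper phrases that count as surjectivity of $\Phi:Q_k\to\mathbb{F}_q^{h_{s,k}}$ with fibers of dimension $2s-k-h_{s,k}$, which is equivalent to your adapted-coordinates count. Your explicit check that $k+h_{s,k}-1\le 2s-1$, so that each $e_\ell$ really lies in $Q_k$, is a detail the paper leaves implicit.
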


{\it Proof.} To prove Proposition~\ref{r=2}, we shall show the following two claims:\begin{enumerate}
    \item $\{\overline{L}_{\boldsymbol{\mu}}\mid \boldsymbol{\mu}\in \mathcal{M}_{k}\}$ is linearly independent over $\mathbb{F}_{q}.$ \\
    Let $e_{\ell}=(x-\alpha_{1})^{k+\ell}+\mathbb{F}_{q}[x]_{\leq k-1}\in Q_{k}$ for $\ell \in[0,h_{s,k}-1]$. If $\sum_{\boldsymbol{\mu}\in\mathcal{M}_{k}}c_{\boldsymbol{\mu}}\overline{L}_{\boldsymbol{\mu}}=0,$ where $ c_{\boldsymbol{\mu}}\in\mathbb{F}_{q}$, then    $$\sum_{i=0}^{h_{s,k}-1}c_{\boldsymbol{\mu}^{(i)}}\overline{L}_{\boldsymbol{\mu}^{(i)}}(e_{\ell})=0, \quad \ell=0,1,\ldots,h_{s,k}-1,$$
    i.e., $$\begin{pmatrix}
       \overline{L}_{\boldsymbol{\mu}^{(0)}}(e_{0}) & \dots & \overline{L}_{\boldsymbol{\mu}^{(h_{s,k}-1)}}(e_{0})\\
       \vdots & \ddots & \vdots \\
       \overline{L}_{\boldsymbol{\mu}^{(0)}}(e_{h_{s,k}-1}) &\dots & \overline{L}_{\boldsymbol{\mu}^{(h_{s,k}-1)}}(e_{h_{s,k}-1})
    \end{pmatrix}\begin{pmatrix}
        c_{\boldsymbol{\mu}^{(0)}}\\
        \vdots\\
        c_{\boldsymbol{\mu}^{(h_{s,k}-1)}}
    \end{pmatrix}=\begin{pmatrix}
        0\\
        \vdots\\
        0
    \end{pmatrix}.$$
     By Lemma~\ref{Lmu nonsingular},  we have $c_{\boldsymbol{\mu}^{(0)}}=\cdots=c_{\boldsymbol{\mu}^{(h_{s,k}-1)}}=0$ and $\{\overline{L}_{\boldsymbol{\mu}}\mid \boldsymbol{\mu}\in \mathcal{M}_{k}\}$ is linearly independent over $\mathbb{F}_{q}.$
    \item The linear map \begin{align*}
         \Phi:Q_{k}\longrightarrow\mathbb{F}_{q}^{h_{s,k}}
     \end{align*}
     defined by $\overline{f}\mapsto(\overline{L}_{\boldsymbol{\mu}}(\overline{f}))_{\boldsymbol{\mu}\in\mathcal{M}_{k}}$ is surjective.\\
     Indeed, suppose that $ \Img  \Phi\subsetneq \mathbb{F}_{q}^{h_{s,k}}$. Then there exists a nonzero linear map $\phi:\mathbb{F}_{q}^{h_{s,k}}\rightarrow\mathbb{F}_{q}$ such that $\phi(\mathbf{y})=0$ for all $\mathbf{y}\in\Img\Phi$. We may write $\phi((a_{\boldsymbol{\mu}})_{\boldsymbol{\mu}\in\mathcal{M}_{k}})=\sum_{\boldsymbol{\mu}\in\mathcal{M}_{k}}c_{\boldsymbol{\mu}}a_{\boldsymbol{\mu}}$ for some $\mathbf{0}\neq(c_{\boldsymbol{\mu}})_{\boldsymbol{\mu}\in\mathcal{M}_{k}}\in\mathbb{F}_{q}^{h_{s,k}}.$ Then $0=\phi(\Phi(\overline{f}))=\sum_{\boldsymbol{\mu}\in\mathcal{M}_{k}}c_{\boldsymbol{\mu}}\overline{L}_{\boldsymbol{\mu}}(\overline{f})$ for all $\overline{f}\in Q_{k}.$ This implies $\sum_{\boldsymbol{\mu}\in\mathcal{M}_{k}}c_{\boldsymbol{\mu}}\overline{L}_{\boldsymbol{\mu}}=0$, a contradiction to the linear independence established above.
\end{enumerate} Thus  $$\dim_{\mathbb{F}_{q}}(\Phi^{-1}(\mathbf{y}))=\dim_{\mathbb{F}_{q}}\Ker\Phi=\dim_{\mathbb{F}_{q}}Q_{k}-h_{s,k}=2s-k-h_{s,k}\quad \text{for}\ \mathbf{y}\in\mathbb{F}_{q}^{h_{s,k}}.$$
     It follows from the proof of Corollary~\ref{Count Deep holes} that
     \begin{align*}
         \# \operatorname{DH}(\operatorname{HRS}_{k}(\alpha,V))
& =
q^k\cdot
\#\left(
Q_k
\setminus
\bigcup_{\boldsymbol{\mu}\in\mathcal M_k}
\Ker(\overline L_{\boldsymbol{\mu}})
\right)\\
& =q^{k}\cdot \#\left(\Phi^{-1}((\mathbb{F}_{q}^{*})^{h_{s,k}})\right)\\
& =q^{k}\cdot (q-1)^{h_{s,k}} \cdot q^{2s-k-h_{s,k}}\\
& =q^{2s-h_{s,k}}(q-1)^{h_{s,k}}.
     \end{align*}\qed

\begin{Prop}\label{k=1}
   Suppose $k=1$. Then:
   \begin{enumerate}
        \item If $s=1$, then $$\operatorname{DH}(\operatorname{HRS}_1(\alpha,V))=\{U\in \mathbb{F}_{q}^{1\times r}\mid f_{U}(\alpha_{i})\neq f_{U}(\alpha_{j}) \ \text{for all} \ i\neq j\}$$
        and $\#\operatorname{DH}(\operatorname{HRS}_1(\alpha,V))= \displaystyle\frac{q!}{(q-r)!}.$
        \item If $s\geq 2$, then \[
        \operatorname{DH}(\operatorname{HRS}_1(\alpha,V))
        =
        \left\{
        U\in \mathbb{F}_{q}^{s\times r}\middle|\;
\begin{array}{l}
f_U(\alpha_i)\neq f_U(\alpha_j)\text{ for all }i\neq j,\\[1mm]
\partial^{(1)}f_U(\alpha_j)\neq0 \text{ for every }j\in[r]
\end{array}
\right\},
\]
and
\[
\#\operatorname{DH}(\operatorname{HRS}_1(\alpha,V))
=\frac{q!}{(q-r)!}(q-1)^r q^{r(s-2)}.
\]
       \end{enumerate}
\end{Prop}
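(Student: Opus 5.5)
We will apply Theorem~\ref{Deep holes} with $k=1$. In this case $\mathcal{M}_{1}$ consists of the vectors $\boldsymbol{\mu}\in[0,s]^{r}$ with $\sum_j\mu_j=2$. If $s=1$, these are exactly $\boldsymbol{\mu}=e_i+e_j$ with $i<j$. If $s\ge 2$, there are in addition the vectors $\boldsymbol{\mu}=2e_j$. So we must decide when $\deg(f_U\bmod M_{\boldsymbol{\mu}})=1$, i.e.\ when $L_{\boldsymbol{\mu}}(f_U)\neq 0$, for each of these two types.

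\emph{Type $M_{\boldsymbol{\mu}}=(x-\alpha_i)(x-\alpha_j)$.} The remainder of $f_U$ modulo this quadratic is the linear interpolant through $(\alpha_i,f_U(\alpha_i))$ and $(\alpha_j,f_U(\alpha_j))$. Its $x$-coefficient is
\[
\frac{f_U(\alpha_i)-f_U(\alpha_j)}{\alpha_i-\alpha_j}.
\]
Hence $L_{\boldsymbol{\mu}}(f_U)\neq 0$ if and only if $f_U(\alpha_i)\neq f_U(\alpha_j)$.

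\emph{Type $M_{\boldsymbol{\mu}}=(x-\alpha_j)^2$ (only when $s\ge 2$).} By the Hasse--Taylor expansion, $f_U\bmod (x-\alpha_j)^2=f_U(\alpha_j)+\partial^{(1)}f_U(\alpha_j)(x-\alpha_j)$. Its $x$-coefficient is $\partial^{(1)}f_U(\alpha_j)$, which is nonzero exactly when $u_{2,j}\neq 0$, since $v_{2,j}\neq 0$.

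Combining the two types with Theorem~\ref{Deep holes} gives both set descriptions in the statement.

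For the counting, use that $f_U(\alpha_j)=u_{1,j}/v_{1,j}$ and $\partial^{(1)}f_U(\alpha_j)=u_{2,j}/v_{2,j}$. Both conditions are conditions on individual entries of $U$, and by Lemma~\ref{iso} every $U\in\mathbb F_q^{s\times r}$ arises. The first row gives the condition that the values $u_{1,j}/v_{1,j}$ are pairwise distinct. Since $u_{1,j}\mapsto u_{1,j}/v_{1,j}$ is a bijection in each column, the first row contributes $q(q-1)\cdots(q-r+1)=q!/(q-r)!$ choices. When $s=1$ this is the whole count. When $s\ge2$, the second row contributes $(q-1)^r$ choices, and the remaining $s-2$ rows are unrestricted, contributing $q^{r(s-2)}$.

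The only point needing a little care is the remainder computation modulo $(x-\alpha_i)(x-\alpha_j)$. It is handled directly by linear Lagrange interpolation, so I expect no real obstacle.
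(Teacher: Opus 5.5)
Your proposal is correct and follows essentially the same route as the paper. Both arguments split $\mathcal{M}_1$ into the vectors with two entries equal to $1$ and those with one entry equal to $2$ (the latter only for $s\ge 2$), and read off the $x$-coefficient of the remainder via linear interpolation or the Hasse--Taylor expansion. Both then count entrywise; your explicit use of $f_U(\alpha_j)=u_{1,j}/v_{1,j}$, $\partial^{(1)}f_U(\alpha_j)=u_{2,j}/v_{2,j}$ and Lemma~\ref{iso} just makes the paper's ``independent choices'' step more explicit.
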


{\it Proof.} By Theorem~\ref{Deep holes}, $U\in\operatorname{DH}(\operatorname{HRS}_1(\alpha,V))$ if and only if $\deg(f_{U}\bmod M_{\boldsymbol{\mu}})=1$ for all $ \ \boldsymbol{\mu}\in\mathcal{M}_{1}.$ By the definition of $\mathcal{M}_{1}$, we have:
    \begin{enumerate}
        \item If $\boldsymbol{\mu}\in\mathcal{M}_{1}$ with $\mu_{j_{0}}=\mu_{j_{1}}=1$ for some $j_{0}\neq j_{1}$, then we have $M_{\boldsymbol{\mu}}(x)=(x-\alpha_{j_{0}})(x-\alpha_{j_{1}}).$ Then by the definition of $f_{U}\bmod M_{\boldsymbol{\mu}}$ and the Hasse-Taylor expansion of $f_{U}\bmod M_{\boldsymbol{\mu}}$ at $\alpha_{j_{0}}$, we have $f_{U}\bmod M_{\boldsymbol{\mu}}= f_{U}(\alpha_{j_{0}})+\partial^{(1)}(f_{U}\bmod M_{\boldsymbol{\mu}})(\alpha_{j_{0}})(x-\alpha_{j_{0}})$ with $(f_{U}\bmod M_{\boldsymbol{\mu}})(\alpha_{j_{1}})=f_{U}(\alpha_{j_{1}})$.   Thus  $\deg(f_{U}\bmod M_{\boldsymbol{\mu}})=1$ if and only if $$f_{U}(\alpha_{j_{0}})\neq f_{U}(\alpha_{j_{1}}).$$
        \item On the other hand, if $\boldsymbol{\mu}\in\mathcal{M}_{1}$ with $\mu_{j_{2}}=2$ for some $j_{2}\in[r]$, then we have $M_{\boldsymbol{\mu}}(x)=(x-\alpha_{j_{2}})^{2}$ and $f_{U}\bmod M_{\boldsymbol{\mu}}=f_{U}(\alpha_{j_{2}})+\partial^{(1)}f_{U}(\alpha_{j_{2}})(x-\alpha_{j_{2}})$ by the definition of $f_{U}\bmod M_{\boldsymbol{\mu}}$ and the Hasse-Taylor expansion of $f_{U}\bmod M_{\boldsymbol{\mu}}$ at $\alpha_{j_{2}}$. 
        Hence $\deg(f_{U}\bmod M_{\boldsymbol{\mu}})=1$ if and only if $$\partial^{(1)}f_{U}(\alpha_{j_{2}})\neq 0.
        $$
    \end{enumerate}
    
    Thus, if $s=1$, then  only the first case occurs. So $$\operatorname{DH}(\operatorname{HRS}_1(\alpha,V))=\{U\in \mathbb{F}_{q}^{1\times r}\mid f_{U}(\alpha_{i})\neq f_{U}(\alpha_{j}) \ \text{for all} \ i\neq j\}$$ and $$\#\operatorname{DH}(\operatorname{HRS}_1(\alpha,V))=q(q-1)\cdots(q-r+1)=\frac{q!}{(q-r)!}.$$

    If $s\geq 2$, then both cases occur and \begin{align*}
\operatorname{DH}\bigl(\operatorname{HRS}_1(\alpha,V)\bigr)
=
\left\{
    U\in\mathbb{F}_{q}^{s\times r}
    \;\middle|\;
    \begin{array}{l}
       f_U(\alpha_i)\neq f_U(\alpha_j),
       \partial^{(1)}f_U(\alpha_j)\neq 0,
        \\[3pt]
        \text{for all }i,j\in[r],\ i\neq j
    \end{array}
\right\}.
\end{align*}
        Thus there are $q(q-1)\cdots(q-r+1)=\frac{q!}{(q-r)!}$ choices for $(f_{U}(\alpha_{j}))_{j\in[r]}$ and $(q-1)^{r}$ choices for $(\partial^{(1)}f_U(\alpha_j))_{j\in[r]}$. The values $\partial^{(i)}f_{U}(\alpha_{j})$ may be chosen independently from $\mathbb{F}_{q}$ for $i\in[2,s-1]$ and $j\in[r]$ and hence contribute $q^{r(s-2)}$ possibilities. Therefore
        $$\#\operatorname{DH}(\operatorname{HRS}_1(\alpha,V))= \displaystyle\frac{q!}{(q-r)!}(q-1)^{r}q^{r(s-2)}.$$
        This finishes the proof.\qed

\medskip

The following lemma is  Segre's theorem.
\begin{Lem}\cite[Theorem I]{Seg55}\label{Seger theorem}
    If $\text{char}(\mathbb{F}_{q})$ is odd, every oval in $\operatorname{PG}(2,q)$ is a conic (i.e., can be represented by an equation of degree $2$).
\end{Lem}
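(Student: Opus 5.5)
The statement is Segre's classical theorem, quoted from \cite{Seg55}, and the paper uses it as a black box. If I were to prove it, I would follow Segre's original route: first the \emph{lemma of tangents}, then a coordinate argument showing every point of the oval lies on one fixed conic. Recall that an oval $\mathcal O\subseteq\operatorname{PG}(2,q)$ is a set of $q+1$ points, no three collinear. Counting lines through a point $P\in\mathcal O$ shows that exactly one line through $P$ meets $\mathcal O$ only in $P$; this is the tangent at $P$.

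\emph{Step 1 (lemma of tangents).} Choose three points of $\mathcal O$ as the triangle of reference, $A_0=(1,0,0)$, $A_1=(0,1,0)$, $A_2=(0,0,1)$. The tangents at these points then have equations
\[
x_1=k_0x_2,\qquad x_2=k_1x_0,\qquad x_0=k_2x_1,
\]
with $k_i\in\mathbb F_q^*$. I would show that $k_0k_1k_2=-1$.
\begin{itemize}
\item The lines through $A_0$ other than $A_0A_1$, $A_0A_2$ and the tangent are $x_1=c\,x_2$, where $c$ runs over $\mathbb F_q^*\setminus\{k_0\}$.
\item Each of these $q-2$ lines meets $\mathcal O$ in exactly one further point $P=(p_0,p_1,p_2)$, and that point has all $p_i\neq 0$.
\item Wilson's theorem in $\mathbb F_q$ gives $\prod_{c\in\mathbb F_q^*}c=-1$, hence
\[
k_0\prod_{P}\frac{p_1}{p_2}=-1,
\]
where $P$ runs over the $q-2$ points of $\mathcal O\setminus\{A_0,A_1,A_2\}$.
\item The analogous identities at $A_1$ and $A_2$ involve the same points $P$, with ratios $p_2/p_0$ and $p_0/p_1$.
\item Multiplying the three identities, each $P$ contributes $\frac{p_1}{p_2}\cdot\frac{p_2}{p_0}\cdot\frac{p_0}{p_1}=1$. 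This leaves $k_0k_1k_2=(-1)^3=-1$.
\end{itemize}
Only the oval property is used here, not the parity of $q$.

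\emph{Step 2 (perspectivity and a candidate conic).} The relation $k_0k_1k_2=-1$ says that the triangle $A_0A_1A_2$ and the triangle formed by the three tangents are in perspective: the lines joining each $A_i$ to the meet of the tangents at the other two vertices are concurrent. Choosing the unit point as the centre of perspectivity normalises $k_0=k_1=k_2=-1$. The conic $\mathcal C:\ x_0x_1+x_1x_2+x_2x_0=0$ passes through $A_0,A_1,A_2$ and has exactly these three tangents. A conic is determined by three points together with its tangents at them, so $\mathcal C$ is the unique candidate.

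\emph{Step 3 (every point lies on $\mathcal C$).} Let $P=(p_0,p_1,p_2)\in\mathcal O\setminus\{A_0,A_1,A_2\}$.
\begin{itemize}
\item Apply Step 1 to the triangles $A_0A_1P$, $A_1A_2P$ and $A_2A_0P$. Each application gives a multiplicative relation between the already normalised tangents at the $A_i$ and the unknown tangent at $P$.
\item Eliminate the tangent at $P$ between two of these relations. This should leave a polynomial identity in $p_0,p_1,p_2$, which I expect to be $p_0p_1+p_1p_2+p_2p_0=0$, i.e.\ $P\in\mathcal C$. The third relation serves as a consistency check.
\item Since $|\mathcal O|=|\mathcal C|=q+1$, it follows that $\mathcal O=\mathcal C$.
\end{itemize}

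The main obstacle is the bookkeeping in Step 3: rewriting the tangent lemma in the non-normalised coordinates of the triangles $A_iA_jP$ and eliminating the slope of the tangent at $P$ cleanly. Odd characteristic must enter exactly here. In characteristic $2$ all tangents pass through a common nucleus, and the elimination then fails to be nondegenerate, which is consistent with the existence of non-conic ovals for even $q$. I would first try to carry out the elimination via the perspectivity statement, which is the projectively invariant form of Step 1, rather than by raw coordinates, and invoke $2\neq 0$ at the point where the two relations would otherwise coincide.
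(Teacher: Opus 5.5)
The paper does not prove this lemma. It quotes Segre's Theorem~I and uses it only as a black box when computing $\#A_{\mathbb{F}_q}(q)$. Your outline is Segre's own argument, and it is sound. Step~1 is correct as written. In Step~2 you can skip the perspectivity language: the diagonal rescalings fixing $A_0,A_1,A_2$ let you put $k_0=k_1=-1$, after which $k_2=-1$ is forced.

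The one part you leave as an expectation, Step~3, does come out as you predict.

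\emph{Setup.} Write the tangent at $P=(p_0,p_1,p_2)$ as $u_0x_0+u_1x_1+u_2x_2=0$ with $u_0p_0+u_1p_1+u_2p_2=0$. The following quantities are nonzero:
\begin{itemize}
\item every $p_i$, since $P$ is not on a side of $A_0A_1A_2$;
\item every $u_i$, since $A_i$ is not on the tangent at $P$;
\item $p_j+p_\ell$ for $\{i,j,\ell\}=\{0,1,2\}$, since $P$ is not on the tangent $x_j+x_\ell=0$ at $A_i$.
\end{itemize}

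\emph{The relations.} In the coordinates $x=y_0A_0+y_1A_1+y_2P$, the tangents at $A_0$, $A_1$, $P$ become
\[
y_1=-(p_1+p_2)\,y_2,\qquad y_2=-\frac{y_0}{p_0+p_2},\qquad y_0=-\frac{u_1}{u_0}\,y_1 .
\]
So Step~1 for the triangle $A_0A_1P$ gives $u_0(p_0+p_2)=u_1(p_1+p_2)$. Cyclically, $A_1A_2P$ gives $u_1(p_0+p_1)=u_2(p_0+p_2)$. The relation from $A_2A_0P$ follows from these two, so it is only a consistency check, as you say.

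\emph{Elimination.} Hence $(u_0:u_1:u_2)=(p_1+p_2:p_0+p_2:p_0+p_1)$, which is the gradient of $x_0x_1+x_1x_2+x_2x_0$ at $P$. The incidence condition $u\cdot p=0$ then becomes
\[
2\,(p_0p_1+p_1p_2+p_2p_0)=0 .
\]
This is exactly where $q$ odd is needed. In characteristic $2$ the identity is vacuous, consistent with your remark about the nucleus.

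\emph{Conclusion.} You assert $|\mathcal C|=q+1$ without justification. It holds because $\mathcal C$ is nondegenerate for odd $q$: its polar matrix has determinant $2$. Then $\mathcal O\subseteq\mathcal C$ forces $\mathcal O=\mathcal C$.
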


\begin{Prop}
    Suppose $k=2$.  Let
    \[
    A_\alpha(r)
    =
    \left\{(y_1,\dots,y_r)\in\mathbb F_q^r\ \middle|\ \frac{y_{j}-y_{i}}{\alpha_{j}-\alpha_{i}}\neq\frac{y_{\ell}-y_{i}}{\alpha_{\ell}-\alpha_{i}}\ \text{for all} \ 1\leq i<j<\ell\leq r
    \right\}.
    \] Then the following statements hold.
    \begin{enumerate}
        \item If $s=1$, then \begin{align*}
\operatorname{DH}\bigl(\operatorname{HRS}_2(\alpha,V)\bigr)
=
\left\{
    U\in\mathbb{F}_{q}^{1\times r}
    \;\middle|\;
    \begin{aligned}
        &\frac{f_U(\alpha_j)-f_U(\alpha_i)}
              {\alpha_j-\alpha_i}
        \neq
        \frac{f_U(\alpha_\ell)-f_U(\alpha_i)}
              {\alpha_\ell-\alpha_i},
        \\[4pt]
        &\text{for all }1\leq i<j<\ell\leq r
    \end{aligned}
\right\},
\end{align*}
        and $\#\operatorname{DH}(\operatorname{HRS}_2(\alpha,V))= \#A_\alpha(r).$
        \item If $s=2,$ then \begin{align*}
\operatorname{DH}\bigl(\operatorname{HRS}_2(\alpha,V)\bigr)
=
\left\{
    U\in\mathbb{F}_{q}^{2\times r}
    \;\middle|\;
    \begin{array}{@{}l@{}}
        \dfrac{f_U(\alpha_j)-f_U(\alpha_i)}
              {\alpha_j-\alpha_i}
        \neq
        \dfrac{f_U(\alpha_\ell)-f_U(\alpha_i)}
              {\alpha_\ell-\alpha_i},
        \\[3pt]
        \qquad \text{for all }1\leq i<j<\ell\leq r;
        \\[6pt]
        f_U(\alpha_j)
        \neq f_U(\alpha_i)
        +\partial^{(1)}f_U(\alpha_i)(\alpha_j-\alpha_i),
        \\[3pt]
        \qquad \text{for all }i,j\in[r],\ i\neq j
    \end{array}
\right\},
\end{align*}
        and $\#\operatorname{DH}(\operatorname{HRS}_2(\alpha,V))= \#A_\alpha(r)(q-r+1)^{r}.$
        \item If $s\geq 3$, then \begin{align*}
\operatorname{DH}\bigl(\operatorname{HRS}_2(\alpha,V)\bigr)=
\left\{
    U\in\mathbb{F}_{q}^{s\times r}
    \;\middle|\;
    \begin{array}{@{}l@{}}
        \dfrac{f_U(\alpha_j)-f_U(\alpha_i)}
              {\alpha_j-\alpha_i}
        \neq
        \dfrac{f_U(\alpha_\ell)-f_U(\alpha_i)}
              {\alpha_\ell-\alpha_i},
        \\[3pt]
        \qquad \text{for all }1\leq i<j<\ell\leq r;
        \\[6pt]
        \partial^{(1)}f_U(\alpha_i)
        \neq
        \dfrac{f_U(\alpha_j)-f_U(\alpha_i)}
              {\alpha_j-\alpha_i},
        \\[3pt]
        \qquad \text{for all }i,j\in[r],\ i\neq j;
        \\[6pt]
        \partial^{(2)}f_U(\alpha_j)\neq 0,
        \quad \text{for all }j\in[r]
    \end{array}
\right\},
\end{align*}
        and $\#\operatorname{DH}(\operatorname{HRS}_2(\alpha,V))= \#A_\alpha(r)(q-r+1)^{r}(q-1)^{r}q^{r(s-3)}.$
    \end{enumerate}
    In particular, $\#A_{\alpha}(1)=q$, $\#  A_{\alpha}(2)=q^{2}$, $\#A_{\alpha}(3)=q^{2}(q-1)$, $\#A_{\alpha}(4)=q^{2}(q-1)(q-3)$
    and $$\#A_{\mathbb{F}_{q}}(q)=q^{2}(q-1) \quad\text{for} \  q\ \text{odd}.$$
\end{Prop}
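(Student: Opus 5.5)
The plan is to apply Theorem~\ref{Deep holes} with $k=2$ and translate each condition $L_{\boldsymbol\mu}(f_U)\neq 0$ into an explicit condition on the Hasse data of $f_U$, as in the proof of Proposition~\ref{k=1}. The vectors $\boldsymbol\mu\in\mathcal M_2$ (entries in $[0,s]$, sum $3$) are of three types:
\begin{enumerate}[label=\textup{(\roman*)}]
\item three entries equal to $1$ at positions $i<j<\ell$, which always occurs;
\item an entry $2$ at position $i$ and an entry $1$ at position $j\neq i$, which occurs only if $s\ge 2$;
\item a single entry $3$ at position $j$, which occurs only if $s\ge 3$.
\end{enumerate}
For each type, $f_U\bmod M_{\boldsymbol\mu}$ is the Hermite interpolant of degree $\le 2$ of the data of $f_U$ at the nodes of $M_{\boldsymbol\mu}$. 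Its $x^2$-coefficient is the corresponding generalized divided difference.
\begin{enumerate}[label=\textup{(\roman*)}]
\item The coefficient is $f[\alpha_i,\alpha_j,\alpha_\ell]$. It is nonzero exactly when the two slopes $\frac{f_U(\alpha_j)-f_U(\alpha_i)}{\alpha_j-\alpha_i}$ and $\frac{f_U(\alpha_\ell)-f_U(\alpha_i)}{\alpha_\ell-\alpha_i}$ differ.
\item The coefficient is $\bigl(f[\alpha_i,\alpha_j]-\partial^{(1)}f_U(\alpha_i)\bigr)/(\alpha_j-\alpha_i)$. It is nonzero iff $\partial^{(1)}f_U(\alpha_i)$ differs from the slope. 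Equivalently, $f_U(\alpha_j)\neq f_U(\alpha_i)+\partial^{(1)}f_U(\alpha_i)(\alpha_j-\alpha_i)$.
\item By the Hasse--Taylor expansion at $\alpha_j$, the coefficient is $\partial^{(2)}f_U(\alpha_j)$.
\end{enumerate}
Checking these three computations by writing the remainder in a Newton-type basis gives the three set descriptions in cases (1)--(3).

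For the counts, I use Lemma~\ref{iso}: since $V$ has nonzero entries, $U\mapsto(\partial^{(i)}f_U(\alpha_j))_{i\in[0,s-1],\,j\in[r]}$ is a bijection onto $\mathbb F_q^{s\times r}$. So the Hasse values can be chosen freely and counted layer by layer. The zeroth layer $y=(f_U(\alpha_j))_j$ must lie in $A_\alpha(r)$.

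If $s\ge2$, fix $i$. Then $\partial^{(1)}f_U(\alpha_i)$ must avoid the $r-1$ slopes from $\alpha_i$. These slopes are pairwise distinct precisely because $y\in A_\alpha(r)$ (the condition is symmetric in the roles of the three indices). This gives $q-r+1$ choices for each $i$.

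If $s\ge3$, each $\partial^{(2)}f_U(\alpha_j)$ has $q-1$ choices, and all higher layers are free, contributing $q^{r(s-3)}$. This yields the three formulas.

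For the small values, interpret $y\in A_\alpha(r)$ as saying that the points $(\alpha_j,y_j)$ form an arc (no three collinear) in $\mathrm{AG}(2,q)$ with distinct abscissae.
\begin{itemize}
\item $\#A_\alpha(1)=q$ and $\#A_\alpha(2)=q^2$ are immediate.
\item For $r=3$, $y_3$ must avoid the single point of the line through the first two points above $\alpha_3$, giving $q^2(q-1)$.
\item For $r=4$, given a non-collinear triple, the three lines through pairs of it meet the vertical line $x=\alpha_4$ in three distinct points. Two of these lines meet only at a common vertex, whose abscissa is not $\alpha_4$, and no such line is vertical. Hence $y_4$ has $q-3$ choices.
\end{itemize}

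The main obstacle is $\#A_{\mathbb F_q}(q)$. Here the $q$ affine points form a $q$-arc meeting every vertical line once. Adjoin the point at infinity $(0:1:0)$ of the vertical direction. No line through it meets the arc twice, so the result is a $(q+1)$-arc, i.e.\ an oval in $\mathrm{PG}(2,q)$. By Lemma~\ref{Seger theorem} ($q$ odd) it is a nondegenerate conic.

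It remains to show that such a conic, passing through $(0:1:0)$ and meeting each affine vertical line in exactly one affine point, is the graph $y=ax^2+bx+c$ with $a\neq0$. Write the conic's equation and impose these incidences. Its tangent at $(0:1:0)$ must be the line at infinity, since otherwise some vertical line would be tangent or the conic would have a second point at infinity. This forces the equation $y z=ax^2+bxz+cz^2$ up to scaling. Conversely, every such parabola is an arc, since a line meets a conic in at most $2$ points. Counting $a\in\mathbb F_q^*$ and $b,c\in\mathbb F_q$ gives $q^2(q-1)$. I expect the careful verification of this normal form to be the only delicate step.
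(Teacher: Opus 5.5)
Your proposal is correct and follows essentially the same route as the paper. You split $\mathcal M_2$ into the three multiplicity types, read off the $x^2$-coefficient of $f_U \bmod M_{\boldsymbol\mu}$ (you via Newton/divided differences, the paper via a Hasse--Taylor expansion plus one interpolation condition), and count layer by layer through Lemma~\ref{iso}. For $\#A_{\mathbb F_q}(q)$ you adjoin $(0:1:0)$ to get an oval, apply Segre's theorem, and use the tangent at infinity to force the parabola normal form, as the paper does; you also spell out the paper's implicit point that the $r-1$ slopes from each $\alpha_i$ are distinct because collinearity is symmetric.
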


{\it Proof.} By Theorem~\ref{Deep holes}, $U\in\operatorname{DH}(\operatorname{HRS}_2(\alpha,V))$ if and only if $\deg(f_{U}\bmod M_{\boldsymbol{\mu}})=2$ for all $\boldsymbol{\mu}\in\mathcal{M}_{2}.$ By the definition of $\mathcal{M}_{2}$, we have:
    \begin{enumerate}
        \item If $\boldsymbol{\mu}\in\mathcal{M}_{2}$ with $\mu_{j_{0}}=\mu_{j_{1}}=\mu_{j_{2}}=1$ for some pairwise distinct $j_{0},j_{1},j_{2}$, then $M_{\boldsymbol{\mu}}(x)=(x-\alpha_{j_{0}})(x-\alpha_{j_{1}})(x-\alpha_{j_{2}}).$ 
        Thus $\deg(f_{U}\bmod M_{\boldsymbol{\mu}})=2$ if and only if the three points $(\alpha_{j},f_{U}(\alpha_{j}))$ for $j=j_{0},j_{1},j_{2}$ are not collinear, equivalently,$$ \frac{f_{U}(\alpha_{j_{1}})-f_{U}(\alpha_{j_{0}})}{\alpha_{j_{1}}-\alpha_{j_{0}}}\neq\frac{f_{U}(\alpha_{j_{2}})-f_{U}(\alpha_{j_{0}})}{\alpha_{j_{2}}-\alpha_{j_{0}}}.$$
        \item Second, if $\boldsymbol{\mu}\in\mathcal{M}_{2}$ with $\mu_{j_{3}}=2$ and $\mu_{j_{4}}=1$, then  $M_{\boldsymbol{\mu}}=(x-\alpha_{j_{3}})^{2}(x-\alpha_{j_{4}}).$ By the definition of $f_{U}\bmod M_{\boldsymbol{\mu}}$, the Hasse-Taylor expansion of $f_{U}\bmod M_{\boldsymbol{\mu}}$ at $\alpha_{j_{3}}$ gives $$f_{U}\bmod M_{\boldsymbol{\mu}}=f_{U}(\alpha_{j_{3}})+\partial^{(1)}f_{U}(\alpha_{j_{3}})(x-\alpha_{j_{3}})+\partial^{(2)}(f_{U}\bmod M_{\boldsymbol{\mu}})(\alpha_{j_{3}})(x-\alpha_{j_{3}})^{2}$$ with $(f_{U}\bmod M_{\boldsymbol{\mu}})(\alpha_{j_{4}})=f_{U}(\alpha_{j_{4}})$. 
        This implies that $\deg(f_{U}\bmod M_{\boldsymbol{\mu}})\leq 1$ if and only if $$f_{U}(\alpha_{j_{4}})=f_{U}(\alpha_{j_{3}})+\partial^{(1)}f_{U}(\alpha_{j_{3}})(\alpha_{j_{4}}-\alpha_{j_{3}}).$$ Thus $\deg(f_{U}\bmod M_{\boldsymbol{\mu}})=2$ if and only if $$\partial^{(1)}f_{U}(\alpha_{j_{3}})\neq \frac{f_{U}(\alpha_{j_{4}})-f_{U}(\alpha_{j_{3}})}{\alpha_{j_{4}}-\alpha_{j_{3}}}.$$
        \item Third, if $\boldsymbol{\mu}\in\mathcal{M}_{2}$ with $\mu_{j_{5}}=3$, then $M_{\boldsymbol{\mu}}(x)=(x-\alpha_{j_{5}})^{3}$ and $f_{U}\bmod M_{\boldsymbol{\mu}}=f_{U}(\alpha_{j_{5}})+\partial^{(1)}f_{U}(\alpha_{j_{5}})(x-\alpha_{j_{5}})+\partial^{(2)}f_{U}(\alpha_{j_{5}})(x-\alpha_{j_{5}})^{2}$ by the definition of $f_{U}\bmod M_{\boldsymbol{\mu}}$ and the Hasse-Taylor expansion of $f_{U}\bmod M_{\boldsymbol{\mu}}$ at $\alpha_{j_{5}}$. Hence $\deg(f_{U}\bmod M_{\boldsymbol{\mu}})=2$ if and only if $$\partial^{(2)}f_{U}(\alpha_{j_{5}})\neq 0.$$
    \end{enumerate}

    Thus, if $s=1$, then only the first type occurs. So \begin{align*}
\operatorname{DH}\bigl(\operatorname{HRS}_2(\alpha,V)\bigr)
=
\left\{
    U\in\mathbb{F}_{q}^{1\times r}
    \;\middle|\;
    \begin{aligned}
        &\frac{f_U(\alpha_j)-f_U(\alpha_i)}
              {\alpha_j-\alpha_i}
        \neq
        \frac{f_U(\alpha_\ell)-f_U(\alpha_i)}
              {\alpha_\ell-\alpha_i},
        \\[4pt]
        &\text{for all }1\leq i<j<\ell\leq r
    \end{aligned}
\right\},
\end{align*} and $\#\operatorname{DH}(\operatorname{HRS}_2(\alpha,V))=\#A_{\alpha}(r)$ by the definition of $A_{\alpha}(r)$.

    If $s=2$, then the first two cases occur and \begin{align*}
\operatorname{DH}\bigl(\operatorname{HRS}_2(\alpha,V)\bigr)
=
\left\{
    U\in\mathbb{F}_{q}^{2\times r}
    \;\middle|\;
    \begin{array}{@{}l@{}}
        \dfrac{f_U(\alpha_j)-f_U(\alpha_i)}
              {\alpha_j-\alpha_i}
        \neq
        \dfrac{f_U(\alpha_\ell)-f_U(\alpha_i)}
              {\alpha_\ell-\alpha_i},
        \\[3pt]
        \qquad \text{for all }1\leq i<j<\ell\leq r;
        \\[6pt]
        f_U(\alpha_j)
        \neq f_U(\alpha_i)
        +\partial^{(1)}f_U(\alpha_i)(\alpha_j-\alpha_i),
        \\[3pt]
        \qquad \text{for all }i,j\in[r],\ i\neq j
    \end{array}
\right\}.
\end{align*}
   Clearly, there are $\#A_{\alpha}(r)$ choices for $(f_{U}(\alpha_{j}))_{j\in[r]}$.  For any $i\in[r]$, there are $q-(r-1)=q-r+1$ choices for $\partial^{(1)}f_{U}(\alpha_{i})$ by the first condition. Thus $$\#\operatorname{DH}(\operatorname{HRS}_2(\alpha,V))=\#A_{\alpha}(r)(q-r+1)^{r}.$$

    Finally, if $s\geq 3$, then all three cases occur and
   \begin{align*}
\operatorname{DH}\bigl(\operatorname{HRS}_2(\alpha,V)\bigr)=
\left\{
    U\in\mathbb{F}_{q}^{s\times r}
    \;\middle|\;
    \begin{array}{@{}l@{}}
        \dfrac{f_U(\alpha_j)-f_U(\alpha_i)}
              {\alpha_j-\alpha_i}
        \neq
        \dfrac{f_U(\alpha_\ell)-f_U(\alpha_i)}
              {\alpha_\ell-\alpha_i},
        \\[3pt]
        \qquad \text{for all }1\leq i<j<\ell\leq r;
        \\[6pt]
        \partial^{(1)}f_U(\alpha_i)
        \neq
        \dfrac{f_U(\alpha_j)-f_U(\alpha_i)}
              {\alpha_j-\alpha_i},
        \\[3pt]
        \qquad \text{for all }i,j\in[r],\ i\neq j;
        \\[6pt]
        \partial^{(2)}f_U(\alpha_j)\neq 0,
        \quad \text{for all }j\in[r]
    \end{array}
\right\}.
\end{align*}
    Thus there are $\#A_{\alpha}(r)$ choices for $(f_{U}(\alpha_{j}))_{j\in[r]}$, $(q-r+1)^{r}$ choices for $(\partial^{(1)}f_U(\alpha_j))_{j\in[r]}$ and $(q-1)^{r}$ choices for $(\partial^{(2)}f_{U}(\alpha_{j}))_{j\in[r]}$. The values $\partial^{(i)}f_{U}(\alpha_{j})$ may be chosen independently from $\mathbb{F}_{q}$ for $i\in[3,s-1]$ and $j\in[r]$ and hence contribute $q^{r(s-3)}$ possibilities. Therefore $$\#\operatorname{DH}(\operatorname{HRS}_2(\alpha,V))=\#A_{\alpha}(r)(q-r+1)^{r}(q-1)^{r}q^{r(s-3)}.$$

Next, we consider several cases of $A_{\alpha}(r)$.

    Clearly, $\#A_{\alpha}(1)=q$ and $ \#A_{\alpha}(2)=q^{2}.$ If $r=3$, then \begin{align*}
        \#A_{\alpha}(3)&=q^{3}-\#\{(y_{1},y_{2},y_{3})\in\mathbb{F}_{q}^{3}\mid (\alpha_{3}-\alpha_{2})y_{1}-(\alpha_{3}-\alpha_{1})y_{2}+(\alpha_{2}-\alpha_{1})y_{3}=0\}\\
    &=q^{3}-q^{2} \tag*{$\alpha_{1}$ and $\alpha_{2},\alpha_{3}$ are pairwise distinct}\\
    &=q^{2}(q-1).
    \end{align*}
    If $r=4$, then \begin{align*}
        A_{\alpha}(4)=\{(y_{1},y_{2},y_{3},y_{4})\in\mathbb{F}_{q}^{4}\mid &(y_{1},y_{2},y_{3})\in A_{\alpha}(3),\\
        &y_{4}\neq y_{i}+(y_{j}-y_{i})\frac{\alpha_{4}-\alpha_{i}}{\alpha_{j}-\alpha_{i}}\ \text{for all}\ 1\leq i<j\leq 3\}.
    \end{align*} 
    We claim that if $ y_{i}+(y_{j}-y_{i})\frac{\alpha_{4}-\alpha_{i}}{\alpha_{j}-\alpha_{i}}= y_{i'}+(y_{j'}-y_{i'})\frac{\alpha_{4}-\alpha_{i'}}{\alpha_{j'}-\alpha_{i'}}$, then $(i,j)=(i',j').$ Note that $y_{i}+(y_{j}-y_{i})\frac{\alpha_{4}-\alpha_{i}}{\alpha_{j}-\alpha_{i}}=y_{j}+(y_{i}-y_{j})\frac{\alpha_{4}-\alpha_{j}}{\alpha_{i}-\alpha_{j}}$. Since $1\leq i<j\leq 3$ and $1\leq i'<j'\leq 3$, there exists $a\in\{i,j\}\cap \{i',j'\}$. So we may write $\{a,b\}=\{i,j\}$ and $\{a,c\}=\{i',j'\}.$ Then we have $$y_{a}+(y_{b}-y_{a})\frac{\alpha_{4}-\alpha_{a}}{\alpha_{b}-\alpha_{a}}=y_{a}+(y_{c}-y_{a})\frac{\alpha_{4}-\alpha_{a}}{\alpha_{c}-\alpha_{a}}.$$ This implies $\frac{y_{b}-y_{a}}{\alpha_{b}-\alpha_{a}}=\frac{y_{c}-y_{a}}{\alpha_{c}-\alpha_{a}}.$ By the definition of $A_{\alpha}(r)$, we have $b=c$ and $(i,j)=(i',j')$. Thus for any $(y_{1},y_{2},y_{3})\in A_{\alpha}(3)$, there are $q-3$ choices for $y_{4}$ and $$\#A_{\alpha}(4)=q^{2}(q-1)(q-3).$$
Finally, If $r=q$ then \begin{align*}
\#A_{\mathbb{F}_{q}}(q)
=
\#\left\{
    f:\mathbb{F}_{q}\rightarrow\mathbb{F}_{q}
    \;\middle|\;
    \begin{array}{@{}l@{}}
        \Gamma_f
        :=\{(x,f(x))\mid x\in\mathbb{F}_{q}\}
        \ \text{contains no}\\
        \text{three collinear points}
    \end{array}
\right\}.
\end{align*}
We claim that \begin{align*}
    &\{f:\mathbb{F}_{q}\rightarrow\mathbb{F}_{q}\mid \Gamma_{f}:=\{(x,f(x))\mid x\in\mathbb{F}_{q}\}\ \text{contains no three collinear points}\}\\
    =&\{f(x)=ax^{2}+bx+c\mid a\in\mathbb{F}_{q}^{*},b,c\in\mathbb{F}_{q}\}.
\end{align*}
If $\Gamma_{f}:=\{(x,f(x))\mid x\in\mathbb{F}_{q}\}$ contains no three collinear points, then $\{[x,f(x),1]\mid x\in\mathbb{F}_{q}\}\cup\{[0,1,0]\}$ is an oval in $\operatorname{PG}(2,q)$.  Since $q$ is odd, Lemma~\ref{Seger theorem} implies that 
there exist $A,B,C,D,E,F\in\mathbb{F}_{q}$, not all zero, such that \begin{align*}
&\left\{
    [X:Y:Z]
    \;\middle|\;
    AX^{2}+BXZ+CZ^{2}+DYZ+EXY+FY^{2}=0
\right\}
\\[3pt]
=&
\left\{
    [x:f(x):1]
    \;\middle|\;
    x\in\mathbb{F}_{q}
\right\}
\cup \bigl\{[0:1:0]\bigr\}.
\end{align*} Since $[0,1,0]$ lies on this conic, we have $F=0$. Moreover, since $[0,1,0]$ is the unique point of the conic on the line $Z=0$, while $[E,-A,0]$ also lies on the conic, it follows that $E=0$. Therefore we obtain $Ax^{2}+Bx+C+Df(x)=0$ for all $x\in\mathbb{F}_{q}.$ We suppose $D=0$. It follows from $Ax^{2}+Bx+C=0$ for all $x\in\mathbb{F}_{q}$ and $q\geq 3$ that $A=B=C=0$, a contradiction to the assumption of $A,B,C,D,E,F$. Hence $f(x)=-\frac{A}{D}x^{2}-\frac{B}{D}x-\frac{C}{D}$. Suppose that $A=0$, then we have $\begin{vmatrix}
    1 & 0 & -\frac{C}{D}\\
    1 & 1 & -\frac{B}{D}-\frac{C}{D}\\
    1 & -1 & \frac{B}{D}-\frac{C}{D}
\end{vmatrix}=0$. This implies that $(0,f(0)),(1,f(1))$ and $(-1,f(-1))$ are collinear, a contradiction to the assumption of $\Gamma_{f}$. Thus $f(x)=ax^{2}+bx+c$ for some $a\in\mathbb{F}_{q}^{*},b,c\in\mathbb{F}_{q}$. Conversely, if $f=ax^{2}+bx+c$ for $a\in\mathbb{F}_{q}^{*},b,c\in\mathbb{F}_{q}$ then $\Gamma_{f}$ has no three collinear points. So our claim
follows.  Therefore $$\#A_{\mathbb{F}_{q}}(q)=q^{2}(q-1)\quad \text{for}\ q\ \text{odd}.$$\qed

\begin{Prop}
    If $k=sr-1$, then $$\operatorname{DH}(\operatorname{HRS}_{sr-1}(\alpha,V))=\mathbb{F}_{q}^{s\times r}\backslash \operatorname{HRS}_{sr-1}(\alpha,V)$$ and $\#\operatorname{DH}(\operatorname{HRS}_{sr-1}(\alpha,V))=(q-1)q^{sr-1}.$
\end{Prop}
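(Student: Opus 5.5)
The plan is to argue directly from Theorem~\ref{NRT covering radius} and the definition of deep holes. For $k=sr-1$ the covering radius is $sr-k=1$. Hence $U\in\operatorname{DH}(\operatorname{HRS}_{sr-1}(\alpha,V))$ exactly when $d_{\operatorname{RT}}(U,\operatorname{HRS}_{sr-1}(\alpha,V))=1$.

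Every $U$ satisfies $d_{\operatorname{RT}}(U,\operatorname{HRS}_{sr-1}(\alpha,V))\le 1$. Moreover, the distance is $0$ precisely when $U$ lies in the code. So the deep holes are exactly the words outside the code, which gives
\[
\operatorname{DH}(\operatorname{HRS}_{sr-1}(\alpha,V))=\mathbb{F}_q^{s\times r}\setminus\operatorname{HRS}_{sr-1}(\alpha,V).
\]
As a cross-check, one can use Theorem~\ref{Deep holes}. Here $\mathcal{M}_{sr-1}=\{(s,\dots,s)\}$ and $M_{\boldsymbol{\mu}}=\pi_\alpha^s$, so the criterion reads $\deg(f_U)=sr-1$. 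By Lemma~\ref{iso}, this means $f_U\notin\mathbb{F}_q[x]_{\le sr-2}$, that is, $U\notin\operatorname{HRS}_{sr-1}(\alpha,V)$.

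For the count, Lemma~\ref{iso} shows the code has dimension $sr-1$, hence $q^{sr-1}$ elements. Therefore
\[
\#\operatorname{DH}(\operatorname{HRS}_{sr-1}(\alpha,V))=q^{sr}-q^{sr-1}=(q-1)q^{sr-1}.
\]
This also agrees with Corollary~\ref{Count Deep holes}: the arrangement consists of a single hyperplane in the one-dimensional space $Q_{sr-1}$, so $\chi(q)=q-1$.

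No step is a real obstacle. The only point needing care is that the covering radius equals $1$ uses $k=sr-1\le sr$, which holds.
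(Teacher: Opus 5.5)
Your proof is correct, and your main argument differs from the paper's. The paper's proof is exactly your ``cross-check''. It applies the remainder criterion of Theorem~\ref{Deep holes}, notes that $\mathcal{M}_{sr-1}=\{(s,\dots,s)\}$, so that $f_U\bmod \pi_\alpha^s=f_U$, and concludes that $U$ is a deep hole if and only if $\deg f_U=sr-1$, i.e.\ $U\notin\operatorname{HRS}_{sr-1}(\alpha,V)$. The count $q^{sr}-q^{sr-1}$ is then obtained exactly as you do.

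Your primary route uses only Theorem~\ref{NRT covering radius}. Once the covering radius equals $1$, the deep holes of any code are precisely its non-codewords. So Hermite interpolation and the remainders $f_U\bmod M_{\boldsymbol{\mu}}$ are not needed, and the statement becomes an instance of a completely general fact. This is the shorter and more conceptual argument here.

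The paper's route buys two things. First, it gives the explicit description $\deg f_U=sr-1$ in terms of the interpolation polynomial. Second, it is uniform with the neighbouring cases $k=sr-2$ and $k=sr-3$. There the covering radius is $2$ or $3$, your shortcut no longer applies, and one has to analyse the top coefficients of $f_U\bmod M_{\boldsymbol{\mu}}$ exactly as in this case.

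One small correction to your remark about ``the only point needing care''. The implicit hypothesis in both arguments is really $sr\ge 2$, so that $k=sr-1\ge 1$ is an admissible dimension; $k\le sr$ is automatic.
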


{\it Proof.} Note that under the assumption, $\mathcal{M}_{sr-1}=\{(s,s,\dots,s)\}$. By Theorem~\ref{Deep holes}, $U\in\operatorname{DH}(\operatorname{HRS}_{sr-1}(\alpha,V))$ if and only if $\deg(f_{U}\bmod \pi_{\alpha}(x)^{s})=sr-1.$  Since $f_{U}\in\mathbb{F}_{q}[x]_{\leq sr-1}$, we have $f_{U}\bmod M_{\boldsymbol{\mu}}=f_{U}$. This implies that $U\in \operatorname{DH}(\operatorname{HRS}_{sr-1} (\alpha,V))$ if and only if $\deg(f_{U})=sr-1$. Thus $$\operatorname{DH}(\operatorname{HRS}_{sr-1}(\alpha,V))=\mathbb{F}_{q}^{s\times r}\backslash \operatorname{HRS}_{sr-1}(\alpha,V),$$ and $$\#\operatorname{DH}(\operatorname{HRS}_{sr-1}(\alpha,V))=q^{sr}-q^{sr-1}=(q-1)q^{sr-1}.$$\qed

\begin{Prop}
    If $k=sr-2$, then\begin{align*}
        \operatorname{DH}(\operatorname{HRS}_{sr-2}(\alpha,V))=\{&U\in\mathbb{F}_{q}^{s\times r}\mid f_{U}(x)=ax^{sr-1}+bx^{sr-2}+g(x) \ \text{with} \\  
        &  g(x)\in\mathbb{F}_{q}[x]_{\leq sr-3} \ \text{and} \ a(\alpha_{j}-s\sum_{\ell=1}^{r}\alpha_{\ell})\neq b\ \text{for all} \ j\in [r]\}
    \end{align*}
    and $\#\operatorname{DH}(\operatorname{HRS}_{sr-2}(\alpha,V))=(q-1)(q-r+1)q^{sr-2}.$
\end{Prop}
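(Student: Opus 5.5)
We apply Theorem~\ref{Deep holes} with $k=sr-2$. Here $\mathcal{M}_{sr-2}$ consists of the $r$ vectors $\boldsymbol{\mu}^{(j)}=(s,\dots,s,s-1,s,\dots,s)$, with $s-1$ in position $j$. The corresponding moduli are $M_{\boldsymbol{\mu}^{(j)}}=\pi_\alpha(x)^s/(x-\alpha_j)$, each of degree $sr-1$. So $U$ is a deep hole if and only if $[x^{sr-2}](f_U\bmod M_{\boldsymbol{\mu}^{(j)}})\neq 0$ for every $j\in[r]$, i.e.\ $L_{\boldsymbol{\mu}^{(j)}}(f_U)\neq 0$ for all $j$, as in Corollary~\ref{Count Deep holes}.

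To compute this coefficient, write $f_U=ax^{sr-1}+bx^{sr-2}+g(x)$ with $g\in\mathbb{F}_q[x]_{\le sr-3}$. Since $\deg f_U\le sr-1=\deg M_{\boldsymbol{\mu}^{(j)}}$, reduction mod the monic polynomial $M_{\boldsymbol{\mu}^{(j)}}$ is the single step
\[
f_U\bmod M_{\boldsymbol{\mu}^{(j)}}=f_U-aM_{\boldsymbol{\mu}^{(j)}}.
\]
Next we need the subleading coefficient of $M_{\boldsymbol{\mu}^{(j)}}$. We have $[x^{sr-1}]\pi_\alpha^s=-s\sum_{\ell}\alpha_\ell$, and dividing by $(x-\alpha_j)$ adds back $\alpha_j$, so
\[
[x^{sr-2}]M_{\boldsymbol{\mu}^{(j)}}=\alpha_j-s\sum_{\ell=1}^r\alpha_\ell.
\]
Hence $L_{\boldsymbol{\mu}^{(j)}}(f_U)=b-a\bigl(\alpha_j-s\sum_\ell\alpha_\ell\bigr)$, and the deep-hole condition becomes $a(\alpha_j-s\sum_\ell\alpha_\ell)\neq b$ for all $j\in[r]$. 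This is exactly the stated description; the case $a=0$ is included and gives the condition $b\neq 0$.

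For the count we use Lemma~\ref{iso}, which identifies $U$ with $f_U\in\mathbb{F}_q[x]_{\le sr-1}$, so we count triples $(a,b,g)$.
\begin{itemize}
\item If $a=0$, the condition forces $b\neq 0$, giving $q-1$ choices for $b$.
\item If $a\neq0$, the forbidden values $a(\alpha_j-s\sum_\ell\alpha_\ell)$ are pairwise distinct because the $\alpha_j$ are distinct, so there are $q-r$ choices for $b$. Over all $a$ this gives $(q-1)(q-r)$.
\end{itemize}
The total number of admissible $(a,b)$ is $(q-1)+(q-1)(q-r)=(q-1)(q-r+1)$. Multiplying by the $q^{sr-2}$ choices for $g$ gives $(q-1)(q-r+1)q^{sr-2}$.

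The only step needing care is the subleading coefficient of $M_{\boldsymbol{\mu}^{(j)}}$, including the factor $s$ reduced mod $p$. The rest is routine.
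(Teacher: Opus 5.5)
Your proposal is correct and follows essentially the same route as the paper. You reduce via Theorem~\ref{Deep holes} to the $r$ moduli $M_{\boldsymbol{\mu}^{(j)}}=\pi_\alpha^s/(x-\alpha_j)$, compute $f_U\bmod M_{\boldsymbol{\mu}^{(j)}}=f_U-aM_{\boldsymbol{\mu}^{(j)}}$ from the subleading coefficient $\alpha_j-s\sum_\ell\alpha_\ell$, and count the pairs $(a,b)$ by splitting into $a=0$ and $a\neq 0$; the only cosmetic difference is that you get that coefficient by dividing $\pi_\alpha^s$ by $(x-\alpha_j)$, whereas the paper expands the product directly.
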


{\it Proof.} By Theorem~\ref{Deep holes}, $U\in\operatorname{DH}(\operatorname{HRS}_{sr-2}(\alpha,V))$ if and only if $\deg(f_{U}\bmod M_{\boldsymbol{\mu}})=sr-2$ for all $ \boldsymbol{\mu}\in\mathcal{M}_{sr-2}.$ By the definition of $\mathcal{M}_{sr-2}$, we have $\mathcal{M}_{sr-2}=\{\boldsymbol{\mu}^{(j)}\mid j\in [r]\}$ where $\boldsymbol{\mu}^{(j)}$ denotes the vector whose $j$-th entry is $s-1$ and all other entries are $s$. Hence $M_{\boldsymbol{\mu}^{(j)}}(x)=(x-\alpha_{j})^{s-1}\prod_{\ell\neq j}(x-\alpha_{\ell})^{s}$ with $[x^{sr-2}]M_{\boldsymbol{\mu}^{(j)}}(x)=-((s-1)\alpha_{j}+s\sum_{\ell\neq j}\alpha_{\ell})=\alpha_{j}-s\sum_{\ell=1}^{r}\alpha_{\ell}.$ Now we may write $f_{U}(x)=ax^{sr-1}+bx^{sr-2}+g(x)$ for some $g(x)\in\mathbb{F}_{q}[x]_{\leq sr-3}$. This implies $f_{U}(x)\bmod M_{\boldsymbol{\mu}^{(j)}}(x)=f_{U}(x)-aM_{\boldsymbol{\mu}^{(j)}}(x)$ with $[x^{sr-2}](f_{U}(x)\bmod M_{\boldsymbol{\mu}^{(j)}}(x))=b-a(\alpha_{j}-s\sum_{\ell=1}^{r}\alpha_{\ell}).$ Thus \begin{align*}
        \operatorname{DH}(\operatorname{HRS}_{sr-2}(\alpha,V))=\{&U\in\mathbb{F}_{q}^{s\times r}\mid f_{U}(x)=ax^{sr-1}+bx^{sr-2}+g(x) \ \text{with} \\  
        &  g(x)\in\mathbb{F}_{q}[x]_{\leq sr-3} \ \text{and} \ a(\alpha_{j}-s\sum_{\ell=1}^{r}\alpha_{\ell})\neq b\ \text{for all} \ j\in [r]\}.
    \end{align*}
    Thus there are $q^{sr-2}$ choices for $g(x)\in\mathbb{F}_{q}[x]_{\leq sr-3}$.  If $a=0,$ then there are $q-1$ choices for $b\in\mathbb{F}_{q}^{*}$. If $a\neq 0$, then there are $q-r$ choices for $b$ since the values $ a(\alpha_{j}-s\sum_{\ell=1}^{r}\alpha_{\ell})$ for $j\in[r]$ are pairwise distinct. Thus there are $(q-1)+(q-1)(q-r)=(q-1)(q-r+1)$ choices for $(a,b)$ and $$\#\operatorname{DH}(\operatorname{HRS}_{sr-2}(\alpha,V))=(q-1)(q-r+1)q^{sr-2}.$$ \qed

    \begin{Prop}
        If $k=sr-3$, then \begin{align*}
\operatorname{DH}\bigl(
    \operatorname{HRS}_{sr-3}(\alpha,V)
\bigr)=
\left\{
    U\in\mathbb{F}_{q}^{s\times r}
    \;\middle|\;
    \begin{array}{@{}l@{}}
        f_U(x)
        =ax^{sr-1}+bx^{sr-2}+cx^{sr-3}+g(x),
        \\[3pt]
        g\in\mathbb{F}_{q}[x]_{\leq sr-4},
        \\[6pt]
        c+Tb+(T^{2}-E)a-u(b+Ta)+va\neq 0,
        \\[3pt]
        \qquad
        \text{for all }(u,v)\in\mathcal{P}_{\alpha}^{(s)}
    \end{array}
\right\},
\end{align*}
    where $T=s\sum_{j=1}^{r}\alpha_{j}$, $E=[x^{sr-2}]\pi_{\alpha}(x)^{s}$ and
    \[
    \mathcal P_\alpha^{(s)}
    =
    \begin{cases}
    \left\{
    (\alpha_i+\alpha_j,\alpha_i\alpha_j)
    :
    1\le i<j\le r
    \right\}, & s=1;\\
    \left\{
    (\alpha_i+\alpha_j,\alpha_i\alpha_j)
    :
    1\le i\le j\le r
    \right\}, & s\ge2.
    \end{cases}
    \]
    and
    \[
    \#\operatorname{DH}(\operatorname{HRS}_{sr-3}(\alpha,V))
    =
    q^{sr-3}(q-1)
    \left(
    L_0(\mathcal P_\alpha^{(s)})+1
    \right),
    \]
    where \(L_0(\mathcal P_\alpha^{(s)})\) is the number of affine lines in
    \(\mathbb F_q^2\) avoiding \(\mathcal P_\alpha^{(s)}\). Moreover,
    \[
    L_0(\mathcal P_\alpha^{(s)})
    =
    \sum_{\lambda\in\mathbb F_q}
    \prod_{(u,v)\in\mathcal P_\alpha^{(s)}}
    (u-\lambda)^{q-1} +\sum_{\beta,\gamma\in\mathbb F_q}
    \prod_{(u,v)\in\mathcal P_\alpha^{(s)}}
    (v-\beta u-\gamma)^{q-1}.
    \]
    \end{Prop}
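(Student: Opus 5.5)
The plan is to apply Theorem~\ref{Deep holes} with $k=sr-3$, so that everything reduces to computing one coefficient of a remainder. First I would list $\mathcal M_{sr-3}$. Since $\sum_j\mu_j=sr-2$ and $\mu_j\le s$, the vector $\boldsymbol\mu$ falls short of $(s,\dots,s)$ by a total of $2$. This leaves two possibilities.
\begin{itemize}
\item Two distinct coordinates $i<j$ equal $s-1$ and all others equal $s$. Then $M_{\boldsymbol\mu}=\pi_\alpha^s/((x-\alpha_i)(x-\alpha_j))$.
\item One coordinate $i$ equals $s-2$ and all others equal $s$. This is possible only when $s\ge 2$, and then $M_{\boldsymbol\mu}=\pi_\alpha^s/(x-\alpha_i)^2$.
\end{itemize}
In both cases $M_{\boldsymbol\mu}=\pi_\alpha^s/(x^2-ux+v)$ with $(u,v)=(\alpha_i+\alpha_j,\alpha_i\alpha_j)$, where $i<j$ for $s=1$ and $i\le j$ for $s\ge2$. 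So $\boldsymbol\mu\mapsto(u,v)$ is a bijection onto $\mathcal P_\alpha^{(s)}$. It is injective because the roots $\{\alpha_i,\alpha_j\}$ are recovered from $(u,v)$.

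Next comes the coefficient computation, which I expect to be the only delicate step. Write $M_{\boldsymbol\mu}=x^{sr-2}+m_1x^{sr-3}+m_2x^{sr-4}+\cdots$ and $\pi_\alpha^s=x^{sr}-Tx^{sr-1}+Ex^{sr-2}+\cdots$. Comparing coefficients in $\pi_\alpha^s=M_{\boldsymbol\mu}(x^2-ux+v)$ gives
\[
m_1=u-T,\qquad m_2=E+um_1-v .
\]
For $f_U=ax^{sr-1}+bx^{sr-2}+cx^{sr-3}+g$, the remainder modulo $M_{\boldsymbol\mu}$ is $f_U-(ax+b-am_1)M_{\boldsymbol\mu}$, because $M_{\boldsymbol\mu}$ is monic of degree $sr-2$. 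Its $x^{sr-3}$-coefficient is
\[
c-am_2-(b-am_1)m_1 .
\]
Substituting $m_1$ and $m_2$ and using $m_1(m_1-u)=T^2-uT$, this should simplify to
\[
c+Tb+(T^2-E)a-u(b+Ta)+va .
\]
Theorem~\ref{Deep holes} (equivalently Corollary~\ref{Count Deep holes}, since $\deg M_{\boldsymbol\mu}=k+1$) then gives the stated description of $\operatorname{DH}(\operatorname{HRS}_{sr-3}(\alpha,V))$. I will double-check the sign conventions here, since this is where an error is most likely.

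For the count, I would change variables by
\[
A=a,\qquad B=b+Ta,\qquad C=c+Tb+(T^2-E)a .
\]
This is a unitriangular, hence invertible, linear change of variables on $\mathbb F_q^3$. The condition becomes $C-uB+vA\ne0$ for all $(u,v)\in\mathcal P_\alpha^{(s)}$, and the $q^{sr-3}$ choices of $g$ factor out. I then split the admissible triples $(A,B,C)$ into two kinds.
\begin{itemize}
\item $A=B=0$: the condition holds exactly when $C\ne0$, which gives $q-1$ triples. This is the ``$+1$'' in the formula.
\item $(A,B)\ne(0,0)$: the equation $C-uB+vA=0$ defines an affine line in the $(u,v)$-plane. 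Every affine line arises from exactly $q-1$ proportional triples, and the condition says precisely that this line avoids $\mathcal P_\alpha^{(s)}$. This contributes $(q-1)L_0(\mathcal P_\alpha^{(s)})$.
\end{itemize}
Adding the two contributions and multiplying by $q^{sr-3}$ yields $\#\operatorname{DH}=q^{sr-3}(q-1)\bigl(L_0(\mathcal P_\alpha^{(s)})+1\bigr)$.

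Finally, I would prove the formula for $L_0$ by splitting the affine lines into vertical lines $u=\lambda$ and non-vertical lines $v=\beta u+\gamma$. Each such line is listed exactly once. The indicator that a line avoids $\mathcal P_\alpha^{(s)}$ is the product over $\mathcal P_\alpha^{(s)}$ of $(u-\lambda)^{q-1}$, respectively $(v-\beta u-\gamma)^{q-1}$. This works because $y^{q-1}$ equals $1$ for $y\ne0$ and $0$ for $y=0$.
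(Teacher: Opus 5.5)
Your proposal is correct and follows essentially the same route as the paper: you reduce via Theorem~\ref{Deep holes} to the $x^{sr-3}$-coefficient of $f_U-(ax+b-a(u-T))\,\pi_\alpha^s/(x^2-ux+v)$, using $m_1=u-T$ and $m_2=E+u^2-uT-v$. You then count via the unitriangular change of variables $(A,B,C)$, splitting into the case $(A,B)=(0,0)$ and the affine lines avoiding $\mathcal P_\alpha^{(s)}$. Your explicit description of $\mathcal M_{sr-3}$ (two deficits of $1$, or one deficit of $2$ when $s\ge2$) fills in a step the paper only asserts.
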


{\it Proof.} By Theorem~\ref{Deep holes}, $U\in\operatorname{DH}(\operatorname{HRS}_{sr-3}(\alpha,V))$ if and only if $\deg(f_{U}\bmod M_{\boldsymbol{\mu}})=sr-3$ for all $\boldsymbol{\mu}\in\mathcal{M}_{sr-3}.$  Then $$\{M_{\boldsymbol{\mu}}(x)\mid \boldsymbol{\mu}\in\mathcal{M}_{sr-3}\}=\left\{\frac{\pi_{\alpha}(x)^{s}}{Q(x)}\ \middle| \ Q(x)=x^{2}-ux+v \ \text{with}\ (u,v)\in\mathcal{P}_{\alpha}^{(s)}\right\}.$$
    By comparison of the coefficients of $x^{sr-1}$ and $x^{sr-2}$ in $\pi_{\alpha}(x)^{s}=(x^{2}-ux+v)\frac{\pi_{\alpha}(x)^{s}}{Q(x)}$, we have that  $$[x^{sr-3}](\frac{\pi_{\alpha}(x)^{s}}{Q(x)})=u-T\quad \text{and} \quad [x^{sr-4}](\frac{\pi_{\alpha}(x)^{s}}{Q(x)})=E+u^{2}-uT-v.$$  If $f_{U}(x)=ax^{sr-1}+bx^{sr-2}+cx^{sr-3}+g(x)$ for some $g(x)\in\mathbb{F}_{q}[x]_{\leq sr-4}$, then $f_{U}(x)\bmod \frac{\pi_{\alpha}(x)^{s}}{Q(x)}=f_{U}(x)-(ax+b-a(u-T))\frac{\pi_{\alpha}(x)^{s}}{Q(x)}$. This implies  $[x^{sr-3}](f_{U}(x)\bmod \frac{\pi_{\alpha}(x)^{s}}{Q(x)})=c+Tb+(T^{2}-E)a-u(b+Ta)+va$. Thus\begin{align*}
\operatorname{DH}\bigl(
    \operatorname{HRS}_{sr-3}(\alpha,V)
\bigr)=
\left\{
    U\in\mathbb{F}_{q}^{s\times r}
    \;\middle|\;
    \begin{array}{@{}l@{}}
        f_U(x)
        =ax^{sr-1}+bx^{sr-2}+cx^{sr-3}+g(x),
        \\[3pt]
        g\in\mathbb{F}_{q}[x]_{\leq sr-4},
        \\[6pt]
        c+Tb+(T^{2}-E)a-u(b+Ta)+va\neq 0,
        \\[3pt]
        \qquad
        \text{for all }(u,v)\in\mathcal{P}_{\alpha}^{(s)}
    \end{array}
\right\}.
\end{align*}
    There are $q^{sr-3}$ choices for $g\in\mathbb{F}_{q}[x]_{\leq sr-4}$. Set $A=a,B=b+Ta$ and $C=c+Tb+(T^{2}-E)a$. Note that the map $\varphi:\mathbb{F}_{q}^{3}\rightarrow\mathbb{F}_{q}^{3}$  defined by $(a,b,c)\mapsto (A,B,C)$ is an $\mathbb{F}_{q}$-linear  isomorphism. If $(A,B)=(0,0)$, then there are $q-1$ choices for $C\in\mathbb{F}_{q}^{*}.$ If $(A,B)\neq (0,0),$ then the affine line $C-XB+YA=0$  avoids $\mathcal{P}_{\alpha}^{(s)}.$ Conversely, every affine line avoiding $\mathcal{P}_{\alpha}^{(s)}$ is represented by exactly $q-1$ nonzero scalar multiples of such triples $(A,B,C).$ Therefore there are $(q-1)(L_{0}(\mathcal{P}_{\alpha}^{(s)})+1)$ choices for$(A,B,C).$ Hence
    \[
    \#\operatorname{DH}(\operatorname{HRS}_{sr-3}(\alpha,V))
    =
    q^{sr-3}(q-1)
    \left(
    L_0(\mathcal P_\alpha^{(s)})+1
    \right).
    \]
 By the definition of $L_0(\mathcal P_\alpha^{(s)})$, we have \begin{align*}
L_0\bigl(\mathcal P_\alpha^{(s)}\bigr)
&=
\#\left\{
    \lambda\in\mathbb{F}_{q}
    \;\middle|\;
    \{(\lambda,v):v\in\mathbb{F}_{q}\}
    \cap\mathcal{P}_{\alpha}^{(s)}=\emptyset
\right\}
\\[3pt]
&\quad {}+
\#\left\{
    (\beta,\gamma)\in\mathbb{F}_{q}^{2}
    \;\middle|\;
    \{(u,\beta u+\gamma):u\in\mathbb{F}_{q}\}
    \cap\mathcal{P}_{\alpha}^{(s)}=\emptyset
\right\}
\\[6pt]
&=
\sum_{\lambda\in\mathbb{F}_{q}}
\prod_{(u,v)\in\mathcal{P}_{\alpha}^{(s)}}
(u-\lambda)^{q-1}+
\sum_{\beta,\gamma\in\mathbb{F}_{q}}
\prod_{(u,v)\in\mathcal{P}_{\alpha}^{(s)}}
(v-\beta u-\gamma)^{q-1}.
\end{align*}
 since
 \begin{align*}
     \prod_{(u,v)\in\mathcal P_\alpha^{(s)}}
    (u-\lambda)^{q-1}=\begin{cases}
        1, & X=\lambda\ \text{avoids}\ \mathcal P_\alpha^{(s)} \\
        0, &\text{otherwise}
    \end{cases}
 \end{align*}    
    and
     \begin{align*}
    \prod_{(u,v)\in\mathcal P_\alpha^{(s)}}
    (v-\beta u-\gamma)^{q-1}=\begin{cases}
        1, &Y=\beta X+\gamma \ \text{avoids}\ \mathcal P_\alpha^{(s)}\\
        0, &\text{otherwise}
    \end{cases}
 \end{align*}
where each product is regarded as a $0$-$1$ indicator in $\mathbb{F}_q$,
and its value is identified with the corresponding integer $0$ or $1$
before summation.
 \qed

\begin{Rem}\label{full-field-codimension-three}
Suppose that $\{\alpha_1,\dots,\alpha_r\}=\mathbb F_q$,
$s\geq 2$, and $k=sq-3$.
Then
\[
\mathcal P_\alpha^{(s)}
=\{(x+y,xy)\mid x,y\in\mathbb F_q\}
\]
meets every affine line in $\mathbb F_q^2$.
Indeed, each vertical line $u=\lambda$ contains
$(\lambda,0)\in\mathcal P_\alpha^{(s)}$.
For a nonvertical line $v=\beta u+\gamma$, taking
$x=\beta+1$ and $y=\beta+\beta^2+\gamma$ gives
$xy=\beta(x+y)+\gamma$, so that $(x+y,xy)$ lies on the line.
Thus $L_0(\mathcal P_\alpha^{(s)})=0$, and the preceding
proposition yields
\[
\#\operatorname{DH}(\operatorname{HRS}_k(\alpha,V))
=(q-1)q^k.
\]
By Theorem~\ref{When Ev(f) deep hole}\textup{(1)} and
Lemma~\ref{iso}, the polynomials of degree $k$ already
give exactly $(q-1)q^k$ distinct NRT deep holes.
Consequently, these account for all NRT deep holes.
For odd $q$, this verifies
Conjecture~\ref{Generalized Cheng-Murray}
in the case $s\geq 2$ and $k=sq-3$;
the conclusion also holds for even $q$.
\end{Rem}

\section*{Acknowledgments}
This work was supported by the Interdisciplinary Project
of Capital Normal University (Grant No. 2026JCYB02).

\section*{Data availability}
The Magma source code and computational data supporting the
findings of this study are available on Zenodo at
\url{https://doi.org/10.5281/zenodo.23279452}.
The deposited materials include the Magma source code and
an Excel file containing the computational results.


\end{document}